\theoremstyle{plain}
\newtheorem{theorem}{Theorem}[section]
\newtheorem{lemma}[theorem]{Lemma}
\newtheorem{proposition}[theorem]{Proposition}
\newtheorem{corollary}[theorem]{Corollary}
\theoremstyle{remark}
\newtheorem{remark}[theorem]{Remark}
\theoremstyle{definition}
\newtheorem{assumption}{Assumption}
\newtheorem{definition}[theorem]{Definition}
\newcommand{\be}{\begin{equation}}
\newcommand{\ee}{\end{equation}}
\newcommand{\ZZ}{\mathbb{Z}}
\newcommand{\RR}{\mathbb{R}}
\newcommand{\CC}{\mathbb{C}}
\newcommand{\NN}{\mathbb{N}}
\newcommand{\EE}{\mathbb{E}}
\newcommand{\DD}{\mathbb{D}}
\newcommand{\cL}{\mathcal{L}}
\newcommand{\cN}{\mathcal{N}}
\newcommand{\cA}{\mathcal{A}}
\newcommand{\Pro}{\ensuremath{P}}
\newcommand{\drm}{\ensuremath{\mathrm{d}}}
\newcommand{\abs}[1]{\left\vert#1\right\vert}
\newcommand{\sprod}[2]{\left\langle#1,#2 \right \rangle}
\newcommand{\PP}{\mathbb{P}}
\newcommand{\euler}{\mathrm{e}}
\DeclareMathOperator{\supp}{\operatorname{supp}}
\DeclareMathOperator{\im}{Im}
\DeclareMathOperator{\Tr}{Tr}
\DeclareMathOperator{\dist}{dist}
\DeclareMathOperator{\diam}{diam}
\DeclareMathOperator{\rank}{rank}
\renewcommand{\epsilon}{\varepsilon}
\renewcommand{\i}{\ensuremath{{\mathrm{i}}}}
\newcommand{\BIGOP}[1]{\mathop{\mathchoice%
{\raise-0.22em\hbox{\huge $#1$}}%
{\raise-0.05em\hbox{\Large $#1$}}{\hbox{\large $#1$}}{#1}}}
\newcommand{\BIGboxplus}{\mathop{\mathchoice%
{\raise-0.35em\hbox{\huge $\boxplus$}}%
{\raise-0.15em\hbox{\Large $\boxplus$}}{\hbox{\large $\boxplus$}}{\boxplus}}}
\newcommand{\bigtimes}{\BIGOP{\times}}
\newcommand{\hm}[1]{\leavevmode{\marginpar{\tiny%
$\hbox to 0mm{\hspace*{-0.5mm}$\leftarrow$\hss}%
\vcenter{\vrule depth 0.1mm height 0.1mm width \the\marginparwidth}%
\hbox to
0mm{\hss$\rightarrow$\hspace*{-0.5mm}}$\\\relax\raggedright #1}}}
\begin{document}
%
%
%
%
%
%

\title[Discrete alloy-type models]
{Discrete Schr\"odinger operators with random alloy-type potential}
\author[A. Elgart]{Alexander Elgart}
\address{448 Department of Mathematics, McBryde Hall, Virginia Tech., Blacksburg, VA, 24061, USA}
\urladdr{http://www.math.vt.edu/people/aelgart/}
\author[H. Kr\"uger]{Helge Kr\"uger}
\address{Caltech Department of Mathematics, Mailcode 253-37, 1200 E California Blvd,
Pasadena CA 91125, USA}
\urladdr{http://www.its.caltech.edu/~helge/}
\author[M. Tautenhahn]{Martin Tautenhahn}
\address{Technische Universit\"a{}t Chemnitz \\ Fakult\"a{}t f\"u{}r Mathematik \\ D-09107 Chemnitz \\ Germany}
\urladdr{http://www-user.tu-chemnitz.de/~mtau/}
\author[I. Veseli\'c]{Ivan Veseli\'c}
\address{Technische Universit\"a{}t Chemnitz \\ Fakult\"a{}t f\"u{}r Mathematik \\ D-09107 Chemnitz \\ Germany}
 \urladdr{http://www.tu-chemnitz.de/mathematik/stochastik/}
\begin{abstract}
We review recent results on localization for discrete alloy-type models based on 
the multiscale analysis and the fractional moment method, respectively. 
The discrete alloy-type model is a family of Schr\"odinger operators $H_\omega = - \Delta + V_\omega$ on $\ell^2 (\ZZ^d)$ where $\Delta$ is the discrete Laplacian and $V_\omega$ 
the multiplication by the function $V_\omega (x) = \sum_{k \in \ZZ^d} \omega_k u(x-k)$. 
Here $\omega_k$, $k \in \ZZ^d$,  are i.i.d. random variables and $u \in \ell^1 (\ZZ^d ; \RR)$ is a so-called single-site potential. 
Since $u$ may change sign, certain properties of $H_\omega$ depend in a non-monotone way on the random parameters $\omega_k$.
This requires new methods at certain stages of the localization proof. 
\end{abstract}

%
\subjclass[2000]{82B44, 60H25, 35J10}
 \keywords{multiscale analysis, fractional moment method, Cartan's Theorem, localization, discrete alloy-type model, non-monotone, sign-indefinite, single-site potential}
\maketitle
\tableofcontents
%
%
%
%
%
\section{Introduction}
The paradigmatic model for the study of localization properties of
quantum states  of single electrons  in disordered solids is the Anderson Hamiltonian on the lattice $\ZZ^d$.
It consists of the sum of the finite difference Laplacian and a multiplication operator by a sequence of independent identically distributed random variables.
There are two independent (though related in sprit) frequently used methods
to prove rigorously in appropriate energy and disorder regimes localization statements: the multiscale analysis and the 
fractional moment method. Both of them rely strongly on the independence property of the random variables.
If this property is removed much less is know. Specific cases of random fields with correlations which have been studied so far 
include the Gaussian field (cf.  Section 4 of \cite{DreifusK-91}  and Section 4 of \cite{AizenmanM-93}) and 
potentials whose distribution is given by a completely analytic Gibbs measure
(cf.  Section 5 of \cite{DreifusK-91}  and \cite{DobrushinS-87}).
\par
A particularly problematic aspect of dependence are negative correlations 
between values of the potential at different lattice sites.
This is intuitively clear when having in mind the role played by the local variation of eigenvalues 
in localization proofs.
An example of a random potential which exhibits negative correlations is an Anderson or alloy-type potential with 
single-site potentials of changing sign: If one increases a single coupling constant there are regions in space where the potential
increases and others where it decreases.
Such models have been studied in a number of works devoted to the continuum setting, i.e.{} for operators on $\RR^d$. 
In this paper we want to summarize and discuss a number of results which have been obtained more recently for their lattice counterparts.
\par
The study of Anderson-type models with sign-indefinite single-site potentials can be seen 
as part of the interest attracted recently by several classes of random operators with a non-monotone dependence on the 
random variables. They exhibit a variety  of intriguing features not encountered in the standard alloy-type model:
Already when considering the very basic features of the spectrum as a subset of the real line,
one sees that it makes an essential difference whether the operator depends in a monotone or non-monotone way on the random variables entering the model.
For operators with monotone parameter dependence the spectral bottom of the operator family corresponds
to the configuration where all random variables  are set to one of the extremal values. Furthermore, in the monotone situation the band structure of the spectrum can be analyzed 	
using rather basic bracketing arguments,
see e.g.{} \cite{KirschSS-98a}. It is consistent with these elementary examples of the advantages of monotonicity
that there is a rather good understanding of typical energy/disorder regimes where  monotone models exhibit
localization of waves, see the monographs and survey articles
\cite{Stollmann-01,KirschM-07,Veselic-07b,Kirsch-08}.
\par
If the dependence of the operator, respectively the quadratic form, on the  random variables  is not monotone,
already the identification of the spectral minimum is a highly non-trivial question, see e.g.{}
\cite{BakerLS-08,KloppN-09a}.
For more intricate properties, like the regularity of the density of states
or the analysis of spectral fluctuation boundaries, the difference between monotone
and non-monotone models is even more striking.
\par
Nevertheless there has been a continuous effort to bring the understanding of models
with non-monotone dependence on the randomness to a similar level as the one for monotone models.
This  includes alloy-type Schr\"odinger operators with single-site potentials of changing sign,
see e.g.{} \cite{Klopp-95a,Stolz-00,Klopp-02c,Veselic-02a,HislopK-02,KostrykinV-06, KloppN-09a}, and their lattice counterparts 
cf.~e.g.{} \cite{Veselic-10a,ElgartTV-10,Veselic-10b,TautenhahnV-10,ElgartTV-11,Krueger}.
Electromagnetic Schr\"odinger operators with random magnetic field
 \cite{Ueki-94,Ueki-00,HislopK-02,KloppNNN-03,Ueki-08, Bourgain-09,ErdoesH-a,ErdoesH-c,ErdoesH-b}, 
Laplace-Beltrami operators with random metrics \cite{LenzPV-04,LenzPPV-08,LenzPPV-09},
as well as the random displacement model, cf.{} e.g.{} \cite{Klopp-93,BakerLS-08,KloppLNS},
are other examples with non-monotonous parameter dependence.
\par 
Another relevant model without obvious monotonicity is a random potential 
given by a Gaussian stochastic field with sign-changing covariance function, c.f.{} \cite{HupferLMW-01a,Ueki-04,Veselic-11}.
\par
The methods which have been developed for the discrete alloy-type potentials with sign-indefinite single-site potentials
presented here could be applied to other non-monotone models, as well.
In fact the results of \cite{Krueger} apply to a much larger class of 
lattice Schr\"odinger operator 
with non-monotone, correlated random potential. 
%
%
%
%
%
%
\section{Discrete Anderson models with general alloy-type potential}
\label{sec:model}
To define the random operators we will be looking at, we first introduce the corresponding Hilbert and probability spaces.
Let $d \geq 1$. For $x \in \ZZ^d$ we denote by $\lvert x \rvert_1 = \sum_{i=1}^d \lvert x_i \rvert$ and $\lvert x \rvert_\infty = \max\{\lvert x_1 \rvert,\ldots, \lvert x_d \rvert\}$
the $\ell^1$ and $\ell^\infty$ norms on $\ZZ^d$. 
For $\Gamma \subset \ZZ^d$ we introduce the Hilbert space $\ell^2 (\Gamma) = \{\psi : \Gamma \to \CC : \sum_{k \in \Gamma} \lvert \psi (k) \rvert^2 < \infty\}$ 
with inner product $\sprod{\phi}{\psi} = \sum_{k \in \Gamma} \overline{\phi(k)} \psi (k)$. 
\par
For each $\Gamma\subset \ZZ^d$ we introduce a probability space 
$(\Omega_\Gamma,\cA_\Gamma, \PP_\Gamma) $. Here $\Omega_\Gamma$ is the 
product $\Omega_\Gamma:= \bigtimes_{k \in \Gamma} \RR, $ $\cA_\Gamma$ is the 
associated product sigma algebra generated by cylinder sets, 
and $\PP_\Gamma(\drm \omega) := \prod_{k \in \Gamma} \mu(\drm \omega_k)$ the product measure,
with $\mu$ a probability measure on $\RR$ with bounded support.
The mathe\-matical expectation  with respect to $\PP_\Gamma$ is denoted by 
$\EE_\Gamma$. Note that the projections $\Omega \ni \omega =\{\omega_k\}_{k\in \Gamma} \mapsto \omega_j$, $j \in \Gamma$
give rise to a collection of independent identically distributed (i.i.d.) bounded real random variables.
If $\Gamma=\ZZ^d$ we will suppress the subscript $\Gamma$ in $\Omega_\Gamma, \PP_\Gamma$ and $\EE_\Gamma$.
\par
On $\ell^2 (\ZZ^d)$ we consider the discrete random Schr\"o{}dinger operator
\begin{equation} \label{eq:hamiltonian}
 H_\omega := -\Delta + \lambda V_\omega , \quad \omega \in \Omega , \quad \lambda > 0 .
\end{equation}
Here $\Delta, V_\omega : \ell^2 \left(\ZZ^d\right) \to \ell^2 \left(\ZZ^d\right)$ denote the discrete Laplace 
and  a random multiplication operator defined by
\begin{equation*}
\left(\Delta \psi \right) (x) := \sum_{\abs{e}_1 = 1} \psi (x+e) \quad \mbox{and} \quad
\left( V_\omega \psi \right) (x) :=  V_\omega (x) \psi (x) .
\end{equation*}
The parameter $\lambda$ models the strength of the disorder and $\omega$ denotes the randomness.
It enters the potential in the following way. Let the \emph{single-site potential} $u : \ZZ^d \to \RR$ be a function in $\ell^1 (\ZZ^d ; \RR)$. We assume that the random
potential $V_\omega$ has an \emph {alloy-type structure}, i.e.\ the potential value
\begin{equation*}
V_\omega (x) := \sum_{k \in \ZZ^d} \omega_k u (x-k)
\end{equation*}
at a lattice site $x \in \ZZ^d$ is a linear combination of the i.i.d. random
variables $\omega_k$, $k\in\ZZ^d$, with coefficients provided by the single-site
potential. We assume (without loss of generality) that $0 \in \supp u$. The family of operators $H_\omega$, $\omega \in \Omega$, in Eq. \eqref{eq:hamiltonian} is called \emph{discrete alloy-type model}.
\par
Notice that the single-site potential $u$ may change its sign. As a consequence the quadratic form associated to $H_\omega$ does not necessarily depend in a monotone way on the random parameters $\omega_k$, $k \in \ZZ^d$. This is in sharp contrast to the properties of the \emph{standard Anderson model} which corresponds to the choice of the single-site potential $u =\delta_0$. Here 
\[
 \delta_k(j)  =\begin{cases}
                1 \text{ if } k=j,\\
                0 \text{ else}, 
               \end{cases}
\]
denotes the Dirac function.
%
%
%
%
%
%
\section{Localization properties} \label{sec:loc_properties}
We present several properties related to localization. They concern on the one hand several mathematical signatures of localization, 
and on the other estimates on the average of resolvents and number of eigenvalues in intervals of finite volume systems, 
which are instrumental in the arguments leading to localization. 
They are well established for the standard Anderson model on $\ZZ^d$, 
see e.g.\ \cite{Stollmann-01,GerminetK-04,Kirsch-08} and the references therein. 
\begin{definition}[Dynamical localization]
A selfadjoint operator $H$ on $\ell^2 (\ZZ^d)$ is said to 
exhibit \emph{dynamical localization} in the interval $I\subset \RR$, 
if for every $x \in \ZZ^d$ and $p \geq 1$ we have
\[
 \sup_{t \in \RR} \Bigl( \sum_{n \in \ZZ^d} \bigl(1+\lvert n \rvert_\infty\bigr)^p 
\bigl\lvert \langle \delta_n , \euler^{- \i t H} \chi_I(H)\delta_x \rangle \bigr\rvert^2 \Bigr) < \infty .
\]
\end{definition}
\begin{definition}[Spectral and exponential localization]
 Let $I \subset \RR$. A selfadjoint operator $H : \ell^2 (\ZZ^d) \to \ell^2 (\ZZ^d)$ is said to exhibit \emph{exponential localization in $I$}, 
if the spectrum of $H$ in $I$ is exclusively of pure point type, i.e.\ $\sigma_{\rm c} (H) \cap I = \emptyset$, 
and the eigenfunctions of $H$ corresponding to the eigenvalues in $I$ decay exponentially. If $I=\RR$, we say that
$H$ exhibits \emph{exponential localization}.
\end{definition}
A family of operators $(H_\omega)_\omega$ indexed by elements of a probability space $(\Omega, \PP)$ is said to exhibit {dynamical/exponential localization} in the interval $I\subset \RR$ if the corresponding property holds for $H_\omega$ for almost all $\omega \in \Omega$. 
\par
Several important properties of random Hamiltonians are defined in terms of 
restrictions to a finite system size. We review them next.
Let $\Gamma \subset \ZZ^d$. We define $\Pro_{\Gamma} : \ell^2 (\ZZ^d) \to \ell^2 (\Gamma)$ by
\[
 \Pro_{\Gamma} \psi := \sum_{k \in \Gamma} \psi (k) \delta_k ,
\]
where here $\delta_k$ is the Dirac function in $\ell^2 (\Gamma)$. The restricted operators $\Delta_\Gamma, V_{\omega,\Gamma}, H_{\omega,\Gamma}:\ell^2 (\Gamma) \to \ell^2 (\Gamma)$ are defined by
\[
\Delta_\Gamma := \Pro_\Gamma \Delta \Pro_\Gamma^\ast, \quad V_{\omega,\Gamma} := \Pro_\Gamma V_\omega \Pro_\Gamma^\ast, \quad 
 H_{\omega,\Gamma} := \Pro_\Gamma H_\omega \Pro_\Gamma^\ast = -\Delta_\Gamma + V_{\omega,\Gamma} .
\]
For $z \in \CC \setminus \sigma (H_{\omega,\Gamma})$ we define the corresponding \emph{resolvent} $G_{\omega,\Gamma} (z):= (H_{\omega,\Gamma} - z)^{-1}$ and the \emph{Green function}
\begin{equation*} \label{eq:greens}
G_{\omega,\Gamma} (z;x,y) := \sprod{\delta_x}{(H_{\omega,\Gamma} - z)^{-1}\delta_y}, \quad x,y \in \ZZ^d. 
\end{equation*}
If $\Gamma =\ZZ^d$ we drop the subscript $\Gamma$ in $H_{\omega,\Gamma}, G_{\omega,\Gamma} (z)$ and $G_{\omega,\Gamma} (z;x,y)$. If $\Lambda \subset \ZZ^d$ is finite, $\lvert \Lambda \rvert$ denotes the number of elements of $\Lambda$.
We will use the notation $\RR^+ := \{x \in \RR : x > 0\}$.
\begin{definition}[Decay of fractional moments of the Green function] \label{def:fmb}
There exist constants $s\in (0,1) $ and $A, \gamma \in \RR^+$ 
such that for all $\Gamma \subset \ZZ^d$, $z \in \CC \setminus \RR$ and $x,y \in \Gamma$ we have
\begin{equation*} 
\mathbb{E} \bigl\{\lvert G_{\omega , \Gamma} (z;x,y)\rvert^{s} \bigr\}\leq A \euler^{-\gamma|x-y|_\infty} .
\end{equation*}
\end{definition}
For $x \in \ZZ^d$ and $L>0$, we denote by $\Lambda_{L,x} = \{ k \in \ZZ^d : \lvert x-k  \rvert_\infty \leq L \}$ 
the cube of side length $2L+1$ centred at $x$. For the cube centered at zero we use the notation $\Lambda_L = \Lambda_{L,0}$. We also we write $H_{\omega,L}$ instead of $H_{\omega,\Lambda_{L}}$ and $G_{\omega , L} (z)$ and $G_{\omega , L} (z;x,y)$ instead of $G_{\omega , \Lambda_L} (z)$ and $G_{\omega , \Lambda_L} (z;x,y)$.
For $\Lambda \subset \ZZ^d$ we denote by $\partial^{\rm i} \Lambda = \{k \in \Lambda : \# \{j \in \Lambda : |k-j|_1 = 1\} < 2d\}$ 
the interior boundary of $\Lambda$ and by $\partial^{\rm o} \Lambda = \partial^{\rm i} \Lambda^{\rm c}$ the 
exterior boundary of $\Lambda$. Here $\Lambda^{\rm c} = \ZZ^d \setminus \Lambda$ denotes the complement of $\Lambda$. 
\begin{definition}[Wegner estimate]
There are constants $C_{\rm W},L_0 \in \RR^+$, $b\geq 1$, and a
function $f:\RR^+ \to \RR^+$ satisfying $\lim_{\epsilon \searrow 0} f(\epsilon) = 0$
such that we have for any $L\geq L_0$, $E \in \RR$ and $\epsilon\in (0,1)$
\begin{align} \nonumber
\PP \left \{    [E-\epsilon,E+\epsilon] \cap \sigma(H_{\omega,L})\neq \emptyset   \right\}
&\le
\EE \left \{\Tr \big [\chi_{[E-\epsilon,E+\epsilon]}(H_{\omega,L})\big]\right\}
\\ \label{eq:Wegner} 
&\le
C_{\rm W} f(\epsilon) (2L+1)^{bd} .
\end{align}
\end{definition}
In specific applications of a Wegner estimate (for example as an ingredient for the multiscale analysis) 
one needs a specific rate of decay on the function $f$ near zero. 
The original estimate of Wegner \cite{Wegner-81} corresponds to $f(\epsilon) = \epsilon$ 
and $b=1$ and implies the Lipschitz-continuity of the integrated density of states.
In many situations one can establish \eqref{eq:Wegner} with $b=1$ or $b=2$ and
$f(\epsilon)=\epsilon^a$ for $a \in (0,1)$. 
In certain situations variants of the Wegner estimate 
which do not control  only global property $d(E, \sigma(H_{\omega,L})\leq \epsilon$
but also specific coefficients of the resolvent, as well, need to be used, cf.\ Section \ref{sec:appcartan}
or \cite{Bourgain-09,Krueger}.

\begin{definition}
Let $m,L > 0$, $x \in \ZZ^d$ and $E \in
\RR$. A cube $\Lambda_{L,x}$ is called \emph{$(m,E)$-regular} (for a fixed
potential), if $E \not \in \sigma (H_{\Lambda_{L,x}})$ and
\[
 \sup_{w \in \partial^{\rm i} \Lambda_{L,x}} \lvert G_{\Lambda_{L,x}} (E ; x,w) \rvert
\leq \euler^{-m L} .
\]
Otherwise we say that $\Lambda_{L,x}$ is \emph{$(m , E)$-singular}.
\end{definition}
Rather than looking at the fractional moments of the Green function 
one can consider the probability of a box to be regular. 
The decay of these probabilities is closely related to the localization phenomenon.
For simplicity we define one variant of this decay property and restrict ourselves to the case of finitely supported single-site potentials $u$.
\begin{definition}[Probabilistic decay of Green's function]
Let $\Theta := \supp u$ be finite, $I\subset \RR$ be an interval and let $p>d$, $ L_0>1$, $\alpha \in (1,2p/d)$ and $m>0$.
Set $L_{k} =L_{k-1}^\alpha$, for $k \in \NN$. For any $k \in \NN_0$
\[
 \PP \{\forall \, E \in I \text{ either $\Lambda_{L_k,x}$ or $\Lambda_{L_k,y}$ is $(m,E)$-regular} \}\ge 1- L_k^{-2p}
\]
for any $x,y \in \ZZ^d$ with $\lvert x-y\rvert_\infty \geq 2L_k + \diam \Theta + 1$.
\end{definition}
 Here we denote for finite $\Gamma \subset \ZZ^d$ by $\diam \Gamma$ the diameter of $\Gamma$ 
with respect to the supremum norm, i.e.\ $\diam \Gamma = \sup_{x,y\in \Gamma} \lvert x-y \rvert_\infty$.
In the subsequent sections we describe which of these localization properties have been proven for the non-monotone model we are interested in.
\par
The most general result concerning (large disorder) localization for the discrete alloy-type model is \cite{Krueger}. Kr\"uger proves for exponentially decaying single-site potentials dynamical localization in the case of sufficiently large disorder. 
Indeed, this result applies for a class of models including the discrete alloy-type model with exponentially decaying singe-site potential as a special case. Notice that dynamical localization implies spectral localization via the RAGE-Theorem, see e.g.\ \cite{Stolz2010}, but not vice versa as examples in \cite{RioJLS1996} show. The proof of Kr\"uger's result uses the multiscale analysis and is discussed in Section \ref{sec:msa}. 
\par
There are also localization results not using the multiscale analysis but the fractional moment method. In \cite{ElgartTV-10} the authors prove exponential localization in the case of space dimension $d=1$, compactly supported single-site potentials and sufficiently large disorder. 
This result was extended in \cite{ElgartTV-11} to arbitrary space dimension assuming that the single-site potential has fixed sign at the boundary of its support, 
a property which can be assumed without loss of generality in $d=1$. This result is discussed in Section \ref{sec:loc_fmm}.
\par
In Section \ref{sec:averaging} we present certain estimates concerning averages of polynomials and resolvents which are fundamental for the results presented in Sections \ref{sec:loc_fmm} and \ref{sec:msa}.
\par
In Section \ref{sec:Wegner} we discuss certain results on Wegner estimates for the discrete alloy-type model \cite{Veselic-10b,PeyerimhoffTV}. 
With the help of such Wegner-estimates one can implement  a proof of localization via the multiscale analysis in the regime where an appropriate initial length scale estimate is available.
\par
In the following Section \ref{sec:spectrum} we show that the almost sure spectrum of the discrete alloy-type model is an interval.
%
%
%
%
%
%
%
%
\section{The spectrum}\label{sec:spectrum}
Before studying the properties of the spectral measure under the 
Lebesgue decomposition, one wants to understand 
basic features of the set $\Sigma\subset \RR$ 
which coincides with the spectrum of $H_\omega$ almost surely. 
This concerns in particular 
the infimum and supremum of the spectrum and internal spectral edges (if any).
For the standard Anderson model there is a nice formula for the spectrum:
\[
 \Sigma= [-2d, 2d] + \supp \mu
\]
where $[-2d, 2d]$ is the spectrum of the free Laplacian $\Delta$
and $\supp \mu$ the spectrum of the multiplication operator given by the random potential.
Related descriptions of $\Sigma$ for the (continuum) alloy-type model have been studied
in  \cite{KirschM-82b}. In particular, a description of $\Sigma$ 
in terms of admissible potentials was given. In many cases this class
consists of an appropriate family of periodic potentials.
Let us quote a specific result from \cite{KirschSS-98a}: If  $S:=\supp \mu$ is a bounded  interval
and the single-site potential $u$ is non negative, then 
\[
 \Sigma=\bigcup_{\kappa \in S} \sigma \left(-\Delta + \kappa \sum_{n\in \ZZ^d} u(\cdot-n)\right).
\]
The proof of this equality uses that $u$ has fixed sign and is thus not applicable in our case.
Our result about the set $\Sigma$ is 

\begin{theorem}
Let $H_{\omega}$  be a  discrete random Schr\"odinger operators as in \eqref{eq:hamiltonian}.
If $S:=\supp \mu$ is bounded and connected,  
then the spectrum of $H_{\omega}$ is almost surely an interval.
\end{theorem}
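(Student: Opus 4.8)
The plan is to identify the almost sure spectrum $\Sigma$ with the union of the spectra of all operators coming from configurations in the support $S:=\supp\mu$, and then to use that $S$ is bounded and connected — hence a compact interval, in particular convex — to deform any such configuration continuously to a constant one, for which the spectrum is an explicit interval. First I would recall that the family $(H_\omega)$ is $\ZZ^d$-ergodic: the shifts $\tau_n$ on $\Omega$, $(\tau_n\omega)_k=\omega_{k+n}$, act ergodically because the $\omega_k$ are i.i.d., and they are intertwined with the lattice translations $U_n$ on $\ell^2(\ZZ^d)$ via $U_nH_\omega U_n^\ast=H_{\tau_n\omega}$. Hence there is a deterministic closed set $\Sigma$ with $\sigma(H_\omega)=\Sigma$ for $\PP$-a.e.\ $\omega$, and $\Sigma$ is bounded since $\|H_\omega\|\le 2d+\lambda\|u\|_1\sup_{s\in S}|s|$. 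Writing $V_q(x):=\sum_k q_k u(x-k)$ and $H_q:=-\Delta+\lambda V_q$ for $q\in S^{\ZZ^d}$, I would then establish the hull description
\[
\Sigma=\bigcup_{q\in S^{\ZZ^d}}\sigma(H_q).
\]
Here ``$\supseteq$'' is trivial ($q=\omega$ works), while for ``$\subseteq$'' one shows that for a.e.\ $\omega$ the orbit $\{\tau_n\omega:n\in\ZZ^d\}$ is dense in $\supp\PP=S^{\ZZ^d}$ — a Borel--Cantelli argument in which independence is used decisively — that $q\mapsto H_q$ is continuous from $S^{\ZZ^d}$ with the product topology into bounded operators in the strong resolvent sense (dominated convergence, using $u\in\ell^1$ and boundedness of $S$, together with the uniform norm bound), and that the spectrum is lower semicontinuous along strong resolvent limits; since each $H_{\tau_n\omega}$ is unitarily equivalent to $H_\omega$, this yields $\sigma(H_q)\subseteq\Sigma$ for every $q\in S^{\ZZ^d}$.

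Next I would treat the constant configurations. For $\kappa\in S$ the configuration $\kappa\mathbf{1}$ produces the constant potential $V_{\kappa\mathbf{1}}\equiv\kappa\bar u$ with $\bar u:=\sum_{j\in\ZZ^d}u(j)$, so $\sigma(H_{\kappa\mathbf{1}})=[-2d+\lambda\kappa\bar u,\,2d+\lambda\kappa\bar u]$ using $\sigma(-\Delta)=[-2d,2d]$. Consequently
\[
I_0:=\bigcup_{\kappa\in S}\big[-2d+\lambda\kappa\bar u,\;2d+\lambda\kappa\bar u\big]\subseteq\Sigma ,
\]
and $I_0$ is a nondegenerate compact interval, being the continuous image of the connected set $S\times[-2d,2d]$ under $(\kappa,t)\mapsto t+\lambda\kappa\bar u$. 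This is the only point at which the connectedness of $S$ enters the spectral picture, and it is exactly where the usual periodic-potential bracketing description of $\Sigma$ — which requires $u$ of fixed sign — is bypassed.

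To conclude, suppose $\Sigma$ is not an interval. Being closed it then has a genuine gap: there are $\alpha<\beta$ with $\alpha,\beta\in\Sigma$ and $(\alpha,\beta)\cap\Sigma=\emptyset$. The connected set $I_0\subseteq\Sigma$ lies entirely on one side of the gap; assume $I_0\subseteq(-\infty,\alpha]$ (the case $I_0\subseteq[\beta,\infty)$ is symmetric, with the bottom of the spectrum replacing the top). Then $2d+\lambda\kappa\bar u=\max\sigma(H_{\kappa\mathbf{1}})\le\alpha$ for all $\kappa\in S$. By the hull description there is $q^\ast\in S^{\ZZ^d}$ with $\beta\in\sigma(H_{q^\ast})$. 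Since $S$ is a compact interval $[a,b]$, it is convex, so fixing $\kappa_0\in S$ the affine path $q^{(t)}:=(1-t)q^\ast+t\,\kappa_0\mathbf{1}$ stays in $S^{\ZZ^d}$, and $t\mapsto H_{q^{(t)}}$ is Lipschitz in operator norm because $\|V_{q^{(t)}}-V_{q^{(s)}}\|_\infty\le|t-s|\,(b-a)\,\|u\|_1$. Hence $M(t):=\max\sigma(H_{q^{(t)}})$ is continuous with $M(0)\ge\beta$ and $M(1)=2d+\lambda\kappa_0\bar u\le\alpha$, so $M(t^\ast)=(\alpha+\beta)/2$ for some $t^\ast\in[0,1]$ by the intermediate value theorem. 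But $M(t^\ast)\in\sigma(H_{q^{(t^\ast)}})\subseteq\Sigma$, contradicting $(\alpha,\beta)\cap\Sigma=\emptyset$. Therefore $\Sigma$ is connected, and being closed and bounded it is a compact interval.

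The main obstacle is the hull description of the first step, and within it the inclusion $\sigma(H_q)\subseteq\Sigma$ for every $q\in S^{\ZZ^d}$: one must keep in mind that $q\mapsto H_q$ is only strong-resolvent continuous, not norm continuous, in the product topology — altering one far-away coordinate of $q$ by a fixed amount changes $\|V_q\|_\infty$ by a fixed amount — so the strong-resolvent and lower-semicontinuity bookkeeping cannot be avoided, and the density of a.e.\ orbit in $S^{\ZZ^d}$ is precisely where the i.i.d.\ hypothesis is indispensable. Everything after this — the constant-potential computation, convexity of $S$, and the intermediate value argument — is routine. (If $S$ is a single point the potential is deterministic and the statement is immediate.)
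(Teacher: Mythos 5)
Your proof is correct and follows essentially the same strategy as the paper's: both start from the hull description of $\Sigma$, exploit the convexity of the compact interval $S$ to deform an arbitrary configuration $q^\ast$ continuously inside $S^{\ZZ^d}$ towards a reference configuration, and use norm continuity of $t\mapsto H_{q^{(t)}}$ together with the intermediate value theorem applied to $\max\sigma$ (and $\min\sigma$) to conclude connectedness. The only cosmetic differences are that the paper normalizes $0\in S$ and deforms by scaling $\omega\mapsto t\cdot\omega$ so that every $\Sigma_\omega$ visibly contains the common interval $[-2d,2d]$ (a direct connectedness argument), whereas you deform along the affine path $(1-t)q^\ast+t\,\kappa_0\mathbf{1}$ to the constant-potential anchor $I_0$ and argue by contradiction with a spectral gap.
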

\begin{proof}
Denote by $\Sigma$ the almost sure spectrum of $H_{\omega}$. It is well know that 
\[
 \Sigma= \overline{\bigcup_{\omega\in S^{\ZZ^d}}\sigma(H_\omega)}
\]
cf.\ the discussion in Section 6 of \cite{Kirsch-89a}. 
Now we assume $0 \in S$ without loss of generality. 
For $\omega \in S^{\ZZ^d}$ and $t \in [0,1]$
denote $t \cdot \omega = (t \omega_x)_{x \in \ZZ^d}$ the scaled configuration.
We then have that $\sigma(H_{0\cdot \omega}) = [-2d, 2d]$.
Fix now $\omega \in S^{\ZZ^d}$. We show that
$$
\Sigma_\omega = \bigcup_{t \in [0,1]} \sigma(H_{t \cdot \omega})
 $$
is an interval. For this purpose note that 
$\sigma (H_{0\cdot \omega}) = [-2d,2d]$ and that 
the spectral maximum and spectral minimum 
\[
 [0,1] \ni t \mapsto \max \sigma (H_{t \cdot \omega}), \quad 
 [0,1] \ni t \mapsto \min \sigma (H_{t \cdot \omega})
\]
are continuous functions of $t$. This follows by the min-max-principle. 
Thus 
$$
\Sigma_\omega = \Bigl[\min_{t \in [0,1]} \min \sigma (H_{t\cdot \omega}) , \max_{t \in [0,1]} \max \sigma (H_{t\cdot \omega})\Bigr] \supset [-2d,2d] .
$$
Choose now $\lambda \in \bigcup_{\omega\in S^{\ZZ^d}}\sigma(H_\omega)$. Then there exists $\tilde \omega \in S^{\ZZ^d}$ such that $\lambda \in \sigma (H_{\tilde \omega}) \subset \Sigma_{\tilde \omega}$. Since the latter set is an interval containing $[-2d,2d]$, it follows that $\Sigma$ is an interval.
\end{proof}
%
%
%
%
%
%
\section{Averaging of determinants and resolvents} \label{sec:averaging}
In the energy regime where localization holds, eigenvalues are sensitive to fluctuations of the random potential.
In particular, the mathematical expectation leads to a regularization of the finite volume eigenvalue counting function.
Likewise, (appropriate) averages of the resolvent enjoy boundedness properties which are impossible to hold for 
resolvent associated to individual realizations of the random potential.
In this section we discuss bounds of the type indicated above and in the following sections their relation to localization proofs. 
\par
We start with a well known weak $L^1$-bound formulated in Lemma~\ref{lemma:monotone} (see e.g.\ \cite[Proposition 3.1]{AizenmanENSS-06} for a more general result), which can be used to obtain bounds on certain averages of resolvents in the monotone case, i.e.\ where the single-site potential $u$ is non-negative. 
Recall, a densely defined operator $T$ on some Hilbert space $\mathcal{H}$ with inner product $\langle \cdot , \cdot \rangle_{\mathcal{H}}$ is called \emph{dissipative} if $\im \langle x,Tx \rangle_{\mathcal{H}} \geq 0$ for all $x \in D(T)$.

\begin{lemma} \label{lemma:monotone}
Let $A \in \CC^{n \times n}$ be a dissipative matrix, $V \in \RR^{n \times n}$ diagonal and strictly positive definite and $M_1 , M_2 \in \CC^{n \times n}$ be arbitrary matrices. Then there exists a constant $C_{\rm W}$ (independent of $A$, $V$, $M_1$, $M_2$ and $n$), such that
\[
\cL \bigl\{ r \in \RR : \lVert M_1 (A + r V)^{-1} M_2 \rVert_{\rm HS} > t \bigr\} \leq C_{\rm W} \lVert M_1 V^{-1/2} \rVert_{\rm HS} \lVert M_2 V^{-1/2} \rVert_{\rm HS} \frac{1}{t} .
\]
Here, $\cL$ denotes the Lebesgue-measure and $\lVert \cdot \rVert_{\rm HS}$ the Hilbert Schmidt norm.
\end{lemma}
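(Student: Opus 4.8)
The plan is to reduce the statement to a one-dimensional weak-$L^1$ estimate for resolvents of dissipative operators. First I would substitute $B := V^{-1/2} A V^{-1/2}$, which is again dissipative since $V^{-1/2}$ is real symmetric, and write
\[
M_1 (A + rV)^{-1} M_2 = M_1 V^{-1/2} (B + r)^{-1} V^{-1/2} M_2 = \widetilde{M_1} (B + r)^{-1} \widetilde{M_2},
\]
with $\widetilde{M_j} := M_j V^{-1/2}$. So it suffices to prove the bound in the normalized form $V = I$, with $\lVert \widetilde{M_1}\rVert_{\rm HS} \lVert \widetilde{M_2}\rVert_{\rm HS}/t$ on the right-hand side. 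Next I would bound the Hilbert–Schmidt norm of the product by $\lVert \widetilde{M_1}\rVert_{\rm HS} \lVert (B+r)^{-1}\rVert \lVert \widetilde{M_2}\rVert_{\rm HS}$, where $\lVert\cdot\rVert$ is the operator norm, so the whole problem collapses to showing that the sublevel set $\{r \in \RR : \lVert (B+r)^{-1}\rVert > s\}$ has Lebesgue measure at most $C_{\rm W}/s$, uniformly over dissipative $B$ and over $n$.

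To handle this scalar-type estimate I would diagonalize, or rather use that $\lVert (B+r)^{-1}\rVert = 1/\dist(-r, \sigma(B))$ when $B$ is diagonalizable, and more robustly that $\lVert(B+r)^{-1}\rVert \le s$ fails only when $-r$ lies within distance $1/s$ of the spectrum; but since $B$ need not be normal I would instead argue via the imaginary part. Writing $r \in \RR$ and using dissipativity, for any unit vector $x$ we have $\im\langle x,(B+r)x\rangle = \im\langle x,Bx\rangle \ge 0$; this is not by itself enough to control the resolvent for real $r$. The cleaner route is to note that the eigenvalues of $B$ lie in the closed upper half-plane, and that $\lVert (B+r)^{-1}\rVert_{\rm HS} \le \sum_j 1/|r + \lambda_j|$ is false in general for non-normal $B$, so I would fall back on the standard trick: compare with $r + i\epsilon$ and use that $\lVert (B + r + i\epsilon)^{-1}\rVert \le 1/\epsilon$, then integrate. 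Concretely, the distributional identity $\lVert (B+r)^{-1} x\rVert^2 = \lim_{\epsilon\to0}$ of a Poisson-type integral, or the boundary behaviour of the operator-valued Herglotz function $r \mapsto (B+r)^{-1}$, gives the weak-$(1,1)$ bound with an absolute constant; this is exactly \cite[Proposition 3.1]{AizenmanENSS-06}, which I would invoke.

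The main obstacle is the non-normality of $B = V^{-1/2}AV^{-1/2}$: for a self-adjoint matrix the bound would be an elementary consequence of the spectral theorem and the one-dimensional estimate $\cL\{r : |r - \lambda|^{-1} > s\} = 2/s$ summed over eigenvalues with Hilbert–Schmidt weights, but dissipative matrices only have spectrum in the upper half-plane and can have large resolvent norm on a set that is not simply a union of small intervals around eigenvalues. The resolution is that dissipativity forces $r \mapsto (B+r)^{-1}$ to be the boundary value of a contractive-type analytic family on the lower half-plane (an operator Herglotz/Nevanlinna function up to sign), and such functions satisfy a weak-$L^1$ bound on $\RR$ with a universal constant; making this precise — rather than just citing it — is the only real work, and it is precisely what the cited reference supplies, so I would present the reduction above in detail and then defer the scalar weak-$(1,1)$ statement to \cite{AizenmanENSS-06}.
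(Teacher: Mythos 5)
The paper offers no proof of this lemma; it simply cites \cite[Proposition 3.1]{AizenmanENSS-06}, so there is no in-paper argument to compare yours against. Your opening substitution $B := V^{-1/2}AV^{-1/2}$, $\widetilde{M_j} := M_j V^{-1/2}$ is correct and harmless. The next step, however, contains a genuine gap: after bounding $\lVert\widetilde{M_1}(B+r)^{-1}\widetilde{M_2}\rVert_{\rm HS} \le \lVert\widetilde{M_1}\rVert_{\rm HS}\,\lVert(B+r)^{-1}\rVert\,\lVert\widetilde{M_2}\rVert_{\rm HS}$ you reduce the lemma to an $n$-independent weak-$L^1$ estimate on $r\mapsto\lVert(B+r)^{-1}\rVert$ (operator norm), and no such estimate exists. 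Take $B$ real diagonal with entries $2j/s$, $j=0,\dots,n-1$; this is dissipative since $\im\langle x,Bx\rangle\equiv 0$, and $\lVert(B+r)^{-1}\rVert = \max_j \lvert r+2j/s\rvert^{-1} > s$ precisely on $\bigcup_{j=0}^{n-1}\bigl(-2j/s - 1/s,\,-2j/s + 1/s\bigr)$, a set of Lebesgue measure $2n/s$. So the intermediate claim you deferred to \cite{AizenmanENSS-06} is false even for real diagonal $B$, independent of any non-normality issues, and it is not what that reference proves.

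Your own aside about the self-adjoint case --- that one sums $\cL\{r : \lvert r-\lambda_j\rvert^{-1} > s\} = 2/s$ over eigenvalues ``with Hilbert--Schmidt weights'' --- points to the correct mechanism, but is incompatible with the operator-norm reduction, which discards exactly those weights. The Hilbert--Schmidt factors must remain inside the averaging: choosing orthonormal bases $\{u_j\}$, $\{v_k\}$ diagonalizing $\widetilde{M_1}^*\widetilde{M_1}$ and $\widetilde{M_2}\widetilde{M_2}^*$ with singular values $\sigma_j,\tau_k$, one has $\lVert\widetilde{M_1}(B+r)^{-1}\widetilde{M_2}\rVert_{\rm HS}^2 = \sum_{j,k}\sigma_j^2\tau_k^2\,\lvert\langle u_j,(B+r)^{-1}v_k\rangle\rvert^2$ with $\sum_j\sigma_j^2 = \lVert\widetilde{M_1}\rVert_{\rm HS}^2$ and $\sum_k\tau_k^2=\lVert\widetilde{M_2}\rVert_{\rm HS}^2$, and these weights are what keep the bound finite uniformly in $n$. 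One then still needs the Herglotz boundary-value estimate for individual matrix elements (which your remarks about dissipativity and the Nevanlinna structure capture correctly) plus a recombination step that is not automatic, since weak-$L^1$ is not subadditive. That full package is precisely the content of \cite[Proposition 3.1]{AizenmanENSS-06}; if you intend to defer to it, defer for the whole statement rather than after a reduction that has already lost the dimension-free constant.
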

If the single-site potential $u$ has fixed sign (and compact support) Lemma \ref{lemma:monotone} is applicable and yields (together with a decoupling argument) the decay of fractional moments of the Green function. A generalization of Lemma \ref{lemma:monotone} also applies to the continuous alloy-type model on $L^2 (\RR^d)$ to obtain bounds on fractional moments of the Green function, see \cite{AizenmanENSS-06}.

Since we allow the single-site potential to change its sign, we want to get rid of the positivity assumption on the operator $V$. The first observation is, that if one considers averages of determinants, then the definiteness of $V$ plays no longer a role.
\subsection{Estimates on polynomials and resolvents}
In this section, we discuss estimates as used by Elgart, Tautenhahn, and Veseli\'c in \cite{ElgartTV-10} and \cite{ElgartTV-11}.
\begin{lemma}[\cite{ElgartTV-10}] \label{lemma:det}
 Let $n \in \NN$, $P(x) = x^n + \dots$ a polynomial of degree $n$, 
 and  $0 \leq \rho \in L^1(\RR) \cap L^\infty (\RR)$ and $s \in (0,1)$.
 Then we have
 \be\label{eq:etv1}
  \int_{\RR} \frac{1}{|P(x)|^{s/n}} \rho(x) dx 
   \leq \Vert \rho \Vert_{L^1}^{1-s}
    \Vert\rho\Vert_{\infty}^{s} \frac{2^{s} s^{-s}}{1-s}.
 \ee
\end{lemma}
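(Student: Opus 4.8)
The plan is to reduce the estimate to a single key fact: if $P(x)=x^n+\dots$ is a monic polynomial of degree $n$, then the super-level sets of $1/|P|^{1/n}$ have controlled Lebesgue measure, namely $\mathcal{L}\{x\in\RR : |P(x)|^{-1/n} > t\} \leq 2/t$ for all $t>0$. To see this, factor $P(x)=\prod_{j=1}^n (x-z_j)$ over $\CC$; then $|P(x)|^{-1/n} > t$ forces $|x-z_j| < 1/t$ for at least one $j$ (otherwise the product of the $n$ factors would be at least $t^{-n}$ in modulus, contradiction). Hence the super-level set is contained in $\bigcup_{j=1}^n (\re z_j - 1/t, \re z_j + 1/t)$, which has measure at most $n\cdot(2/t)$. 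Combined with the trivial bound by the whole line this is not yet good enough; the sharper observation is that the zeros come with the convexity of $\log|P|$-type arguments, but in fact the cleanest route is: on the set where $|P(x)|^{-1/n}>t$ we have $\prod_j|x-z_j| < t^{-n}$, and by the AM–GM inequality applied to $|x-\re z_j|\le|x-z_j|$ one still only gets the factor $n$. So instead I would use the standard fact that $|P(x)| \ge \prod_j |x - \re z_j|$ is false in general; the correct and classical statement, which I will invoke, is that $\mathcal{L}\{x : |P(x)| < \epsilon^n\} \le 2\epsilon \cdot$ (something independent of $n$)? Let me restate: the sharp bound is $\mathcal{L}\{x\in\RR:|P(x)|\le \tau\}\le 2^{?}\tau^{1/n}$ — this is a Remez/Cartan-type inequality, and with the monic normalization one gets exactly $\mathcal{L}\{x : |P(x)|^{-1/n} > t\} \le 4/t$. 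I would state and prove this level-set bound as the first step, using the factorization and a covering argument as above, being careful to track the numerical constant so that it matches the $2^s s^{-s}/(1-s)$ in \eqref{eq:etv1}.

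Second, I would pass from the level-set bound to the integral bound via the layer-cake representation. Write $g(x) = |P(x)|^{-s/n} = \bigl(|P(x)|^{-1/n}\bigr)^s$, so that $\{g > \sigma\} = \{|P(x)|^{-1/n} > \sigma^{1/s}\}$, whose measure is at most $\min\{\|\rho\|_{L^1}/\text{(nothing)}, \; c\,\sigma^{-1/s}\}$ — more precisely I will estimate
\[
\int_\RR g(x)\rho(x)\,dx \le \int_0^\infty \|\rho\|_\infty \,\mathcal{L}\bigl(\{g>\sigma\}\bigr)\,d\sigma \wedge \text{(use } \rho\text{'s mass)},
\]
splitting the $\sigma$-integral at a threshold $\sigma_0$: for $\sigma \le \sigma_0$ bound $\int_{\{g>\sigma\}}\rho \le \|\rho\|_{L^1}$, and for $\sigma > \sigma_0$ bound $\int_{\{g>\sigma\}}\rho \le \|\rho\|_\infty \mathcal{L}(\{g>\sigma\}) \le \|\rho\|_\infty c\,\sigma^{-1/s}$. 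The second integral $\int_{\sigma_0}^\infty \sigma^{-1/s}\,d\sigma$ converges precisely because $1/s > 1$, i.e.\ $s<1$, and equals $\sigma_0^{1-1/s}\,\frac{s}{1-s}$. Adding the two pieces gives a bound of the form $\sigma_0\|\rho\|_{L^1} + c\,\|\rho\|_\infty \sigma_0^{1-1/s}\frac{s}{1-s}$, and I would then optimize over $\sigma_0>0$. The optimal choice balances the two terms and yields, after a short computation, the stated product $\|\rho\|_{L^1}^{1-s}\|\rho\|_\infty^{s}$ times an explicit constant; tracking that the constant collapses to $2^s s^{-s}/(1-s)$ requires choosing $c$ correctly in Step 1 and doing the one-line $\sigma_0$-optimization.

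The main obstacle is getting the numerical constant sharp: the layer-cake splitting and the $\sigma_0$-optimization are routine, but they only produce the clean constant $2^s s^{-s}/(1-s)$ if the level-set estimate in Step 1 carries exactly the right factor (and in particular is $n$-independent — this is the crucial structural point, and it is exactly where the monic normalization $P(x)=x^n+\dots$ is used, since it pins down the leading coefficient so the zeros cannot escape to make $|P|$ uniformly large). I would therefore spend most of the care on Step 1, verifying the covering-by-intervals argument gives an $n$-independent bound with the precise constant, and then the rest is bookkeeping. Note that strict positivity or fixed sign of anything is never used — this is exactly why this determinant/polynomial estimate, unlike Lemma~\ref{lemma:monotone}, survives in the non-monotone setting.
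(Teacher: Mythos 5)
Your plan is to first establish a P\'olya--Remez type level-set bound, $\mathcal{L}\{x : |P(x)|^{-1/n} > t\} \leq C/t$ with $C$ independent of $n$, and then run a layer-cake argument with a split at $\sigma_0$. There are two problems with this. First, the $n$-independent level-set bound is a genuine theorem (P\'olya's inequality $\mathcal{L}\{x:|P(x)|\le\alpha\}\le 4(\alpha/2)^{1/n}$, sharp for scaled Chebyshev polynomials), and your proposed argument does not prove it. The factorization-and-covering route you sketch --- ``$\prod_j|x-z_j|<t^{-n}$ forces some $|x-z_j|<1/t$, cover by $n$ intervals'' --- produces exactly the factor $n$ you already noticed, and you cannot remove it by ``being careful with the numerical constant'': that covering argument is intrinsically $n$-dependent. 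The actual proof of P\'olya's inequality rests on the Chebyshev extremal property (a monic polynomial of degree $n$ has sup-norm at least $2(\ell/2)^n$ on any interval of length $2\ell$), combined with a reduction from general sub-level sets to intervals; this is a different idea, not a tightened covering argument, and it has to be supplied. Also, your parenthetical dismissal of $|P(x)|\ge\prod_j|x-\re z_j|$ as ``false in general'' is incorrect: for real $x$ one has $|x-z_j|^2=(x-\re z_j)^2+(\im z_j)^2\ge(x-\re z_j)^2$, so this inequality is true on $\RR$ and, as it turns out, is exactly the key observation you discarded. Second, even if you invoke P\'olya as a black box, the layer-cake with the sharp constant $4(\alpha/2)^{1/n}$ and the $\sigma_0$-optimization yields $\|\rho\|_{L^1}^{1-s}\|\rho\|_\infty^{s}\,2^{s(2-1/n)}/(s^s(1-s))$, which for $n\ge2$ is strictly larger than the claimed $2^s s^{-s}/(1-s)$. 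So the route, even patched, proves a weaker inequality.

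The paper's proof goes the other way around and is considerably simpler. Write $P(x)=\prod_{j=1}^n(x-z_j)$ and use $|P(x)|\ge\prod_j|x-\re z_j|$ for $x\in\RR$. The generalized H\"older inequality with all exponents equal to $n$ then gives $\int_\RR|P(x)|^{-s/n}\rho(x)\,dx\le\prod_{j=1}^n\bigl(\int_\RR|x-\re z_j|^{-s}\rho(x)\,dx\bigr)^{1/n}$. Each factor is bounded, uniformly in the location $a=\re z_j$, by the elementary split $\int_\RR|x-a|^{-s}\rho(x)\,dx\le\|\rho\|_\infty\int_{|y|\le R}|y|^{-s}\,dy+R^{-s}\|\rho\|_{L^1}=\frac{2R^{1-s}}{1-s}\|\rho\|_\infty+R^{-s}\|\rho\|_{L^1}$ for every $R>0$; optimizing over $R$ gives exactly $\|\rho\|_{L^1}^{1-s}\|\rho\|_\infty^{s}\,2^s s^{-s}/(1-s)$. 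No sub-level-set estimate for $P$ itself is needed --- only for linear factors, where it is trivial; and indeed the paper presents P\'olya's inequality as a \emph{consequence} of the lemma, not as an ingredient of its proof.
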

In \cite{ElgartTV-10}, this result is stated as
\begin{align*}
 \int_{\RR} \abs{\det (A + rV)}^{-s/n} \rho (r) \drm r
&\leq \abs{\det V}^{-s/n} \Vert \rho \Vert_{L^1}^{1-s} \Vert\rho\Vert_{\infty}^{s} \frac{2^{s} s^{-s}}{1-s}  \\
&\leq \abs{\det V}^{-s/n}\Bigl( \lambda^{-s} \Vert\rho\Vert_{L^1} + \frac{2 \lambda^{1-s}}{1-s} \Vert\rho\Vert_\infty  \Bigr) 
\end{align*}
holds for $\lambda > 0$, $A \in \CC^{n \times n}$,
and $V \in \CC^{n \times n}$ invertible. Since $r \mapsto \frac{1}{\det(V)} \det(A + r V)$
is a monic polynomial of degree $n$, \eqref{eq:etv1} implies this statement.
For the converse, use that $P(x) = x^n + \sum_{j=0}^{n-1} \alpha_j x^j$
can be rewritten in terms of the companion matrix as
\[
 P(x) = \det(x + A),\quad A = \begin{pmatrix} 0 & 0 & \dots & 0 & - \alpha_0 \\
 1 & 0 & \dots & 0 & - \alpha_1 \\
 \vdots & \vdots & \ddots & \vdots & \vdots \\
 0 & 0 & \dots &0 & -\alpha_{n-1} \\
 0 & 0 & \dots & 1 & - \alpha_n\end{pmatrix}.
\]
The given form of Lemma~\ref{lemma:det} has the advantage that
its relation to P\'olya's inequality becomes more apparent. Namely, that
for any polynomial $P(x) = x^n + \dots$ of degree $n$, we have
\be
 |\{x\in \RR :\quad |P(x)| \leq \alpha\}| \leq 4 \left(\frac{\alpha}{2}\right)^{\frac{1}{n}}
\ee
for $\alpha > 0$.
\par
In $d = 1$ the bound from Lemma \ref{lemma:det} is precisely what is needed to show the boundedness of averaged fractional powers of the Green function, since certain matrix elements of the Green function can be represented as an inverse of a determinant of the above type. More precisely, if $d = 1$ and $\supp u = \{0,1,\ldots , n-1\}$ then we have for all $x \in \ZZ$ and $z \in \CC \setminus \RR$
\[
 \lvert G_\omega (z ; x,x+n-1) \rvert = \frac{1}{\lvert \det (A + \omega_x \lambda V) \rvert} ,
\]
where $V \in \RR^{n \times n}$ is diagonal with diagonal elements $u (k-x)$, $k = x , \ldots , x+n-1$, and where $A \in \CC^{n \times n}$ is independent of $\omega_x$. By Lemma~\ref{lemma:det} one obtains a bound on the expectation of an averaged fractional power on certain Green's function elements, which is sufficient to start the proof of localization via the fractional moment method. 
See \cite{ElgartTV-10} for details.
\par
Let us turn to the higher dimensional case.
For $B \in \CC^{n\times n}$ we have
\[
 \|B^{-1}\| \leq \frac{\|B\|^{n-1}}{|\det(B)|} .
\]
Thus one can use \eqref{eq:etv1} to obtain bounds on the inverse of
$A  + r V$. More precisely, we have for $s \in (0,1)$, $R > 0$ and $A,V$ as above the estimate 
\begin{equation} \label{eq:average_norm}
\int_{-R}^R \bigl\Vert (A+rV)^{-1} \bigr\Vert^{s/n} {\rm d}r \leq \frac{2R^{1-s} (\Vert A \Vert + R \Vert V \Vert)^{s(n-1)/n}}{s^s (1-s) \abs{\det V}^{s/n}} .
\end{equation}
In $d>1$ Lemma \ref{lemma:det} is no longer applicable, but the estimate \eqref{eq:average_norm} is. However, the problem with the estimate \eqref{eq:average_norm} is that the upper bound depends on the background operator $A$. Note that $A$ arises by a Schur-complement formula and has a complicated dependence on the randomness. 
For this reason we have to assume additional properties on the single-site potential $u$ (i.e.\ that $u$ has fixed sign on the boundary of its (bounded) support) to show bounds on averaged fractional powers of the Green function, 
see Section~\ref{sec:loc_fmm} or \cite{ElgartTV-11} for details.
\subsection{Cartan estimates}
We have now discussed the facts from complex analysis on which
\cite{ElgartTV-10} and \cite{ElgartTV-11} are based. Let us now
discuss the ones used in \cite{Krueger} and Bourgain
\cite{Bourgain-09}, which are based on ideas from quasi-periodic Schroedinger operators see Bourgain, Goldstein and Schlag \cite{BourgainGS-02,Bourgain-04,Bourgain-07}
\par
Like the estimates discussed in the first part of this section, they
control the size of the set where an analytic function is
small. The main difference is that instead of requiring as
input that the analytic function is a polynomial of
small degree, one assumes that the analytic function is
not small at a single point. Here is the simplest form
of such an estimate.

\begin{theorem}\label{thm:cartan1}
 Let $f$ be an analytic function on the disc of radius
 $2 \euler$ satisfying
 \be \nonumber
  \sup_{|z| < 2 \euler} |f(z)| \leq 1,\quad |f(0)| \geq \varepsilon.
 \ee
 Then for $s > 0$, we have
 \be \nonumber
  |\{x \in [-1,1]:\quad |f(x)|\leq \euler^{-s}\}|
   \leq 30 \euler^3 \exp\left(-\frac{s}{\log(\varepsilon^{-1})}\right).
 \ee
\end{theorem}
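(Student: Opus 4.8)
The plan is to reduce the statement to the classical Cartan lemma on the size of sublevel sets of functions with controlled numbers of zeros, combined with a Jensen-type zero-counting bound. First I would count the zeros of $f$ in a slightly smaller disc: since $\sup_{|z| < 2\euler} |f(z)| \leq 1$ and $|f(0)| \geq \varepsilon$, Jensen's formula applied on, say, the disc of radius $\euler$ (comparing the two radii $\euler$ and $2\euler$, whose ratio is $\euler$) gives
\[
 n \log \euler \leq \log \frac{\sup_{|z| < 2\euler}|f(z)|}{|f(0)|} \leq \log(\varepsilon^{-1}),
\]
where $n$ is the number of zeros of $f$ (with multiplicity) in $\{|z| \leq \euler\}$; hence $n \leq \log(\varepsilon^{-1})$. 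This is the step that converts the lower bound at a single point into the quantitative input needed downstream.

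Next I would factor $f = B \cdot g$ on the disc of radius $\euler$, where $B$ is the finite Blaschke-type product carrying the $n$ zeros $z_1, \dots, z_n$ of $f$ in that disc (suitably normalized for the disc of radius $\euler$) and $g$ is zero-free there. On $[-1,1]$ the zero-free factor $g$ can be bounded below away from $0$ using the Borel–Carathéodory theorem and a Harnack/subharmonicity argument on $\log|g|$: since $|g| \leq C$ on the boundary and $|g(0)| \geq \varepsilon / \prod|z_i/\euler|^{-1} \geq \varepsilon$ (the zeros only increase $|g(0)|$ relative to $|f(0)|$), one gets $|g(x)| \geq \exp(-c\log(\varepsilon^{-1}))$ on $[-1,1]$ for a suitable constant. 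Therefore $|f(x)| \leq \euler^{-s}$ forces $|B(x)|$ to be small, namely $|B(x)| \leq \euler^{-s} \exp(c\log(\varepsilon^{-1}))$, i.e.\ $\prod_{i=1}^n |x - z_i|$ is bounded by a quantity of the form $\euler^{-s}$ times an $n$-dependent constant.

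The final step is the genuinely ``Cartan'' estimate: for a monic-type product $\prod_{i=1}^n |x - z_i|$ with $n$ complex points, the set of real $x$ where this product is at most $\alpha$ has Lebesgue measure at most $C \alpha^{1/n}$ — this is exactly the Pólya/Cartan inequality quoted just before the theorem in the polynomial case, which extends verbatim to products over complex roots by taking absolute values. Plugging $\alpha \sim \euler^{-s}$ and taking the $n$-th root, together with $n \leq \log(\varepsilon^{-1})$, yields a bound of the form $\mathrm{const} \cdot \exp(-s/\log(\varepsilon^{-1}))$; tracking the constants through Jensen's formula, Borel–Carathéodory, and the radius ratios produces the explicit $30\euler^3$. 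The main obstacle I anticipate is the bookkeeping of absolute constants — getting the clean numerical factor $30\euler^3$ requires choosing the intermediate radii (e.g.\ $\euler$ versus $2\euler$) carefully so that the various multiplicative losses combine into precisely that value, whereas the structural steps (zero count, factorization, sublevel-set bound) are standard.
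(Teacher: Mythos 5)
The paper itself gives no proof of Theorem~\ref{thm:cartan1}: the stated ``proof'' is a pointer to Theorem~11.3.4 in Levin's book \cite{Levin-96} and to Theorem~10.2 of \cite{Krueger}. Your outline is essentially the argument that sits behind Levin's theorem — Jensen zero count, extraction of the zeros into a finite Blaschke product, lower bound on the zero-free factor on $[-1,1]$ via Harnack applied to the positive harmonic function $-\log|g|$, and then the Cartan/P\'olya sublevel-set bound for the polynomial factor, with $\alpha^{1/n}$ converted to $\euler^{-s/\log(\varepsilon^{-1})}$ using the zero count. So the route is the right one and matches the cited source.

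Two places need tightening. First, the Jensen step: the ratio of the radii $2\euler$ and $\euler$ is $2$, not $\euler$, so Jensen on that pair only yields $n\log 2\leq \log(\varepsilon^{-1})$, i.e.\ $n\leq \log(\varepsilon^{-1})/\log 2$; that degrades the final exponent to $s\log 2/\log(\varepsilon^{-1})$, which is strictly weaker than the claimed $s/\log(\varepsilon^{-1})$. To get exactly $n\leq\log(\varepsilon^{-1})$ you must compare radius $2$ to radius $2\euler$ (ratio $\euler$), counting zeros in $\{|z|\leq 2\}$ — this is also the natural inner disc since it contains $[-1,1]$ with room to spare. Second, once you replace $f$ by $B\cdot g$ with $B$ the Blaschke product on the disc of radius $2$, the quantity controlled by the Cartan/P\'olya step is $\prod_i|x-z_i|$, not $|B(x)|$; you must insert the conversion $\prod_i|x-z_i|\leq C^n\,|B(x)|$ (with $C$ an absolute constant from $|2^2-\bar z_ix|\leq 6$, $|z_i|\leq 2$, $|x|\leq 1$), and this $C^n$ then combines with the $\varepsilon^{-c}$ loss from the Harnack step; both disappear into the prefactor after the $n$-th root because $n\leq\log(\varepsilon^{-1})$. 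With those corrections the scheme closes, and the numerology $30\euler^3$ is a matter of tracking the Harnack constant $(2+1)/(2-1)=3$, the Cartan constant, and the $C^n$ factor.

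One further minor point: the hypothesis bounds $f$ only on the \emph{open} disc $\{|z|<2\euler\}$, so Jensen's formula should be applied on a circle of radius $R<2\euler$ and the limit $R\uparrow 2\euler$ taken, or equivalently one notes that $|f|\leq 1$ on every compact subdisc.
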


\begin{proof}
 For a proof see Theorem~11.3.4 in \cite{Levin-96}. See also Theorem~10.2 in \cite{Krueger} for the deduction of
 this statement.
\end{proof}

This form of the Cartan estimate is not sufficient for the application
to random Schr\"odinger operators as done in \cite{Bourgain-09}
and \cite{Krueger}. For these results, one needs to apply the Cartan
estimate to functions depending on many variables, and thus needs
an estimate that is well behaved in the number of variables.
Such an estimate was first proven by Nazarov, Sodin, and Volberg \cite{NazarovSV-03}.
Unfortunately, they work on balls and not on poly-disks as
necessary for our applications. The result for poly-disks
was proven by Bourgain in \cite{Bourgain-09}.
We will state here a formulation which was given in  \cite{Krueger}.

\begin{theorem}[Theorem~10.1, \cite{Krueger}]\label{thm:cartan2}
Denote by $\DD_r$ the disc of radius $r$ in $\CC$. 
 Let $f: (\DD_{2\euler})^n\to\CC$ be an analytic function satisfying
 \be \nonumber
  \|f\|_{L^{\infty}( (\DD_{2\euler})^n )} \leq 1,\quad |f(0)| \geq \varepsilon.
 \ee
 Then for $s > 0$, we have
 \be \nonumber
  \frac{|\{x \in [-1,1]^n:\quad |f(x)|\leq \euler^{-s}\}|}{2^n}
   \leq 30 \euler^3 n  \exp\left(-\frac{s}{\log(\varepsilon^{-1})}\right).
 \ee
\end{theorem}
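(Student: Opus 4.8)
The plan is to deduce the $n$-variable estimate directly from the one-variable estimate of Theorem~\ref{thm:cartan1} by foliating the cube $[-1,1]^n$ not by axis-parallel hyperplanes but by the real segments emanating from the origin. The reason for this choice is that the restriction of $f$ to any line through $0$ automatically satisfies $\lvert f(0)\rvert \geq \varepsilon$, so Theorem~\ref{thm:cartan1} applies on \emph{every} such slice with the \emph{same} $\varepsilon$: there is no exceptional set of directions to discard, and hence no loss in the exponential rate.

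First I would record the polar-type change of variables on the cube: every $x\in[-1,1]^n\setminus\{0\}$ is written uniquely as $x=t\sigma$ with $t=\lvert x\rvert_\infty\in(0,1]$ and $\sigma=x/t\in\partial([-1,1]^n)$; on each of the $2n$ facets of the cube, parametrised by $[-1,1]^{n-1}$, the map $(t,\sigma)\mapsto t\sigma$ has Jacobian $t^{n-1}$, the union of the lower-dimensional faces (and the origin) is Lebesgue-null, and the $(n-1)$-dimensional measure of $\partial([-1,1]^n)$ equals $2n\cdot 2^{n-1}=n2^n$. Second, I would fix $\sigma\in\partial([-1,1]^n)$ and study $g_\sigma(\zeta):=f(\zeta\sigma)$. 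Since $\lvert\sigma_j\rvert\leq 1$ for every $j$, the condition $\lvert\zeta\rvert<2\euler$ forces $\zeta\sigma\in(\DD_{2\euler})^n$; therefore $g_\sigma$ is analytic on $\DD_{2\euler}$ with $\sup_{\lvert\zeta\rvert<2\euler}\lvert g_\sigma(\zeta)\rvert\leq 1$ and $\lvert g_\sigma(0)\rvert=\lvert f(0)\rvert\geq\varepsilon$. Applying Theorem~\ref{thm:cartan1} to $g_\sigma$ gives, for every such $\sigma$,
\[
  \bigl\lvert\{t\in[-1,1]:\lvert f(t\sigma)\rvert\leq\euler^{-s}\}\bigr\rvert\leq 30\,\euler^3\exp\bigl(-s/\log(\varepsilon^{-1})\bigr).
\]
Third, I would integrate this over $\sigma$: since $t^{n-1}\leq 1$ on $[0,1]$ and $[0,1]\subset[-1,1]$, the change of variables yields
\[
  \bigl\lvert\{x\in[-1,1]^n:\lvert f(x)\rvert\leq\euler^{-s}\}\bigr\rvert\leq\int_{\partial([-1,1]^n)}\Bigl(\int_{\{t\in[0,1]:\lvert f(t\sigma)\rvert\leq\euler^{-s}\}}t^{n-1}\,\drm t\Bigr)\drm\sigma\leq n2^n\cdot 30\,\euler^3\exp\bigl(-s/\log(\varepsilon^{-1})\bigr),
\]
and dividing by $2^n$ produces exactly the claimed bound, with the linear factor $n$ coming precisely from the surface area of the cube.

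The one genuine insight — and the only place where real care is needed — is the choice of the radial foliation. A naive Fubini reduction that freezes one coordinate and applies Theorem~\ref{thm:cartan1} in the remaining variable would force one to control, by a further use of the one-variable estimate, the set of frozen values at which the slice is already too small at its centre; propagating the thresholds through the resulting induction on $n$ degrades the rate from $\exp(-s/\log(\varepsilon^{-1}))$ to something of the form $\exp(-s/(c^n\log(\varepsilon^{-1})))$, which is worthless for the applications in \cite{Bourgain-09,Krueger}. Slicing through the origin removes this loss at a stroke, because the polydisc $(\DD_{2\euler})^n$ has the same radius in every coordinate and $\lvert\sigma_j\rvert\leq 1$, so each one-dimensional slice still lives on a disc of radius $2\euler$ and retains the full lower bound $\varepsilon$ at its centre. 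What remains is routine: the justification of the polar change of variables on the cube, the measurability needed for Fubini, the bound $t^{n-1}\leq 1$ on $[0,1]$, and the elementary surface-area count $\lvert\partial([-1,1]^n)\rvert=n2^n$.
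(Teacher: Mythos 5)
Your proof is correct, and it is essentially the same argument as the one the paper alludes to: the ``clever change to spherical coordinates'' in Bourgain's and Kr\"uger's proof is exactly the $\ell^\infty$-polar foliation $x=t\sigma$, $t\in(0,1]$, $\sigma\in\partial([-1,1]^n)$, that you use, which lets Theorem~\ref{thm:cartan1} be applied on every radial slice with the same lower bound $\varepsilon$ at the centre, with the factor $n$ arising from the surface measure $n2^n$ of the cube boundary. Your remark explaining why slicing through the origin (rather than a naive Fubini with frozen coordinates) is the essential device correctly identifies the key point of the argument.
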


Note that the dimension dependence is $n$. 
The proof is based on a clever change to spherical coordinates.
We give an exposition on how to apply this to 
Schr\"odinger operators in Section~\ref{sec:appcartan}.
%
%
%
%
%
%
\section{Wegner estimates} \label{sec:Wegner}

This section is concerned with averaging of spectral projections.
For any $\omega \in \Omega$ and $L\in \NN$ the restriction $H_{\omega,L}$ 
is a selfadjoint finite rank operator. 
In particular its spectrum consists entirely of real eigenvalues 
$E(\omega,L,1)\le E(\omega,L,2)\le \dots \le E(\omega,L,\sharp \Lambda_L)$ 
counted including multiplicities. 
Wegner estimates \cite{Wegner-81} are bounds on the expected number of eigenvalues 
of the finite box Hamiltonians $H_{\omega,L}$ in a compact energy interval $I = [E-\epsilon, E + \epsilon]$. 
They can be used as an ingredient for a localization proof via the multiscale analysis. 
The symbol $\chi_I (H_{\omega , L})$ denotes the spectral projection onto 
$I$ with respect to the operator $H_{\omega , L}$.
\begin{theorem}[\cite{Veselic-10b},\cite{PeyerimhoffTV}] \label{thm:Wegner_d}
Assume that $\mu$ has a density $\rho$ of finite total variation
and $u$ is not identically zero.
\begin{enumerate}[(a)]
 \item 
Assume that the single-site potential $u$ has support in $[0,n]^d \cap \ZZ^d$.
Then there exists a constant $c_u$ depending only on $u$ such that for any
$L\in \NN$, $E \in \RR$ and $\epsilon>0$ we have
\[
\EE \left \{\Tr \chi_{[E-\epsilon,E+\epsilon]}(H_{\omega,L})\right\}
\le
c_u \,  \lvert \supp u \rvert \,   \|\rho\|_{\rm BV}   \ \epsilon\, (2L+n)^{d\cdot (n+1)}
\]
\item 
Assume $\bar u = \sum_{k \in \ZZ^d} u(k) \neq 0$ and that $\rho$ has compact support. Then we have
for any $L\in \NN$, $E \in \RR$ and $\epsilon>0$
\[
\EE \left \{\Tr \chi_{[E-\epsilon,E+\epsilon]}(H_{\omega,L})\right\}
\le
\frac{8}{\bar u}  \,  \min\bigl((2L)^d,\lvert \supp u \rvert\bigr)  \, \|\rho\|_{\rm BV}  \ \epsilon\, (2L+m)^{d} ,
\]
where $m \in \NN$ is such that $\sum_{\lVert k \rVert \geq m} \lvert u(k) \rvert \leq \lvert \bar u / 2 \rvert$.
\item 
Assume there are constants $C,\alpha \in \RR^+$ such that 
$\lvert u(k) \rvert \le C e^{- \alpha \Vert k \Vert_1}$ for all $k \in \ZZ^d$, and that $\rho$ has bounded support.
Then there exists $c_u>0$ and $I_0 \in \NN_0^d$ both depending only on $u$ 
such that for any $L \in \NN$, $E \in \RR$ and $\epsilon > 0$ we have
\[
 \EE \bigl \{\Tr \chi_{[E-\epsilon , E+\epsilon]} (H_{\omega,L}) \bigr\}\le
c_u \lVert \rho \rVert_{\rm BV} \, \epsilon \, (2L+1)^{2d + \lvert I_0 \rvert} .
\]
\end{enumerate}
\end{theorem}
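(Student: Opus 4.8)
All three assertions are Wegner estimates, and I would prove them through the common mechanism of \emph{spectral averaging}, the obstruction being the non-monotone dependence of the eigenvalues of $H_{\omega,L}$ on the coupling constants $\omega_k$. Since $\rho\in\mathrm{BV}(\RR)$ with $\int\rho=1$ forces $\|\rho\|_\infty\le\tfrac12\|\rho\|_{\mathrm{BV}}$, the plan in each case is to reduce
\[
\EE\{\Tr\chi_{[E-\epsilon,E+\epsilon]}(H_{\omega,L})\}=\sum_{x\in\Lambda_L}\EE\{\langle\delta_x,\chi_{[E-\epsilon,E+\epsilon]}(H_{\omega,L})\delta_x\rangle\}
\]
to a per‑site bound of the shape $C_u\,\|\rho\|_{\mathrm{BV}}\,\epsilon$, obtained by integrating a suitable linear combination of the $\omega_k$ against the corresponding one–dimensional marginal of the product measure. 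For the standard Anderson model ($u=\delta_0$) this combination is $\omega_x$ itself, $H_{\omega,L}=A+\lambda\omega_x|\delta_x\rangle\langle\delta_x|$ is a rank–one non–negative perturbation in $\omega_x$, and the classical rank–one averaging identity $\int_\RR\langle\delta_x,\chi_B(A+\lambda r|\delta_x\rangle\langle\delta_x|)\delta_x\rangle\,dr=\lambda^{-1}|B|$ finishes the argument. When $u$ changes sign, varying $\omega_x$ perturbs the potential at \emph{every} site of $x+\supp u$ with coefficients $u(\cdot-x)$ of both signs, so $H_{\omega,L}$ is neither a rank–one nor a monotone function of $\omega_x$; the three hypotheses on $u$ are precisely three ways of repairing this.

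For part (b) I would use the \emph{translation} direction. Adding a common real parameter $t$ to all couplings $\omega_k$ with $k\in\Lambda_{L+m}$ changes the potential by $V_\omega(x)\mapsto V_\omega(x)+t\sum_{k\in\Lambda_{L+m}}u(x-k)$, and for $x\in\Lambda_L$ one has $\sum_{k\in\Lambda_{L+m}}u(x-k)=\bar u-\sum_{k\notin\Lambda_{L+m}}u(x-k)$, the correction being at most $\sum_{\|k\|\ge m}|u(k)|\le|\bar u/2|$ by the choice of $m$. Hence $H_{\omega,L}\mapsto H_{\omega,L}+\lambda t\,Q$ with $Q$ diagonal and $|Q(x,x)|\ge|\bar u|/2$ for every $x\in\Lambda_L$; in particular every eigenvalue depends monotonically on $t$, with derivative at least $\lambda|\bar u|/2$ in modulus. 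Carrying out the averaging along this monotone direction (after the change of variables that isolates the translation parameter, whose marginal is again a BV density) and substituting $\lambda'=E_n(t)$ in each eigenvalue branch as in the monotone argument gives $\int\chi_{[E-\epsilon,E+\epsilon]}(E_n(t))\,\rho(t)\,dt\le\tfrac{4}{\lambda|\bar u|}\|\rho\|_\infty\epsilon$; summing over the eigenvalues and using $|\Lambda_L|\le(2L+m)^d$ yields the bound, the improvement to $\min\big((2L)^d,|\supp u|\big)$ coming from replacing the eigenvalue count by the rank of $Q$. This is the most elementary of the three.

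For parts (a) and (c) the mean $\bar u$ may vanish, so no translation direction is available, and I would instead exploit a deconvolution structure of the map $\{\omega_k\}\mapsto\{V_\omega(x)\}$. When $\supp u\subseteq[0,n]^d$ with $u(0)\ne0$, ordering sites lexicographically makes this map (block‑)triangular with diagonal $u(0)$, hence invertible on finite boxes; performing the Schur reduction of $G_{\omega,L}(z;x,x)$ onto the block $x+\supp u$ then exhibits it as a ratio of two polynomials in the single variable $\omega_x$ of degree $\le|\supp u|\le(n+1)^d$, the denominator being (up to the nonzero factor $\lambda u(0)$ and the invertible diagonal $\lambda\,\mathrm{diag}(u(\theta))_{\theta\in\supp u}$) a monic polynomial — exactly the situation to which the polynomial averaging bound of Lemma~\ref{lemma:det} / Pólya's inequality applies. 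Tracking how the degree $|\supp u|$ enters the constants and summing over the $\le(2L+n)^d$ sites produces the exponent $d(n+1)$. For part (c), the decay $|u(k)|\le C\euler^{-\alpha\|k\|_1}$ lets me fix a finite box $\Lambda_{I_0}$, with the multi‑index $I_0$ depending only on $u$ and $\alpha$, on which the truncated convolution matrix is invertible, and then argue exactly as in (a), with the box size $|I_0|$ playing the role of the support size in the exponent $2d+|I_0|$; the exponential decay is what makes the contribution of the far tail of $u$ controllable uniformly in $L$.

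The part I expect to be the main obstacle, and the technical heart of (a) and (c), is precisely this spectral–averaging step along a direction that, unlike the standard model, is neither a rank–one nor a monotone perturbation: one must still recover the sharp Lipschitz rate $f(\epsilon)=\epsilon$ rather than, say, a fractional power. In (b) this is taken care of by the exact monotonicity of the translation, but in (a) and (c) it is exactly where Lemma~\ref{lemma:det} does the work, and the price one pays — the polynomial degree $|\supp u|$, respectively the box size $|I_0|$ — is what inflates the volume exponents to $(2L+n)^{d(n+1)}$ and $(2L+1)^{2d+|I_0|}$. A secondary point is the separate, cruder treatment of the boundary sites of $\Lambda_L$ (those within $\diam\supp u$ of $\partial\Lambda_L$, and a few more in the translation argument), for which the clean deconvolution/translation identities hold only approximately; these contribute only lower‑order terms.
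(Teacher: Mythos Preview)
Your treatment of part (b) is essentially the paper's approach: the common translation $\omega_k\mapsto\omega_k+t$ is exactly the special case $c_j\equiv 1$ of the general mechanism the paper describes in the Remark following the theorem, namely ``find a finite linear combination of translated single-site potentials which is positive, then monotone spectral averaging gives a Wegner estimate.''

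For parts (a) and (c), however, there is a genuine gap. You propose to reduce $G_{\omega,L}(z;x,x)$ to a rational function of $\omega_x$ of degree $\le|\supp u|$ and then invoke Lemma~\ref{lemma:det} / P\'olya's inequality. But those estimates control $\int|P(r)|^{-s/n}\rho(r)\,dr$ for $s\in(0,1)$, or equivalently $|\{r:|P(r)|\le\alpha\}|\lesssim\alpha^{1/n}$; fed into a Wegner argument this yields $f(\epsilon)=\epsilon^{1/|\supp u|}$, \emph{not} the linear rate $f(\epsilon)=\epsilon$ that the theorem asserts. You explicitly claim that Lemma~\ref{lemma:det} ``does the work'' of recovering the Lipschitz rate, and that the polynomial degree only shows up in the volume exponent --- this is where the argument breaks. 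The degree enters the $\epsilon$-exponent, not the $L$-exponent.

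The route actually taken in \cite{Veselic-10b,PeyerimhoffTV} (and sketched in the paper's Remark) is different and uses your deconvolution observation in another way. The invertibility of the lower-triangular Toeplitz matrix $(u(j-k))_{j,k}$ on a suitable finite index set is used not to produce a polynomial, but to solve $\sum_j c_j\, u(\cdot-j)=\delta_0$ (or more generally to produce coefficients $c_j$ for which $\sum_j c_j\, u(\cdot-j)\ge\kappa\chi_{\{0\}}$). One then changes variables in probability space along the direction $\omega_k\mapsto\omega_k+t\,c_{k}$; this is a \emph{monotone} rank-bounded perturbation of $H_{\omega,L}$, so the classical spectral-averaging argument applies and gives the linear $\epsilon$. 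The large volume exponents $(2L+n)^{d(n+1)}$ and $(2L+1)^{2d+|I_0|}$ arise because the vector $(c_j)$ has support of diameter $\sim n$ (respectively $\sim|I_0|$), so that the change of variables couples that many $\omega$-coordinates and the resulting marginal density picks up a combinatorial factor polynomial in the number of sites. In part (c) the exponential decay of $u$ is what guarantees that a \emph{finite} such combination suffices, with the tail handled perturbatively. So the missing idea is: use the triangular structure to manufacture a \emph{monotone} direction and average along it, rather than to set up a polynomial and apply P\'olya.
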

In the case where the support of $u$ is compact, part (b) of Theorem \ref{thm:Wegner_d} has an important corollary.
\begin{corollary}[\cite{Veselic-10b}] \label{cor:dos}
Assume $\bar u \neq 0$ and $ \supp u \subset [0,n]\cap \ZZ^d$.
Then we have for any $L\in \NN$, $E \in \RR$ and $\epsilon>0$
\[
\EE \left \{\Tr \chi_{[E-\epsilon,E+\epsilon]}(H_{\omega,L})\right\}
\le
\frac{4}{\bar u}  \, \rank u \,   \|\rho\|_{BV} \ \epsilon\, (2L+n)^{d} .
\]
In particular, the function  $\RR \ni E \to \EE \left \{\Tr \big [\chi_{(-\infty,E]}(H_{\omega,L})\big]\right\}$
is  Lipschitz continuous.
\end{corollary}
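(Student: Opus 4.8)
The plan is to obtain this as a specialization of Theorem~\ref{thm:Wegner_d}(b) to a compactly supported single-site potential, and then to read off the Lipschitz property from the linear-in-$\epsilon$ character of the resulting bound. Throughout we keep the standing hypotheses of Theorem~\ref{thm:Wegner_d} (so $\mu$ has a density $\rho$ of finite total variation, and for part~(b) also with bounded support), together with $\bar u \neq 0$ and $\Theta := \supp u \subset [0,n]^d \cap \ZZ^d$. We may assume $\bar u > 0$: otherwise replace $u$ by $-u$ and each $\omega_k$ by $-\omega_k$, which leaves the $\omega_k$ i.i.d.\ with a density whose total variation is again $\lVert \rho \rVert_{\rm BV}$.

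First I would revisit the mechanism behind part~(b). Its proof rests on the elementary identity $\sum_{k \in \ZZ^d} u(\cdot-k) = \bar u$ (a constant function): simultaneously shifting all coupling constants turns into a rigid energy translation of $H_{\omega,L}$, and the ensuing spectral-averaging argument produces the factor $1/\bar u$ together with a volume factor. In the general case one can only sum over the $k$'s whose bump $u(\cdot-k)$ meets $\Lambda_L$, so $\sum_k u(x-k)$ over those $k$ fails to equal $\bar u$ near $\partial^{\rm i}\Lambda_L$; this boundary mismatch is precisely what forces the tail parameter $m$ and the safety constant $8$. When $\Theta$ is contained in $[0,n]^d \cap \ZZ^d$ this defect disappears: putting $K := \{k \in \ZZ^d : (\Theta+k)\cap\Lambda_L \neq \emptyset\} = \Lambda_L - \Theta$, one has $\sum_{k \in K} u(x-k) = \sum_{k \in \ZZ^d} u(x-k) = \bar u$ for \emph{every} $x \in \Lambda_L$, with no boundary correction, and $\lvert K \rvert \le (2L+n)^d$ since $\diam \Theta \le n$. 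Re-running the argument of part~(b) with this exact identity in place of the approximate one removes the tail estimate, replaces $(2L+m)^d$ by $(2L+n)^d$, and improves the numerical constant from $8$ to $4$; the combinatorial factor $\min((2L)^d,\lvert\supp u\rvert)$ gets replaced by $\rank u$, the rank of the single finite-rank multiplication operator whose spectral shift is being controlled. This yields
\[
\EE\{\Tr \chi_{[E-\epsilon,E+\epsilon]}(H_{\omega,L})\} \le \frac{4}{\bar u}\,\rank u\,\lVert\rho\rVert_{\rm BV}\,\epsilon\,(2L+n)^d
\]
for all $L \in \NN$, $E \in \RR$, $\epsilon > 0$.

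For the Lipschitz statement, set $F(E) := \EE\{\Tr[\chi_{(-\infty,E]}(H_{\omega,L})]\}$. This is non-decreasing and bounded above by $\lvert\Lambda_L\rvert$, and for $E_1 < E_2$ we have $F(E_2) - F(E_1) = \EE\{\Tr[\chi_{(E_1,E_2]}(H_{\omega,L})]\} \le \EE\{\Tr[\chi_{[E_1,E_2]}(H_{\omega,L})]\}$. Applying the displayed bound with $E = (E_1+E_2)/2$ and $\epsilon = (E_2-E_1)/2$ gives $F(E_2) - F(E_1) \le \frac{2}{\bar u}\rank u\,\lVert\rho\rVert_{\rm BV}(2L+n)^d\,(E_2-E_1)$, i.e.\ $F$ is Lipschitz with a constant independent of $E$.

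The only real work, I expect, is the bookkeeping inside the proof of Theorem~\ref{thm:Wegner_d}(b): the conceptual point (compact support $\Rightarrow$ the ``derivatives sum to a constant'' identity holds exactly on $\Lambda_L$, so no boundary layer is needed) is immediate, but pinning down the exact numerical constant ($8 \to 4$), the exponent ($m \to n$), and checking that the local combinatorial factor collapses to $\rank u$ requires going through that proof line by line rather than invoking part~(b) as a black box.
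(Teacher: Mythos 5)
Your proposal matches the paper's declared route: the Corollary is presented as a specialization of Theorem~\ref{thm:Wegner_d}(b) to the compactly supported case, with the paper citing \cite{Veselic-10b} for the details rather than reproving it. Your identification of the mechanism --- namely that $\supp u \subset [0,n]^d$ makes the identity $\sum_{k} u(\cdot - k)=\bar u$ hold exactly on all of $\Lambda_L$, eliminating the tail parameter $m$ and the factor-of-two slack behind the constant $8$ --- together with the clean reduction to $\bar u>0$ and the correct deduction of Lipschitz continuity from the linear-in-$\epsilon$ bound, is precisely the intended argument.
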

\begin{remark}
\begin{enumerate}[(i)]
 \item Note, that apart of the exponential decay condition on $u$, 
  Theorem \ref{thm:Wegner_d} (c) gives a Wegner estimate for the 
  discrete alloy-type model without any further assumption on the single-site potential. 
  In particular, $u$ may change its sign arbitrarily.
 \item The proof of Theorem \ref{thm:Wegner_d} is (roughly speaking) based on a transformation of the probability space to recover monotonicity. With other words, once you find a finite linear combination of translated single-site potentials which is positive, then monotone spectral averaging leads to a Wegner estimate, see \cite{KostrykinV-06,Veselic-10a,Veselic-10b,PeyerimhoffTV} where this approach is used.
\end{enumerate}
\end{remark}
\begin{remark}[Continuous alloy-type model]
The \emph{alloy-type model} is the Schr\"o{}dinger operator $H_\omega = -\Delta + V_0 + V_\omega$ on $L^2 (\RR^d)$, where $\Delta$ is the Laplacian on $\RR^d$, $V_0$ a $\ZZ^d$ periodic potential, and $V_\omega$ given by 
\[
 V_\omega (x) = \sum_{k \in \ZZ^d} \omega_k U(x-k)
\]
with $U : \RR^d \to \RR$ a single-site potential. 
It is assumed that $V_0$ and $V_\omega$ are infinitesimally bounded with respect to $\Delta$, 
with constants uniformly bounded in $\omega \in \Omega$.
We will be concerned with the case that  the distribution $\mu$ has a density $\rho$ of finite total variation. 
and  $U$ is a \emph{generalized step-function}, i.e.
\[
U (x) = \sum_{k \in \ZZ^d} u (k) w(x-k).
\]
Here $L_{\rm c}^p (\RR^d) \ni w \geq \kappa \chi_{(-1/2,1/2)^d}$ with $\kappa > 0$, $p = 2$ for $d \leq 3$ and $p>d/2$ for $d \geq 4$, and $u \in \ell^1 (\ZZ^d ; \RR)$ the discrete single-site potential. 
\par
In \cite{PeyerimhoffTV} a Wegner estimate similar to part (c) in Theorem \ref{thm:Wegner_d} 
is proven for the (continuous) alloy-type model. More precisely, assume that $U$ is a generalized step-function and there are constants $C,\alpha \in \RR^+$ such that $\lvert u (k) \rvert \leq C \euler^{-\alpha \lVert k \rVert_1}$. Then there exists $c_U > 0$ and $I_0 \in \NN_0^d$ both depending only on $U$ such that for any $L \in \NN$ and any bounded interval $I:= [E_1,E_2] \subset \RR$ we have
 \[
\EE \bigl \{\Tr \chi_I (H_{\omega,L}) \bigr\}
\le
\euler^{E_2} c_{U} \lVert \rho \rVert_{\rm Var} \lvert I \rvert (2l+1)^{2d + \lvert I_0 \rvert} .
\]
Here $H_{\omega , L}$ denotes the restriction of $H_\omega$ to the cube $(-L,L)^d \subset \RR^d$ with either Dirichlet or periodic boundary conditions. The stated Wegner estimate is valid for both types of boundary conditions. 
\par
A drawback of this results is that they fail if $u$ is not a generalized step function.
Contrary to this, the papers \cite{Klopp-95a,HislopK-02} obtain Wegner estimates for energies in a neighborhood of the infimum of the spectrum 
which are valid for arbitrary non-vanishing single-site potentials $u \in C_{\rm c}(\RR^d)$ and coupling constants whose distribution has a piecewise absolutely continuous density. 
\end{remark}
\begin{remark}[Localization]
 Notice that the Wegner estimates from Theorem~\ref{thm:Wegner_d} are valid on the whole energy axis. Therefore, one can prove localization via multiscale analysis \cite{FroehlichS-83,DreifusK-89} in any energy region where an initial length scale estimate holds. If the single-site potential does not have compact support one has to use a modified version of the multiscale analysis \cite{KirschSS-98b}.
\end{remark}
%
%
%
%
%
%
%
%
\section{Localization via fractional moment method} \label{sec:loc_fmm}
\subsection{Boundedness of fractional moments}
The lemmata from Section~\ref{sec:averaging} can be used to obtain bounds on averaged fractional powers of the Green function. 
One possible approach to overcome the problems arising 
because of the lack of monotonicity is to use a special transformation of the probability space to recover some monotonicity which makes Lemma \ref{lemma:monotone} applicable. This was done in \cite[Appendix]{ElgartTV-11} to obtain the following Theorem.
\begin{theorem}[\cite{ElgartTV-11}] \label{theorem:apriori_average}
Assume  \begin{enumerate}[(i)]
\item The measure $\mu$ has a density $\rho$ in the Sobolev space $ W^{1,1} (\RR)$.
\item  The single-site potential $u$ has compact support and satisfies $\overline u := \sum_{k \in \ZZ^d} u(k) \neq 0$.
\end{enumerate}
 Let further $\Lambda \subset \ZZ^d$ finite and $s \in (0,1)$. Then we have for all $x,y \in \Lambda$ and $z \in \CC \setminus \RR$
\[
\mathbb{E} \Bigl\{ \bigl\lvert G_{\omega , \Lambda} (z;x,y) \bigr\rvert^s \Bigr\}  
\leq
\frac{1}{1-s} 
\left(\frac{2\lVert \rho' \rVert_{L^1} C}{\overline{u}}\right)^s 
\frac{1}{\lambda^s}
\]
where $C$ is a constants depending only on $u$.
\end{theorem}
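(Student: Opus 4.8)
The plan is to use the rank-one (or finite-rank) nature of the randomness entering through a single coupling constant $\omega_0$, combined with the transformation of the probability space sketched in Remark~(ii) after Corollary~\ref{cor:dos}, in order to reduce the non-monotone problem to a situation where Lemma~\ref{lemma:monotone} applies. First I would use the hypothesis $\overline u \neq 0$ to choose a finite linear combination of translates of $u$ which is a genuine (non-negative, non-degenerate) single-site-type bump; concretely, one reorganizes the sum $V_\omega(x) = \sum_k \omega_k u(x-k)$ via a change of variables in the i.i.d.\ sequence $(\omega_k)$ so that, after the substitution, the dependence of $H_\omega$ on one of the new variables — call it $\eta$ — is monotone, i.e.\ $H$ depends affinely and monotonically on $\eta$ through a positive semidefinite perturbation. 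The Jacobian of this transformation, together with the Sobolev regularity $\rho \in W^{1,1}(\RR)$, is what transfers the density $\rho$ to a new density whose total variation is controlled by $\lVert \rho' \rVert_{L^1}$, and this is where the factor $\lVert \rho' \rVert_{L^1}$ and the constant $C = C(u)$ in the bound come from.

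Next I would fix all random variables except the distinguished one and write, by the resolvent identity for a positive rank-one (or finite-rank) perturbation, the Green function element $G_{\omega,\Lambda}(z;x,y)$ as $\langle \delta_x, (A + \eta \lambda W)^{-1}\delta_y\rangle$ where $W \geq 0$ is diagonal and $A$ is dissipative for $\im z \neq 0$ and independent of $\eta$. Applying Lemma~\ref{lemma:monotone} with $M_1 = |\delta_x\rangle\langle\delta_x|$ type projections (more precisely with $M_1, M_2$ chosen to extract the $(x,y)$ entry) gives a weak-$L^1$ bound in $\eta$ for $|G_{\omega,\Lambda}(z;x,y)|$ with constant proportional to $1/\overline u$ (since $W$ is built from $u$ and its $\ell^1$-mass $\overline u$ controls $\lVert W^{-1/2}\rVert$). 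Then I would integrate the weak-$L^1$ bound against the transformed density: for any non-negative $g$ with $\mathcal L\{g > t\} \leq K/t$, one has $\int g^s \,\tilde\rho(\eta)\,d\eta \leq \frac{1}{1-s}K^s \lVert \tilde\rho\rVert_\infty^{1-s}\lVert\tilde\rho\rVert_{L^1}^{\,?}$ — more cleanly, the layer-cake formula gives $\int g^s\,\tilde\rho\,d\eta \leq \frac{1}{1-s}K^s$ after absorbing the relevant norm of $\tilde\rho$, which is exactly the $\frac{1}{1-s}$ prefactor and the $K^s \sim (2\lVert\rho'\rVert_{L^1}C/\overline u)^s$ factor in the claimed inequality. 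The power $\lambda^{-s}$ is produced by the $\lambda$ in front of $W$.

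Finally I would integrate over the remaining (untransformed) variables, using that the bound just obtained is uniform in them, so Fubini closes the estimate; the result is uniform in $\Lambda$, in $z \in \CC\setminus\RR$, and in $x,y \in \Lambda$, as required. The main obstacle I expect is the bookkeeping of the change of variables: one must verify that the linear substitution recovering monotonicity can be chosen with a Jacobian bounded in terms of $u$ alone, that it maps $W^{1,1}$ densities to finite-total-variation densities with the stated norm control, and that the resulting perturbation $W$ is uniformly non-degenerate (this is precisely where $\overline u \neq 0$ and the compact support of $u$ are essential). Everything downstream — dissipativity of $A$, the weak-$L^1$ bound, the layer-cake integration — is routine once this reduction is set up correctly; the details are carried out in \cite[Appendix]{ElgartTV-11}.
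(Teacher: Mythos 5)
Your proposal matches the approach the paper itself indicates: transform the probability space to a distinguished variable $\eta$ along which, thanks to $\overline u \neq 0$, the Hamiltonian becomes monotone, then apply the weak-$L^1$ bound of Lemma~\ref{lemma:monotone} and integrate it via the layer-cake formula against the transformed density, whose regularity is controlled by $\lVert \rho' \rVert_{L^1}$. The paper only sketches this strategy (referring to the Appendix of \cite{ElgartTV-11} for details), and your outline reproduces its key steps and correctly accounts for where each factor in the bound originates.
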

The disadvantage of Theorem \ref{theorem:apriori_average} is that it is non-local in the sense that we have to average with respect of the entire disorder present in the model. 
At the moment we do not know how to conclude the decay of fractional moments of the 
Green function (cf.\ Definition \ref{def:fmb}) from the non-local a priori bound in Theorem \ref{theorem:apriori_average}.
A somewhat stronger condition, however, is sufficient to ensure decay of fractional moments.
We will review this result next.
\begin{assumption} \label{ass:monotone}
Assume that
\begin{enumerate}
 \item 
 the measure $\mu$ has a bounded, compactly supported density $\rho \in L^{\infty} (\RR)$.
\item 
$\Theta := \supp u$ is finite and $u (k) > 0$ for $k \in \partial^{\rm i} \Theta$.
\end{enumerate}
\end{assumption}
Under Assumption \ref{ass:monotone} and with the help of Ineq. \eqref{eq:average_norm} it is possible to prove a local a priori bound, which is applicable to conclude the decay of fractional moments of the Green function, see Section~\ref{sec:fmb_decay}. 
Let us first introduce some more notation.
For $x \in \ZZ^d$ we denote by $\cN (x) = \{k \in \ZZ^d : |x-k|_1 = 1\}$ the neighborhood of $x$. For $\Lambda \subset \ZZ^d$ we also define $\Lambda_x = \Lambda + x = \{k \in \ZZ^d : k-x \in \Lambda\}$. 
\begin{lemma}[\cite{ElgartTV-11}] \label{lemma:bounded}
Let Assumption~\ref{ass:monotone} be satisfied, $\Gamma \subset \ZZ^d$, $m > 0$ and $s \in (0,1)$.
\begin{enumerate}[(a)]
 \item Then there is a constant $C_s$, depending only on $d$, $\rho$, $u$, $m$ and $s$, such that for all $z \in \CC \setminus \RR$ with $|z| \leq m$, all $x,y \in \Gamma$ and all $b_x,b_y \in \ZZ^d$ with $x \in \Theta_{b_x}$ and $y \in \Theta_{b_y}$
\[
\EE_{N} \Bigl\{ \bigl\lvert G_{\omega,\Gamma} (z;x,y) \bigr\rvert^{s/(2|\Theta|)} \Bigr\} \leq C_s \Xi_s (\lambda),
\]
where $\Xi_s (\lambda) = \max \{ \lambda^{- s/(2 \lvert \Theta \rvert)} , \lambda^{-2s} \}$ and $N = \{b_x , b_y\} \cup \cN (b_x) \cup \cN (b_y)$.
\item Then there is a constant $D_s$, depending only on $d$, $\rho$, $u$ and $s$, such that for all $z \in \CC \setminus \RR$, all $x,y \in \Gamma$ and all $b_x,b_y \in \ZZ^d$ with
\[
x \in \Theta_{b_x} \cap \Gamma \subset \partial^{\rm i} \Theta_{b_x} \quad \text{and} \quad
y \in \Theta_{b_y} \cap \Gamma \subset \partial^{\rm i} \Theta_{b_y}
\]
we have
\[
 \EE_{\{b_x , b_y\}} \Bigl\{ \bigl\lvert G_{\omega,\Gamma} (z;x,y) \bigr\rvert^s \Bigr\} \leq D_s \lambda^{-s} .
\]
\end{enumerate}
\end{lemma}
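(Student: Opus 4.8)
The plan is to reduce both parts to a \emph{one-variable} averaging problem for a finite matrix of the form $A+\lambda\omega_b D_b$, and then to invoke either Lemma~\ref{lemma:monotone} or inequality~\eqref{eq:average_norm}. The common starting point is a decoupling observation: since the single-site potential enters $H_{\omega,\Gamma}$ diagonally and $u(k-b)=0$ for $k\notin\Theta_b$, the variable $\omega_b$ affects $H_{\omega,\Gamma}$ only through the diagonal block on $\ell^2(\Theta_b\cap\Gamma)$; in particular $H_{\omega,\Gamma\setminus(\Theta_b\cap\Gamma)}$ does not depend on $\omega_b$. Applying the Schur complement (Feshbach) formula with respect to $\ell^2(\Gamma)=\ell^2(\Theta_b\cap\Gamma)\oplus\ell^2(\Gamma\setminus(\Theta_b\cap\Gamma))$, the compression of $(H_{\omega,\Gamma}-z)^{-1}$ to $\ell^2(\Theta_b\cap\Gamma)$ equals $(A_b+\lambda\omega_b D_b)^{-1}$, with $D_b=\mathrm{diag}(u(k-b))_{k\in\Theta_b\cap\Gamma}$ and $A_b$ independent of $\omega_b$. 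Since $\Theta=\supp u$, all diagonal entries of $D_b$ are nonzero, so $D_b$ is invertible; and under Assumption~\ref{ass:monotone}(2), if moreover $\Theta_b\cap\Gamma\subseteq\partial^{\rm i}\Theta_b$, then $D_b$ is strictly positive definite.

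For part~(b) this positivity is available at both ends: by hypothesis $x-b_x\in\partial^{\rm i}\Theta$ and $y-b_y\in\partial^{\rm i}\Theta$, hence $u(x-b_x)>0$, $u(y-b_y)>0$. I would fix all randomness except $\omega_{b_x}$, localize $G_{\omega,\Gamma}(z;x,\cdot)$ to $\ell^2(\Theta_{b_x}\cap\Gamma)$ as above (WLOG $\im z<0$, so $A_{b_x}$ is dissipative), and apply Lemma~\ref{lemma:monotone} with $M_1,M_2$ the rank-one projections onto $\delta_x,\delta_y$. The crucial feature of Lemma~\ref{lemma:monotone} is that its bound does \emph{not} involve $A_{b_x}$, and the factors $\lVert M_i(\lambda D_{b_x})^{-1/2}\rVert_{\rm HS}$ equal $(\lambda\,u(\cdot-b_x))^{-1/2}\le(\lambda\min_{k\in\partial^{\rm i}\Theta}u(k))^{-1/2}$. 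Converting the resulting weak-$L^1$ bound into a bound on $\EE_{b_x}\{|G_{\omega,\Gamma}(z;x,y)|^s\}$ via the layer-cake identity $\int f^s=s\int_0^\infty t^{s-1}\mathcal L\{f>t\}\,\drm t$ — using that $\rho$ is bounded with bounded support — produces a factor $\lambda^{-s}$; integrating the remaining variable $\omega_{b_y}$ in the same way, the cross-factor handled by a second localization around $b_y$ (and localizing once to $(\Theta_{b_x}\cup\Theta_{b_y})\cap\Gamma$ when these boxes overlap or $b_x=b_y$), and collecting constants gives $D_s\lambda^{-s}$ with $D_s$ depending only on $d,\rho,u,s$.

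For part~(a) the points $x,y$ need only satisfy $x\in\Theta_{b_x}$, $y\in\Theta_{b_y}$, so $D_{b_x},D_{b_y}$ are merely invertible, not sign-definite, and Lemma~\ref{lemma:monotone} is unavailable; I would instead use \eqref{eq:average_norm}, which tolerates any invertible $V$ but pays with the background norm $\lVert A\rVert$ and with the smaller exponent $s/n$, $n=|\Theta_{b_x}\cap\Gamma|\le|\Theta|$. Since $|\det D_{b_x}|\ge(\min_{k\in\Theta}|u(k)|)^n$, the denominator in \eqref{eq:average_norm} is $\gtrsim\lambda^s$; the difficulty is that the Schur complement $A_{b_x}$ contains the resolvent of $H_{\omega,\Gamma\setminus(\Theta_{b_x}\cap\Gamma)}$, whose norm is not controlled uniformly in $z$. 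To circumvent this I would replace the Schur step by a geometric resolvent peel, $G_{\omega,\Gamma}(z;x,y)=\sum G_{\omega,\Theta_{b_x}\cap\Gamma}(z;x,a)\,G_{\omega,\Gamma}(z;a',y)$ (sum over neighbouring pairs $a\in\partial^{\rm i}(\Theta_{b_x}\cap\Gamma)$, $a'\in\partial^{\rm o}(\Theta_{b_x}\cap\Gamma)$), because the genuine local-box operator $H_{\omega,\Theta_{b_x}\cap\Gamma}-z$ has norm at most $2d+\lambda\lVert u\rVert_{\ell^1}\sup\{|t|:t\in\supp\mu\}+m$ — this is where $|z|\le m$ enters — so \eqref{eq:average_norm} applies with a genuinely constant background, and integrating $\omega_{b_x}$ gives a clean $\lambda$-power. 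The residual Green function $G_{\omega,\Gamma}(z;a',y)$ is $\omega_{b_x}$-independent and is treated on the $y$-side by the same peel around $b_y$ followed by integration of $\omega_{b_y}$, with the boundary/background terms produced along the way absorbed by averaging over the neighbour variables in $N=\{b_x,b_y\}\cup\cN(b_x)\cup\cN(b_y)$. Each peel multiplies the working exponent by a factor $\le 1/|\Theta|$ and replaces a Green function by a finite sum of Green functions, which — via $\lvert\textstyle\sum\rvert^{\sigma}\le\sum\lvert\cdot\rvert^{\sigma}$ for $\sigma<1$ — is harmless; this is what forces the exponent $s/(2|\Theta|)$, and balancing the two $\lambda$-powers coming from \eqref{eq:average_norm} (which behaves like $\lambda^{-s_\star}$ for small $\lambda$, and, since the potential term makes the effective background scale like $\lambda$, like $\lambda^{-s_\star/n}$ for large $\lambda$) produces $\Xi_s(\lambda)=\max\{\lambda^{-s/(2|\Theta|)},\lambda^{-2s}\}$; the constant $C_s$ depends on $d,\rho,u,m,s$.

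The main obstacle is precisely the non-monotonicity. In the monotone case one has the strong, background-free bound of Lemma~\ref{lemma:monotone}, which makes part~(b) essentially routine; in part~(a) this fails at the interior sites where $u$ changes sign, and \eqref{eq:average_norm} — the only replacement surviving sign-indefiniteness — comes with a background norm that, for Schur complements, is the uncontrolled, $z$-dependent resolvent on the complement of the box. Reconciling this conflict is the technical heart of the proof: one must reorganize the estimate so that \eqref{eq:average_norm} is only ever applied to honest finite local-box operators (whence the need for $|z|\le m$ and for the enlarged averaging set $N$), while keeping the fractional exponent inside $(0,1)$ after the repeated peels. It is also why the a priori bound in (a) is qualitatively weaker — a smaller power of $|G_{\omega,\Gamma}(z;x,y)|$, a larger averaging set, and a $\lambda$-dependence that does not vanish as cleanly — than the estimate of part~(b).
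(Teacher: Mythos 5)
Your high-level picture — Schur-complement the random block, dissipativity of the self-energy for $\im z<0$, Lemma~\ref{lemma:monotone} where $u$ is sign-definite, inequality~\eqref{eq:average_norm} where it is not, and a layer-cake conversion of the weak-$L^1$ bound — matches the ingredients the text names, and you correctly identify the roles of $\lvert z\rvert\le m$ and of the enlarged averaging set $N$. The review paper does not reproduce the proof (it cites \cite{ElgartTV-11}), so the comparison is against the structure the surrounding text describes.

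However, your argument for part~(b) does not close, and the gap is genuine. When $\lvert x-y\rvert>\diam\Theta$ the point $y$ lies outside $\Theta_{b_x}\cap\Gamma$, so after Schur-reducing to $\Theta_{b_x}\cap\Gamma$ the matrix $M_2$ in Lemma~\ref{lemma:monotone} cannot be the projection onto $\delta_y$; it must be the off-block coupling $P_{\Theta_{b_x}\cap\Gamma}\,H\,(H_{\omega,\Gamma\setminus(\Theta_{b_x}\cap\Gamma)}-z)^{-1}\delta_y$, whose $V^{-1/2}$-weighted Hilbert–Schmidt norm is of order $\lVert (H_{\omega,\Gamma\setminus(\Theta_{b_x}\cap\Gamma)}-z)^{-1}\rVert$ and blows up as $\im z\to 0$. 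Repeating the peel on the $y$-side and integrating $\omega_{b_y}$ does not cure this: one is left with $G_{\omega,\Gamma\setminus(\Theta_{b_x}\cap\Gamma)\setminus(\Theta_{b_y}\cap\Gamma')}(z;a',c)$ evaluated at boundary sites $a',c$ that lie in neither $\Theta_{b_x}$ nor $\Theta_{b_y}$; this factor is independent of $\omega_{b_x},\omega_{b_y}$, is uncontrolled in $z$, and in part~(b) there are \emph{no} additional variables in the averaging set to attack it ($N$ is just $\{b_x,b_y\}$). What is missing is a \emph{joint} Schur reduction to $\hat\Lambda=(\Theta_{b_x}\cup\Theta_{b_y})\cap\Gamma$, on which $x$ and $y$ both live, followed by a two-parameter spectral-averaging estimate in $(\omega_{b_x},\omega_{b_y})$ for $(A+\lambda\omega_{b_x}D_{b_x}+\lambda\omega_{b_y}D_{b_y})^{-1}$; the relevant positivity is then that of $D_{b_x}+D_{b_y}$ on $\hat\Lambda$, since neither $D_{b_x}$ nor $D_{b_y}$ alone is definite there. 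Lemma~\ref{lemma:monotone} as stated, with a single random coefficient and a single $V$, does not cover this. Your account of part~(a) is more speculative in the same spot: the geometric-resolvent peel leaves the same kind of residual far-field Green function, and while the neighbour variables $\cN(b_x)\cup\cN(b_y)$ are indeed there to tame it, you would need to explain how those extra one-site averages bound a factor like $G_{\omega,\Gamma\setminus(\hat\Lambda_x\cup\hat\Lambda_y)}(z;a',c)$ without triggering the same regression; the text's remark that ``$A$ arises by a Schur-complement formula and has a complicated dependence on the randomness'' suggests that \cite{ElgartTV-11} faces this head-on via the Schur complement and Assumption~\ref{ass:monotone}, rather than avoiding it by a depletion expansion as you propose.
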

\subsection{Decay of fractional moments} \label{sec:fmb_decay}
Now we explain how the so called finite volume criterion implies exponential decay of the Green function. 
Together with the a-priori bound from Lemma \ref{lemma:bounded} this gives us Theorem~\ref{thm:result1}. For proofs we refer the reader to \cite{ElgartTV-11}.
\par
To formulate the results of this section we will need the following
notation: Let $\Gamma \subset \ZZ^d$, fix $L \ge \diam \Theta + 2$, let
\[
B = \partial^{\rm i} \Lambda_L,
\]
and define the sets
\[
\hat \Lambda_x = \{ k \in \Gamma : k \in \Theta_b \text{ for some $b \in \Lambda_{L,x}$} \}
\]
and
\begin{equation*} \label{eq:Wx}
 \hat W_x = \{ k \in \Gamma : k \in \Theta_b \text{ for some $b \in B_x$} \} .
\end{equation*}
Recall that for $\Gamma \subset \ZZ^d$ we denote by $\Gamma_x = \Gamma + x = \{k \in \ZZ^d : k-x \in \Gamma\}$ the translate of $\Gamma$. 
Hence $(\Lambda_L)_x=\Lambda_{L,x}$ and 
$\hat W_x$ is the union of translates of $\Theta$ along the sides of $B_x$, restricted to the set $\Gamma$. For $\Gamma \subset \ZZ^d$ we can now introduce the sets
\[
\Lambda_x: = \hat \Lambda_x^+ \cap \Gamma \quad \text{and} \quad W_x = \hat W_x^+ \cap \Gamma
\]
which will play a role in the assertions below.

\begin{theorem}[\cite{ElgartTV-11}, Finite volume criterion]
\label{thm:exp_decay}
Suppose that Assumption \ref{ass:monotone} is satisfied.
Let $\Gamma \subset \ZZ^d$, $z \in \CC \setminus \RR$ with $\lvert z \rvert \leq m$ and $s \in (0,1/3)$. 
Then there exists a constant $B_s$ which depends only on $d$, $\rho$, $u$, $m$, $s$,
such that if the condition
\begin{equation}\label{eq:fin_cond}
b_s(\lambda, L,\Lambda): = \frac{B_s L^{3(d-1)} \Xi_s (\lambda)}{\lambda^{2s/(2\lvert
\Theta \rvert)}} \!\!\! \sum_{w\in\partial^{\rm o} W_x} \!\!\! \mathbb{E}
\bigl\{\lvert G_{\omega , \Lambda\setminus W_x} (z;x,w)\rvert^{s/(2\lvert
\Theta \rvert)}\bigr\}< b
\end{equation}
is satisfied for some $b \in (0,1)$, arbitrary $\Lambda \subset \Gamma$, and all $x\in
\Lambda$, then for all $x,y \in \Gamma$
\begin{equation*} 
\mathbb{E} \bigl\{\lvert G_{\omega,\Gamma} (z;x,y)\rvert^{s/(2\lvert \Theta
\rvert)} \bigr\}\leq A \euler^{-\mu|x-y|_\infty} .
\end{equation*}
Here
\[
A=\frac{C_s \Xi_s (\lambda)}{b} \quad \text{and} \quad
\mu=\frac{\lvert \ln b \rvert}{L+\diam \Theta + 2}\,,
\] 
with $C_s$ as in Lemma \ref{lemma:bounded}.
\end{theorem}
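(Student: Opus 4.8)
The plan is to prove the finite-volume criterion by a bootstrapped resolvent-expansion (geometric-resolvent / depletion) argument, of exactly the kind used to derive exponential decay of fractional moments from a finite-volume smallness hypothesis in the monotone Aizenman--Molchanov framework, but carried out with the non-standard fractional exponent $s/(2\lvert\Theta\rvert)$ dictated by Lemma~\ref{lemma:bounded}(a). The two inputs I would combine are: (1) the local a priori bound of Lemma~\ref{lemma:bounded}(a), which gives $\EE_N\{\lvert G_{\omega,\Gamma}(z;x,y)\rvert^{s/(2\lvert\Theta\rvert)}\}\le C_s\Xi_s(\lambda)$ after averaging over the finitely many coupling constants indexed by $N=\{b_x,b_y\}\cup\cN(b_x)\cup\cN(b_y)$; and (2) a resolvent identity that relates $G_{\omega,\Gamma}(z;x,y)$ to $G_{\omega,\Lambda\setminus W_x}(z;x,w)$ for $w$ on the outer boundary of the depleted region $W_x$, times a Green-function factor living on a translate of $\Gamma$ through the site $w$.

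The key steps, in order, are as follows. \emph{Step 1 (geometric decoupling).} For $x\in\Gamma$ and $\lvert x-y\rvert_\infty\ge L$, apply the resolvent identity for the pair of operators $H_{\omega,\Gamma}$ and $H_{\omega,(\Lambda\setminus W_x)\sqcup(\Gamma\setminus\Lambda)}$ (i.e. cut the bonds straddling $\partial^{\rm o}W_x$) to write $G_{\omega,\Gamma}(z;x,y)$ as a sum over $w\in\partial^{\rm o}W_x$ of $G_{\omega,\Lambda\setminus W_x}(z;x,w)$ times an off-diagonal Laplacian matrix element times $G_{\omega,\Gamma\setminus\Lambda_x}(z;w',y)$ for a neighbour $w'$ of $w$. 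The geometry is set up precisely so that the two factors depend on \emph{disjoint} blocks of random variables once one conditions appropriately: the first factor is measurable with respect to the $\omega_b$ with $b$ hitting $W_x$, and the key point of the depletion set $W_x$ is that it insulates the inner region from the $\omega_b$'s governing the a priori bound at $w$. \emph{Step 2 (fractional-moment subadditivity and independence).} Raise to the power $s/(2\lvert\Theta\rvert)$, use $\lvert\sum a_i\rvert^{t}\le\sum\lvert a_i\rvert^{t}$ for $t=s/(2\lvert\Theta\rvert)<1$, take expectations, and factor the expectation of the product using independence: for each boundary term, first take the conditional expectation $\EE_{N(w)}$ of the second factor (this is where Lemma~\ref{lemma:bounded}(a) is invoked, contributing $C_s\Xi_s(\lambda)$), leaving the first factor which is independent of $N(w)$. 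This yields
\[
\EE\bigl\{\lvert G_{\omega,\Gamma}(z;x,y)\rvert^{t}\bigr\}\le C_s\Xi_s(\lambda)\sum_{w\in\partial^{\rm o}W_x}\!\EE\bigl\{\lvert G_{\omega,\Lambda\setminus W_x}(z;x,w)\rvert^{t}\bigr\}\cdot\sup\EE\bigl\{\lvert G_{\omega,\Gamma\setminus\Lambda_x}(z;w',y)\rvert^{t}\bigr\}.
\]
The combinatorial factor $L^{3(d-1)}$ and the extra $\lambda$-power in \eqref{eq:fin_cond} come from bounding $\lvert\partial^{\rm o}W_x\rvert$ and from the a priori bound; denote the whole front coefficient by $b_s(\lambda,L,\Lambda)$. \emph{Step 3 (iteration).} Under hypothesis \eqref{eq:fin_cond}, $b_s(\lambda,L,\Lambda)<b<1$, so writing $g(x,y):=\EE\{\lvert G_{\omega,\Gamma}(z;x,y)\rvert^{t}\}$ we get $g(x,y)\le b\,\sup_{w'}g(w',y)$ whenever $\lvert x-y\rvert_\infty\ge L$, with $\lvert x-w'\rvert_\infty\le L+\diam\Theta+1$. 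Iterating $\lfloor\lvert x-y\rvert_\infty/(L+\diam\Theta+2)\rfloor$ times and applying the uniform a priori bound $g\le C_s\Xi_s(\lambda)$ as the base case produces $g(x,y)\le\frac{C_s\Xi_s(\lambda)}{b}\,\euler^{-\mu\lvert x-y\rvert_\infty}$ with $\mu=\lvert\ln b\rvert/(L+\diam\Theta+2)$, which is the claimed bound.

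\textbf{The main obstacle} is Step 1–2: getting a \emph{clean} independence split. Because the single-site potentials overlap ($\Theta$ has diameter possibly $>0$), the random variables entering $G_{\omega,\Lambda\setminus W_x}$ and those entering the a priori bound at $w$ are not literally disjoint unless the depletion sets $W_x$, $\Lambda_x$ and the inflated sets $\hat\Lambda_x^+$, $\hat W_x^+$ are defined with enough padding — this is exactly why the theorem statement carries the elaborate bookkeeping of $\hat\Lambda_x$, $\hat W_x$, $\Lambda_x$, $W_x$ and the $(\cdot)^+$ inflation, and why one is forced to work with the exponent $s/(2\lvert\Theta\rvert)$ rather than $s$ (so that the $2\lvert\Theta\rvert$-fold conditional average in Lemma~\ref{lemma:bounded}(a) is available). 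Verifying that the geometric definitions indeed make the relevant sigma-algebras independent at each iteration, and that the constants $B_s$, $\mu$, $A$ can be tracked uniformly in $\Lambda\subset\Gamma$, is the technical heart; everything else is the standard fractional-moment iteration. For the detailed verification I refer to \cite{ElgartTV-11}.
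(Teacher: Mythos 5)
Your proposal matches the paper's approach: the paper singles out Lemma~\ref{lemma:iteration1} as the core of the proof, which is exactly your Steps~1--2 (geometric resolvent expansion through the depleted annulus $W_x$, subadditivity of $t\mapsto t^{s/(2|\Theta|)}$, and factoring the expectation by independence to absorb one factor into $b_s(\lambda,L,\Gamma)$), and then iterates it with Lemma~\ref{lemma:bounded}(a) as the base case, which is your Step~3. The only nit is a small off-by-one in your displacement bookkeeping (you write $|x-w'|_\infty\le L+\diam\Theta+1$ but iterate with step $L+\diam\Theta+2$); this doesn't affect the structure, and your identification of the real technical burden (arranging the padding so the sigma-algebras genuinely split) is also where the paper defers to \cite{ElgartTV-11}.
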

Note that condition \eqref{eq:fin_cond} can be achieved by choosing $\lambda$ sufficiently big
and applying Lemma \ref{lemma:bounded}.
The core of the proof of the theorem is the following Lemma \ref{lemma:bounded}.
\begin{lemma}[\cite{ElgartTV-11}] \label{lemma:iteration1}
Let Assumption \ref{ass:monotone} be satisfied. Let $\Gamma \subset \ZZ^d$, $s \in (0, 1/3)$, $m > 0$ 
and $b_s (\lambda,L,\Lambda)$ be the constant from Theorem \ref{thm:exp_decay}. 
Then we have for all $x,y \in \Gamma$ with $y \not \in \Lambda_x$ and all $z \in \CC \setminus \RR$ with $\lvert z \rvert \leq m$ the bound
\begin{equation*}\label{eq:protobound}
\mathbb{E} \bigl\{\lvert G_{\omega,\Gamma} (z;x,y)\rvert^{\frac{s}{2\lvert
\Theta \rvert}}\bigr\}\leq
\frac{b_s(\lambda, L,\Gamma)}{|\partial^{\rm o} \Lambda_x|} \sum_{r\in\partial^{\rm o}
\Lambda_x} \mathbb{E}\bigl\{ \lvert G_{\omega , \Gamma\setminus
\Lambda_x}(z;r,y)\rvert^{\frac{s}{2\lvert \Theta \rvert}} \bigr\} .
\end{equation*}
\end{lemma}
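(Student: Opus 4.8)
\emph{Strategy.} The plan is to establish the inequality as one ``geometric step'' of the fractional moment method, built from three ingredients: a \emph{geometric resolvent identity}, the elementary subadditivity $\lvert a+b\rvert^{t}\le\lvert a\rvert^{t}+\lvert b\rvert^{t}$ (valid since $t:=\tfrac{s}{2\lvert\Theta\rvert}\in(0,1)$), and the a priori bound of Lemma~\ref{lemma:bounded}(a). The enlargements $\Lambda_x=\hat\Lambda_x^{+}\cap\Gamma$, $W_x=\hat W_x^{+}\cap\Gamma$ are built precisely so that, after the identity is applied, the Green function factors split into (i) a \emph{box} factor that coincides with a Green function of a cube of side $O(L)$ about $x$ and hence depends only on the couplings $\omega_b$ near $x$, (ii) a short \emph{shell} factor across $W_x$, controlled by Lemma~\ref{lemma:bounded}(a), and (iii) the \emph{distant} factor $G_{\omega,\Gamma\setminus\Lambda_x}(z;\cdot,y)$, whose couplings sit outside $\Lambda_x$ --- the three families of $\omega_b$'s being pairwise disjoint.

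\emph{Steps.} Fix $x,y\in\Gamma$ with $y\notin\Lambda_x$ and $z\in\CC\setminus\RR$, $\lvert z\rvert\le m$. First I would apply the geometric resolvent identity to peel the block $\Lambda_x$ off $\Gamma$: as $y\in\Gamma\setminus\Lambda_x$, this writes $G_{\omega,\Gamma}(z;x,y)$ as a sum over neighbouring pairs $u'\in\partial^{\rm i}\Lambda_x$, $r\in\partial^{\rm o}\Lambda_x$ of a Green function between $x$ and $u'$ times $G_{\omega,\Gamma\setminus\Lambda_x}(z;r,y)$. I would then expand the factor between $x$ and $u'$ by a second application of the identity, across the shell $W_x$, producing a box Green function $G_{\omega,\Gamma\setminus W_x}(z;x,w)$ with $w\in\partial^{\rm o}W_x$ times a short factor between two vertices of $W_x$. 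Raising the whole expansion to the power $t$ and using subadditivity moves $t$ inside the finitely many boundary sums. Next I would take $\EE$ and condition on all couplings except the bounded set $N$ near $W_x$ that governs the short factor: by Lemma~\ref{lemma:bounded}(a) --- whose proof rests on \eqref{eq:average_norm}, which is where $\lvert z\rvert\le m$ enters --- its conditional expectation is $\le C_{s}\,\Xi_{s}(\lambda)$, and this is also where the remaining power of $\lambda$ in $b_s$ is gained; since the box and distant factors are measurable for the conditioning $\sigma$-algebra and are mutually independent, the surviving expectation factorises as $\EE\{\lvert G_{\omega,\Gamma\setminus W_x}(z;x,w)\rvert^{t}\}\cdot\EE\{\lvert G_{\omega,\Gamma\setminus\Lambda_x}(z;r,y)\rvert^{t}\}$. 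A crude count of the boundary vertices entering the two applications of the identity gives the factor $L^{3(d-1)}$; recognising $\sum_{w\in\partial^{\rm o}W_x}\EE\{\lvert G_{\omega,\Gamma\setminus W_x}(z;x,w)\rvert^{t}\}$ as the sum defining $b_s(\lambda,L,\Gamma)$ and collecting the universal constants into $B_s$ then yields the claim.

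\emph{Main obstacle.} The analytic moves --- resolvent identity, subadditivity, a single integration governed by \eqref{eq:average_norm} --- are routine; the genuinely delicate point is the \emph{independence bookkeeping} forced by the non-monotonicity. Since $u$ is spread over all of $\Theta$, the finite-volume Hamiltonians on overlapping regions share coupling constants, so the factorisation above must be earned: one has to check that $\hat\Lambda_x$, $\hat W_x$ and the operation $(\cdot)^{+}$ are large enough that $W_x$ disconnects the cube about $x$ from the exterior (so that $G_{\omega,\Gamma\setminus W_x}(z;x,w)$ truly reduces to a cube Green function, vanishing for $w$ on the outer face of $W_x$), and that this cube, the set $N$ near $W_x$ required by Lemma~\ref{lemma:bounded}(a), and the exterior region $\Gamma\setminus\Lambda_x$ carrying the distant factor involve pairwise disjoint blocks of $\{\omega_b\}_{b\in\ZZ^d}$. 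Verifying these inclusions, and that Assumption~\ref{ass:monotone}, the bound $\lvert z\rvert\le m$ and the restriction $s\in(0,1/3)$ are exactly what the fractional-moment estimates require, is where the care lies; granting them, the bound assembles as in the steps above.
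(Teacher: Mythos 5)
The review paper states Lemma~\ref{lemma:iteration1} without proof, citing \cite{ElgartTV-11}, so this assessment is against the strategy that the surrounding material (Lemma~\ref{lemma:bounded}, the definitions of $\hat\Lambda_x$, $\hat W_x$, the $(\cdot)^+$ fattening, and the form of $b_s$ in \eqref{eq:fin_cond}) is evidently built to support. Your outline — two geometric resolvent identities (across $\partial\Lambda_x$ and across the shell $W_x$) producing box, shell and distant Green function factors, subadditivity of $t\mapsto t^{s/(2\lvert\Theta\rvert)}$, the a priori bound applied to a conditional expectation of the shell factor, and independence of the remaining box and distant factors — is precisely this expected strategy, and your ``main obstacle'' paragraph correctly isolates the independence bookkeeping (forced by $\supp u = \Theta$ being spread out, which the $(\cdot)^+$ fattenings exist to absorb) as the genuinely non-monotone point of the argument.

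Two places where the sketch is thinner than the target inequality demands. First, you invoke only Lemma~\ref{lemma:bounded}(a) for the shell factor, but the definition of $b_s$ carries an \emph{additional} $\lambda^{-2s/(2\lvert\Theta\rvert)}$ on top of the $\Xi_s(\lambda)$ that part~(a) produces; this strongly indicates that more than one fractional-moment average is performed inside the shell, and in particular that Lemma~\ref{lemma:bounded}(b) — the step that actually uses Assumption~\ref{ass:monotone}(2), namely that $u$ has fixed sign on $\partial^{\rm i}\Theta$ — must also enter. Your sketch gives that assumption no role, which is a real gap: without it the shell Green function between boundary sites cannot be averaged in $\omega_{b_x},\omega_{b_y}$ alone. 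Second, the claim that ``two applications of the identity give the factor $L^{3(d-1)}$'' is not self-consistent with the statement you are proving: $b_s$ already contains $L^{3(d-1)}$ \emph{and} the sum $\sum_{w\in\partial^{\rm o}W_x}$ (another $O(L^{d-1})$), while the outer $\sum_{r\in\partial^{\rm o}\Lambda_x}$ is normalised by $\lvert\partial^{\rm o}\Lambda_x\rvert$ — so the boundary-sum bookkeeping (which sums are kept as sums, which are bounded by their cardinality, and which are normalised away) needs to be carried out explicitly rather than waved through.
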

The combination of Theorem    \ref{thm:exp_decay}     and Lemma  \ref{lemma:bounded}
yield the following result on exponential decay of a fractional moment of the 
Green function under a strong disorder assumption.
\begin{theorem}[\cite{ElgartTV-11}] \label{thm:result1}
Let $\Gamma \subset \ZZ^d$, $s \in (0,1/3)$ and suppose that Assumption \ref{ass:monotone} is satisfied. 
Then for a sufficiently large $\lambda$ there are constants $\mu,A \in \RR^+$, 
depending only on $d$, $\rho$, $u$,  $s$ and $\lambda$, 
such that for all $z \in \CC \setminus \RR$ and all $x,y \in \Gamma$
\begin{equation*} \label{eq:result1}
\mathbb{E} \bigl\{\lvert G_{\omega,\Gamma} (z;x,y)\rvert^{s/(2\lvert \Theta \rvert)}\bigr\}\leq A \euler^{-\mu|x-y|_\infty} .
\end{equation*}
\end{theorem}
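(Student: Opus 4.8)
The plan is to combine the a-priori bound from Lemma~\ref{lemma:bounded}(a) with the finite volume criterion of Theorem~\ref{thm:exp_decay}. The whole point is that Theorem~\ref{thm:exp_decay} produces the desired exponential decay the moment we can verify its hypothesis \eqref{eq:fin_cond} for \emph{some} admissible scale $L$; so the task reduces to: (i) fix an appropriate $L$, and (ii) show that for $\lambda$ large the quantity $b_s(\lambda,L,\Lambda)$ can be made smaller than some $b \in (0,1)$, uniformly in $\Lambda \subset \Gamma$ and in $x \in \Lambda$. Once that is done, Theorem~\ref{thm:exp_decay} directly delivers the claimed estimate with $\mu = \lvert \ln b\rvert/(L+\diam\Theta+2)$ and $A = C_s \Xi_s(\lambda)/b$, which are indeed functions of $d,\rho,u,s,\lambda$ only.

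First I would fix any $L \ge \diam\Theta + 2$ (so that the sets $B$, $\hat\Lambda_x$, $\hat W_x$ and hence $W_x$ are well defined) and note that the number of terms in the sum $\sum_{w \in \partial^{\mathrm o} W_x}$ appearing in \eqref{eq:fin_cond} is bounded purely in terms of $d$, $L$ and $\diam\Theta$, hence by a constant independent of $\lambda$. Next, for each such boundary site $w$ I would apply Lemma~\ref{lemma:bounded}(a) to the finite set $\Gamma' = \Lambda \setminus W_x$: since $x$ and $w$ lie in translates $\Theta_{b_x}$, $\Theta_{b_w}$ of the single-site support (this is exactly why the site $w \in \partial^{\mathrm o} W_x$ is taken just outside $W_x$, so the requisite $b_x,b_w$ exist), the lemma gives
\[
\EE_N\bigl\{\lvert G_{\omega,\Lambda\setminus W_x}(z;x,w)\rvert^{s/(2\lvert\Theta\rvert)}\bigr\} \le C_s\,\Xi_s(\lambda),
\]
and by positivity of the integrand the full expectation $\EE$ is bounded by the same quantity (conditioning only increases or leaves unchanged a nonnegative expectation after averaging out the remaining variables; concretely $\EE\{\cdot\} = \EE\{\EE_N\{\cdot\}\} \le C_s\Xi_s(\lambda)$). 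Substituting into \eqref{eq:fin_cond} yields
\[
b_s(\lambda,L,\Lambda) \le B_s\, L^{3(d-1)}\, \lvert \partial^{\mathrm o} W_x\rvert\, \frac{\Xi_s(\lambda)^2}{\lambda^{2s/(2\lvert\Theta\rvert)}},
\]
with the prefactor $B_s L^{3(d-1)}\lvert\partial^{\mathrm o} W_x\rvert$ depending only on $d,\rho,u,m,s,L$.

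The last step is to examine the $\lambda$-dependence. Recall $\Xi_s(\lambda) = \max\{\lambda^{-s/(2\lvert\Theta\rvert)},\lambda^{-2s}\}$, so for $\lambda \ge 1$ we have $\Xi_s(\lambda) = \lambda^{-s/(2\lvert\Theta\rvert)}$ and the right-hand side above is a constant times $\lambda^{-3s/(2\lvert\Theta\rvert)}$, which tends to $0$ as $\lambda \to \infty$. Hence there is $\lambda_0 = \lambda_0(d,\rho,u,m,s,L)$ such that for all $\lambda \ge \lambda_0$ we have $b_s(\lambda,L,\Lambda) \le b := 1/2 < 1$ for every $\Lambda \subset \Gamma$ and every $x \in \Lambda$; since $L$ was chosen as a fixed function of $\Theta$, the threshold $\lambda_0$ depends only on $d,\rho,u,m,s$ (one may take $m$ fixed, e.g.\ absorb it, or keep the statement for $\lvert z\rvert \le m$ as in the hypotheses). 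Theorem~\ref{thm:exp_decay} then applies and gives the conclusion. The main obstacle — really the only nonroutine point — is the uniformity in $\Lambda$ and in the center $x$: one must make sure the constants $C_s$, $B_s$ and the combinatorial bound on $\lvert\partial^{\mathrm o} W_x\rvert$ genuinely do not depend on which finite subset $\Lambda$ or which site $x$ is chosen, which is precisely what Lemma~\ref{lemma:bounded}(a) and the geometric definitions of $W_x$ are designed to guarantee; everything else is bookkeeping. (For the extension from $\lvert z\rvert \le m$ to all $z \in \CC\setminus\RR$ one additionally uses that $\lvert G_{\omega,\Gamma}(z;x,y)\rvert \le 1/\lvert\im z\rvert$ together with the trivial bound for large $\lvert z\rvert$, so the estimate for, say, $\lvert z\rvert \le 4d+1$ plus a crude bound elsewhere covers the whole plane; this is the routine patching already implicit in the cited reference.)
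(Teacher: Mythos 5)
Your proposal follows exactly the route the paper indicates (and explicitly announces: ``The combination of Theorem~\ref{thm:exp_decay} and Lemma~\ref{lemma:bounded} yield the following result''): fix an admissible scale $L$, use Lemma~\ref{lemma:bounded}(a) together with the tower property of conditional expectation to bound each term of the sum defining $b_s(\lambda,L,\Lambda)$ by $C_s\,\Xi_s(\lambda)$ uniformly in $\Lambda\subset\Gamma$ and $x\in\Lambda$, let $\lambda\to\infty$ so that $b_s<1$, invoke Theorem~\ref{thm:exp_decay}, and extend from $\lvert z\rvert\le m$ to all $z\in\CC\setminus\RR$ by a deterministic bound away from the spectrum. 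Two cosmetic slips do not affect the argument: the resulting power of $\lambda$ is $\lambda^{-2s/\lvert\Theta\rvert}$ rather than $\lambda^{-3s/(2\lvert\Theta\rvert)}$, and the cutoff in your final parenthetical must grow with $\lambda$ (roughly $m$ of order $2d+\lambda\sup_{\omega,x}\lvert V_\omega(x)\rvert$) since the spectrum of $H_\omega$ does.
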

\subsection{Localization}
The existing proofs of localization via the fractional moment method
use either the Simon Wolff criterion, see e.g.\
\cite{SimonW-86,AizenmanM-93,AizenmanFSH-01}, or the
RAGE-Theorem, see e.g.\
\cite{Aizenman-94,Graf-94,AizenmanENSS-06}. Neither dynamical nor
spectral localization can be directly inferred from the behavior of
the Green function using the existent methods for the model in Section \ref{sec:model}. 
The reason is that the random variables $V_\omega (x)$, $x \in \ZZ^d$,
are not independent, while the dependence of $H_\omega$ on the
i.i.d. random variables $\omega_k$, $k \in \ZZ^d$, is not
monotone.
\par
We outline a new variant for concluding exponential localization 
from bounds on averaged fractional powers of Green's function (cf.\ Section \ref{sec:fmb_decay}) without using the multiscale
analysis, see \cite{ElgartTV-10, ElgartTV-11} for details. This is done by showing that fractional moment bounds imply the ``typical output'' of the multiscale analysis, i.e.\ the
hypothesis of Theorem 2.3 in \cite{DreifusK-89}. Then one can
conclude localization using existent methods.

The next Proposition states that certain bounds on averaged fractional
moments of Green's function 
imply the hypothesis of Theorem 2.3 in \cite{DreifusK-89} (without applying the induction step of the multiscale analysis).
\begin{proposition}[\cite{ElgartTV-11}] \label{prop:replace-msa}
Let $I \subset \RR$ be a bounded interval and $s \in (0,1)$. Assume the following two statements:
\begin{enumerate}[(i)]
\item There are constants $C,\mu \in \RR^+$ and $L_0 \in \NN_0$
such that 
\[
\EE \bigl\{ \lvert G_{\omega , \Lambda_{L,k}} (E;x,y) \rvert^{s} \bigr\} \leq C \euler^{-\mu \lvert x-y \rvert_\infty}
\]
for all $k \in \ZZ^d$, $L \in \NN$, $x,y \in \Lambda_{L,k}$ with $\lvert x-y \rvert_\infty \geq L_0$, and all $E \in I$.
\item There is a constant $C' \in \RR^+$ such that 
\[
\EE \bigl\{ \lvert G_{\omega , \Lambda_{L,k}} (E+\i \epsilon ;x,x) \rvert^{s} \bigr\} \leq C'
\]
for all $k \in \ZZ^d$, $L \in \NN$, $x \in \Lambda_{L,k}$, $E \in I$ and all $\epsilon\in (0,1]$ .
\end{enumerate}
Then we have for all $L \geq \max\{ 8\ln (8)/\mu , L_0 , -(8/5\mu)\ln (\lvert I \rvert / 2)\}$ and all $x,y \in \ZZ^d$ with $\lvert x-y\rvert_\infty \geq 2L+\diam \Theta + 1$ that
\begin{multline*}
 \PP \{\forall \, E \in I \text{ either $\Lambda_{L,x}$ or $\Lambda_{L,y}$ is
$(\mu/8,E)$-regular} \}  \\ \geq 1- 8 \lvert \Lambda_{L,x} \rvert(C\lvert I \rvert  + 4C'\lvert \Lambda_{L,x} \rvert / \pi ) \euler^{-\mu sL /8} .
\end{multline*}
\end{proposition}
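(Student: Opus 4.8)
The plan is to deduce the probabilistic regularity statement from the two fractional moment hypotheses by a Chebyshev-type argument combined with the Simon--Wolff/resolvent-identity machinery, following the scheme used to feed the multiscale analysis. First I would fix $E \in I$, a scale $L$, and centres $x, y$ with $\lvert x-y\rvert_\infty \geq 2L + \diam\Theta + 1$, so that the cubes $\Lambda_{L,x}$ and $\Lambda_{L,y}$ are disjoint together with the single-site sets attached to their boundaries. The key point is that one must convert the bound on $\EE\{\lvert G_{\omega,\Lambda_{L,x}}(E;x,w)\rvert^s\}$ for $w\in\partial^{\mathrm i}\Lambda_{L,x}$ into a statement about the self-adjoint operator $H_{\omega,L,x}$ at the real energy $E$, which requires first passing to $E + \i\epsilon$, using hypothesis (ii) to control the diagonal term and the limiting absorption principle / Fatou to take $\epsilon \downarrow 0$ along a suitable subsequence, so that $E \notin \sigma(H_{\omega,\Lambda_{L,x}})$ almost surely and the Green function at $E$ makes sense. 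Hypothesis (i) with $\lvert x - w\rvert_\infty \geq L_0$ (which holds once $L$ is large) then gives $\EE\{\sup_{w\in\partial^{\mathrm i}\Lambda_{L,x}}\lvert G_{\omega,\Lambda_{L,x}}(E;x,w)\rvert^s\} \leq \lvert\partial^{\mathrm i}\Lambda_{L,x}\rvert\, C\, \euler^{-\mu L}$, and Chebyshev's inequality bounds the probability that $\Lambda_{L,x}$ is $(\mu/8, E)$-singular by roughly $C\lvert\Lambda_{L,x}\rvert\euler^{-\mu s L}\euler^{\mu s L/8}$.

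The second step is to upgrade this pointwise-in-$E$ bound to a statement uniform over all $E \in I$. Here I would use that $I$ is a bounded interval and discretize it into $N \sim \lvert I\rvert\, \euler^{\mu s L /c}$ points $E_j$ at spacing $\delta$; on each subinterval the Green function $G_{\omega,\Lambda_{L,x}}(\cdot; x, w)$ varies in a controlled way because $H_{\omega,L,x}$ is a finite-rank operator, so its resolvent has a norm bound in terms of $\dist(E, \sigma(H_{\omega,L,x}))$, and a union bound over the $E_j$ plus an argument ruling out eigenvalues of $H_{\omega,\Lambda_{L,x}}$ in a neighbourhood of $I$ on the bad set handles the interpolation. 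The contribution $4C'\lvert\Lambda_{L,x}\rvert/\pi$ in the final estimate and the appearance of the threshold $L \geq -(8/5\mu)\ln(\lvert I\rvert/2)$ strongly suggest that the uniformity over $E$ is obtained via a Cauchy-integral / Poisson-kernel representation $\int_\RR \lvert G_{\omega,L,x}(E+\i\epsilon;x,x)\rvert^s\,\drm E$ type bound controlled by hypothesis (ii), rather than a crude net argument; I would follow that route, estimating $\int_I \PP\{\Lambda_{L,x} \text{ is } (\mu/8,E)\text{-singular}\}$ and $\int_I \EE\{\cdots\}\,\drm E$ and then trading the integral for a supremum at the cost of the logarithmic scale condition.

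The third step is purely combinatorial: the event ``$\Lambda_{L,x}$ and $\Lambda_{L,y}$ are both $(\mu/8,E)$-singular for some $E$'' is contained in the union over the two cubes of ``this cube is singular for some $E \in I$'', so the failure probability is at most twice (up to constants) the single-cube bound, which after collecting the factors gives the stated $1 - 8\lvert\Lambda_{L,x}\rvert(C\lvert I\rvert + 4C'\lvert\Lambda_{L,x}\rvert/\pi)\euler^{-\mu s L/8}$. Finally one checks that the three lower bounds on $L$ — namely $L \geq 8\ln(8)/\mu$, $L \geq L_0$, and $L \geq -(8/5\mu)\ln(\lvert I\rvert/2)$ — are exactly what is needed to absorb, respectively, the boundary-size factor $\lvert\partial^{\mathrm i}\Lambda\rvert \leq \lvert\Lambda\rvert$ into the exponent with the right rate, to legitimize the use of hypothesis (i), and to make the $\lvert I\rvert$-dependent net (or integral-to-sup) estimate nontrivial.

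I expect the main obstacle to be the second step: making the transition from the fixed-energy fractional-moment bounds to the uniform-in-$E$ regularity statement, since this is where the self-adjointness of $H_{\omega,\Lambda_{L,x}}$, the a priori diagonal bound (ii), and a careful choice of how finely to sample $I$ all interact, and where the precise constants $4/\pi$, $8$, and $8/5$ in the conclusion are pinned down. Once that is in place, the Chebyshev estimate and the combinatorial step are routine. The underlying references \cite{ElgartTV-10, ElgartTV-11} presumably carry out a version of the Simon--Wolff-type argument of \cite{Aizenman-94, AizenmanENSS-06} adapted to finite boxes, and I would model the proof on Theorem~2.3 in \cite{DreifusK-89} together with its derivation from fractional moment bounds in the independent case, taking care that here only the auxiliary random variables $\omega_k$ — and not the potential values $V_\omega(x)$ — are independent.
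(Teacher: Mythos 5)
Your proposal has the right broad shape (Chebyshev at a fixed energy, a Wegner estimate coming out of hypothesis (ii) via Stone's formula, and some device to get uniformity in $E$), but there is a fatal gap at the point you dismiss as ``purely combinatorial.'' You claim that
\[
\bigl\{\exists\, E\in I : \text{both } \Lambda_{L,x}\text{ and }\Lambda_{L,y}\text{ are }(\mu/8,E)\text{-singular}\bigr\}
\]
is bounded by a union over the two cubes of the events $\{\exists\, E\in I:\ \text{this cube is singular}\}$ and that the failure probability is therefore ``twice the single-cube bound.'' But $\PP\{\exists\, E\in I : \Lambda_{L,x}\ \text{is singular}\}$ is generically of order one, not exponentially small: whenever $H_{\omega,\Lambda_{L,x}}$ has an eigenvalue in $I$ with eigenfunction not vanishing at $x$ (which happens with probability close to $1$ for a cube of side $2L+1$), the Green function blows up in a neighbourhood of that eigenvalue and the cube is singular there. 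The containment you invoke, being into an intersection rather than a union of two large-probability events, gives nothing.

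The missing ingredient is independence of the two boxes. The hypothesis $\lvert x-y\rvert_\infty \geq 2L+\diam\Theta+1$ is exactly what makes $H_{\omega,\Lambda_{L,x}}$ and $H_{\omega,\Lambda_{L,y}}$ statistically independent. The actual argument (as indicated by the paper's remark that hypothesis (ii) serves only to produce the Wegner estimate of Proposition~\ref{prp:Wegner-a-priori}) splits the bad event according to whether the two independent finite-box spectra come within some $\eta$ of each other near $I$. That simultaneous-proximity event is controlled by conditioning on one cube and applying the Wegner estimate $\EE\{\Tr \chi_{[a,b]}(H_{\omega,\Lambda_{L,x}})\}\leq \tfrac{4C'}{\pi}\lvert b-a\rvert^s\lvert\Lambda_{L,x}\rvert$ for the other; this is where the $4C'\lvert\Lambda_{L,x}\rvert/\pi$ factor actually enters. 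On the complementary event, for every $E\in I$ at least one cube has spectrum at distance $>\eta$ from $E$; for that cube the resolvent is Lipschitz in $E$ with constant $\eta^{-2}$, so one can pass to a $\delta$-net of $I$ and apply Chebyshev plus hypothesis (i) at each net point (giving the $C\lvert I\rvert$ factor and the scale condition $L\geq -(8/5\mu)\ln(\lvert I\rvert/2)$). Your ``integrate and trade for a supremum'' suggestion for step 2 does not substitute for this: an integral bound $\int_I\PP\{\cdot\}\,\drm E$ controls the Lebesgue measure of bad energies, not the probability that the bad set is nonempty, and cannot be upgraded to the latter without the independence-plus-Wegner mechanism. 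As a smaller point, your Chebyshev step has the wrong exponent: from $\EE\{\lvert G\rvert^s\}\leq C\euler^{-\mu L}$ one gets $\PP\{\lvert G\rvert>\euler^{-\mu L/8}\}\leq C\euler^{-\mu L(1-s/8)}$, not $C\euler^{-\mu sL}\euler^{\mu s L/8}$.
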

In the proof of Proposition \ref{prop:replace-msa} Hypothesis (ii) is only used to obtain a Wegner estimate. In particular, there is a relation between a Wegner estimate and the a priori bound in the fractional moment method. The following proposition states that the boundedness of averaged fractional powers of the diagonal Green function elements implies a Wegner estimate.
\begin{proposition}[\cite{ElgartTV-11}]\label{prp:Wegner-a-priori}
Let $I \subset \RR$ be an interval, $s \in (0,1)$ and $c > 0$. Assume there is a constant $C \in \RR^+$ such that
\[
 \EE \bigl\{ \lvert G_{\omega , L} (E+\i \epsilon ; x,x) \rvert^s \bigr \} \leq C
\]
for all $L \in \NN$, $x \in \Lambda_L$, $E \in I$ and all $\epsilon \in (0,c]$. Then we have for all $[a,b] \subset I$ with $0< b-a \leq c$ that
\begin{equation*} \label{eq:wegner}
\EE \bigl\{ \Tr \chi_{[a,b]}(H_{\omega , {L}})  \bigr\}
\leq \frac{4C}{\pi}  \lvert b-a \rvert^{s}  \lvert \Lambda_{L} \rvert
 .
\end{equation*}
\end{proposition}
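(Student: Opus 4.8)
The idea is to compare, realization by realization, the finite-volume eigenvalue count $\Tr \chi_{[a,b]}(H_{\omega,L})$ with a Poisson smoothing of the spectral measures at the regularization scale $\epsilon = b-a$, and then to trade the resulting (a priori uncontrolled) first moment of $\im G_{\omega,L}$ for the fractional moment $\lvert G_{\omega,L}\rvert^{s}$ by means of an elementary deterministic bound.

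First I would fix $\omega$ and write the spectral decomposition $H_{\omega,L} = \sum_j E_j \lvert\psi_j\rangle\langle\psi_j\rvert$ (a finite sum over $j = 1,\dots,\lvert\Lambda_L\rvert$), so that $\Tr \chi_{[a,b]}(H_{\omega,L}) = \sum_{x\in\Lambda_L}\sprod{\delta_x}{\chi_{[a,b]}(H_{\omega,L})\delta_x}$ and each diagonal term equals $\mu_x([a,b])$, where $\mu_x = \sum_j \lvert\sprod{\delta_x}{\psi_j}\rvert^2\,\delta_{E_j}$ is a subprobability measure with $\mu_x(\RR)\le 1$. Setting $\epsilon = b-a$, a one-line calculus exercise — namely $\min_{\lambda\in[a,b]}\bigl(\arctan\tfrac{b-\lambda}{\epsilon} - \arctan\tfrac{a-\lambda}{\epsilon}\bigr) = \tfrac{\pi}{4}$ — gives $\chi_{[a,b]}(t)\le \tfrac{4}{\pi}\int_a^b \tfrac{\epsilon\,\drm E'}{(E'-t)^2+\epsilon^2}$ for every $t\in\RR$. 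Integrating this against $\mu_x$ and using the Herglotz representation $\im G_{\omega,L}(E'+\i\epsilon;x,x) = \int \tfrac{\epsilon\,\drm\mu_x(\lambda)}{(\lambda-E')^2+\epsilon^2}$, together with Tonelli, yields $\mu_x([a,b])\le \tfrac{4}{\pi}\int_a^b \im G_{\omega,L}(E'+\i\epsilon;x,x)\,\drm E'$.

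The key step is then the deterministic a priori bound $\im G_{\omega,L}(E'+\i\epsilon;x,x)\le \epsilon^{-1} = (b-a)^{-1}$, which follows at once from $\im G_{\omega,L}(E'+\i\epsilon;x,x) = \int \tfrac{\epsilon\,\drm\mu_x(\lambda)}{(\lambda-E')^2+\epsilon^2}\le \epsilon^{-1}\mu_x(\RR)\le \epsilon^{-1}$. Interpolating, $\im G = (\im G)^{1-s}(\im G)^{s}\le (b-a)^{-(1-s)}\lvert G\rvert^{s}$, so that $\mu_x([a,b])\le \tfrac{4}{\pi}(b-a)^{s-1}\int_a^b \lvert G_{\omega,L}(E'+\i\epsilon;x,x)\rvert^{s}\,\drm E'$. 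Now I take expectations: by Tonelli again and the hypothesis, which applies since $E'\in[a,b]\subset I$ and $\epsilon = b-a\in(0,c]$, we get $\EE\{\lvert G_{\omega,L}(E'+\i\epsilon;x,x)\rvert^{s}\}\le C$, hence $\EE\{\mu_x([a,b])\}\le \tfrac{4C}{\pi}(b-a)^{s}$; summing over $x\in\Lambda_L$ gives the assertion with the constant $\tfrac{4C}{\pi}$.

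The only genuine obstacle is the one just addressed: since the hypothesis controls only an $s$-th moment, one cannot average directly in the natural inequality $\Tr \chi_{[a,b]}(H_{\omega,L})\le \tfrac{4}{\pi}\int_a^b \sum_{x\in\Lambda_L}\im G_{\omega,L}(E'+\i\epsilon;x,x)\,\drm E'$, because $\EE\{\im G_{\omega,L}\}$ may be infinite. The remedy is to exploit, before averaging, the deterministic ceiling $\im G_{\omega,L}\le (b-a)^{-1}$ (equivalently, that diagonal spectral-projection matrix elements are bounded by $1$) in order to convert the first power of $\im G_{\omega,L}$ into the $s$-th power, at the modest price of the prefactor $(b-a)^{s-1}$, which recombines with the length of the integration interval to produce exactly $(b-a)^{s}$. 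The remaining ingredients — the $\arctan$ computation and the verification that $E'+\i\epsilon$ lies in the admissible region — are routine.
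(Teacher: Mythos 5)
Your argument is correct and follows essentially the same route as the paper's proof: the inequality version of Stone's formula via the $\arctan$ computation, followed by interpolation between the deterministic $\epsilon^{-1}$ ceiling on the diagonal Green function and the assumed bound on its $s$-th fractional moment. The only cosmetic difference is that you obtain the $\epsilon^{-1}$ bound from the Herglotz/Poisson-kernel representation of $\im G_{\omega,L}$, whereas the paper invokes $\lvert G_{\omega,L}(E+\i\epsilon;x,x)\rvert \le \dist(\sigma(H_{\omega,L}),E+\i\epsilon)^{-1}\le \epsilon^{-1}$; both yield the identical factor $\epsilon^{s-1}$ and the same final constant.
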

\begin{proof}
 Let $[a,b] \subset I$ with $0<b-a
\leq c$. Since we have for any $\lambda \in \RR$ and $0<\epsilon \leq b-a$
\[
 \arctan \left( \frac{\lambda - a}{\epsilon} \right) - \arctan \left( \frac{\lambda - b}{\epsilon} \right) \geq \frac{\pi}{4} \ \chi_{[a,b]}(\lambda) ,
\]
one obtains an inequality version of Stones formula:
\[
 \langle \delta_x , \chi_{[a,b]} (H_{\omega , {L}}) \delta_x \rangle
\leq \frac{4}{\pi} \int_{[a,b]} \im \left\{ G_{\omega , {L}} (E+ \i \epsilon ; x,x) \right\} \drm E \quad \forall \, \epsilon \in (0, b-a] .
\]
Using triangle inequality, $\lvert \im z\rvert \leq \lvert z\rvert$
for $z \in \CC$, Fubini's theorem, $\lvert G_{\omega , {L}} (E+\i
\epsilon ; x,x) \rvert^{1-s} \leq \dist (\sigma(H_{\omega , L}) ,
E+i \epsilon)^{s-1} \leq \epsilon^{s-1}$ and hypothesis  (ii) we
obtain for all $\epsilon \in (0,b-a]$
\begin{align*}
\EE \bigl\{ \Tr \chi_{[a,b]}(H_{\omega , L}) \bigr\} & \leq \EE \Bigl\{ \sum_{x \in \Lambda_{L}} \frac{4}{\pi} \int_{[a,b]} \im \left\{ G_{\omega , L} (E+\i \epsilon ; x,x) \right\} \drm E  \Bigr\} \\
&  \leq  \frac{\epsilon^{s-1}}{\pi / 4}  \sum_{x \in \Lambda_{L}} \int_{[a,b]} \EE \Bigl\{ \bigl|  G_{\omega , L} (E+\i \epsilon ; x,x)  \bigr|^{s} \Bigr\} \drm E   \\
& \leq 4\pi^{-1}\epsilon^{s-1}  \lvert \Lambda_{L} \rvert \, \lvert b-a \rvert C .
\end{align*}
We minimize the right hand side by choosing $\epsilon = b-a$ and obtain the statement of the proposition.
\end{proof}
Let us note that a Wegner estimate implies the boundedness of an averaged fractional power of the (finite-volume) Green function. At the moment we only know a proof where the bound depends polynomially on the volume of the cube.
\par
From the discussion so far it follows that  Hypothesis (ii) of Proposition \ref{prop:replace-msa} can be replaced by a Wegner estimate. 
Specifically, the following assertion holds true.
\begin{proposition}[\cite{ElgartTV-11}] \label{prop:replace-msa-Wegner}
Let $I \subset \RR$ be a bounded interval and $s \in (0,1)$. Assume the following two statements:
\begin{enumerate}[(i)]
\item There are constants $C,\mu \in \RR^+$ and $L_0 \in \NN_0$ such that 
\[
\EE \bigl\{\lvert G_{\omega, \Lambda_{L,k}} (E;x,y) \rvert^{s} \bigr\} \leq C \euler^{-\mu \lvert x-y \rvert_\infty}
\]
for all $k \in \ZZ^d$, $L \in \NN$, $x,y \in \Lambda_{L,k}$ with $\lvert x-y \rvert_\infty \geq L_0$, and all $E \in I$.
\item There are constants $C_{\rm W}\in \RR^+$, $ \beta \in (0,1]$, and $D \in \NN$
such that 
\[
\PP\bigl\{ \sigma(H_{\omega , \Lambda_{L}} ) \cap [a,b]\not=\emptyset\bigr\} \leq C_{\rm W}{\lvert b-a\rvert}^\beta \, L^D
\]
for all $L \in \NN$ and all $[a,b]\subset I$.
\end{enumerate}
Then we have for all $L \geq \max\{8 \ln (2)/\mu , L_0 , -(8/5\mu)\ln (\lvert I \rvert / 2)\}$ and all $x,y \in \ZZ$ with $\lvert x-y\rvert_\infty \geq 2L+\diam \Theta + 1$ that
\begin{multline*}
 \PP \{\forall \, E \in I \text{ either $\Lambda_{L,x}$ or $\Lambda_{L,y}$ is
$(\mu/8,E)$-regular} \}  \\ \geq 1- 8(2L+1)^d\rvert(C \, \lvert I \rvert  + C_{\rm W}L^D  ) \euler^{-\mu \beta L/8} .
\end{multline*}
\end{proposition}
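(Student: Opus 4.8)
The plan is to deduce Proposition~\ref{prop:replace-msa-Wegner} from Proposition~\ref{prop:replace-msa} by showing that a Wegner estimate as in Hypothesis~(ii) here can be substituted for Hypothesis~(ii) there. The only place Hypothesis~(ii) of Proposition~\ref{prop:replace-msa} enters its proof is in bounding the probability that a box $\Lambda_{L,x}$ fails to be regular because $E \in \sigma(H_{\omega,\Lambda_{L,x}})$ (or lies too close to it), i.e.\ to control the ``bad energies'' along the whole interval $I$. So the first step is to quantify: for a fixed box $\Lambda_{L,x}$, the event that there exists $E \in I$ with $\dist(E,\sigma(H_{\omega,\Lambda_{L,x}})) \leq \delta$ for a suitable threshold $\delta = \delta(L)$ is, by a covering of $I$ by $O(|I|/\delta)$ subintervals of length $2\delta$ and the Wegner bound, of probability at most $C_{\rm W} (|I|/\delta)\,\delta^\beta L^D \lesssim C_{\rm W} |I| \,\delta^{\beta-1} L^D$. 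We will take $\delta$ comparable to $\euler^{-\mu\beta L/8}$ (up to polynomial corrections) so that this contribution matches the exponential rate in the conclusion.

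The second step is the deterministic, Combes--Thomas-type input: if $E \notin \sigma(H_{\omega,\Lambda_{L,x}})$ with $\dist(E,\sigma(H_{\omega,\Lambda_{L,x}})) \geq \delta$, then the off-diagonal Green function $G_{\omega,\Lambda_{L,x}}(E;x,w)$ for $w \in \partial^{\rm i}\Lambda_{L,x}$ decays exponentially in $L$ with a rate controlled by $\delta$; combined with Hypothesis~(i) via a Chebyshev/Markov argument applied to $\EE\{|G_{\omega,\Lambda_{L,x}}(E;x,w)|^s\}$ one gets that, with high probability, the box is $(\mu/8,E)$-regular for every $E$ in the ``good'' set. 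Concretely, Markov's inequality on Hypothesis~(i) gives $\PP\{|G_{\omega,\Lambda_{L,x}}(E;x,w)| > \euler^{-\mu L/8}\} \leq C\euler^{-\mu s L/8}\euler^{-\mu|x-w|_\infty}$, and summing over $w \in \partial^{\rm i}\Lambda_{L,x}$ and over a net of energies (using the Lipschitz-type dependence of $G$ on $E$ away from the spectrum, which is where the $\delta$-separation is used again) produces the stated bound. Doing this for both $\Lambda_{L,x}$ and $\Lambda_{L,y}$ and noting that the exponential-decay Hypothesis~(i) is only needed when $|x-y|_\infty$ is large — hence automatically for at least one of the two boxes not containing the other's center, which is guaranteed by $|x-y|_\infty \geq 2L + \diam\Theta + 1$ — yields the claim; the lower bounds $L \geq 8\ln(2)/\mu$, $L \geq L_0$, $L \geq -(8/5\mu)\ln(|I|/2)$ are exactly the thresholds making the net argument and the $\euler^{-\mu L/8}$ regularity threshold consistent, just as in Proposition~\ref{prop:replace-msa}.

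The main obstacle is the interplay between the two error sources as functions of the free parameter $\delta$: the Wegner term $C_{\rm W}|I|\delta^{\beta-1}L^D$ \emph{grows} as $\delta \to 0$ (since $\beta \leq 1$), while the Combes--Thomas rate degrades as $\delta \to 0$, so one must choose $\delta$ (polynomially in $L$ times $\euler^{-\mu\beta L/8}$) balancing these so that \emph{both} are dominated by $C_{\rm W}L^D\euler^{-\mu\beta L/8}$ and the Hypothesis~(i) contribution $C|I|\euler^{-\mu s L/8}$ stays at rate $\euler^{-\mu\beta L/8}$ as well — note the final exponent is $\mu\beta L/8$, i.e.\ governed by $\min(s,\beta)$-type considerations, which is why the statement carries $\euler^{-\mu\beta L/8}$ rather than $\euler^{-\mu s L/8}$. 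Once the bookkeeping of these polynomial prefactors is organized (they get absorbed into the $(2L+1)^d$ and $L^D$ factors), the rest is a routine repetition of the proof of Proposition~\ref{prop:replace-msa}; accordingly I would present this proof as ``we follow the proof of Proposition~\ref{prop:replace-msa}, replacing the use of Hypothesis~(ii) there by the covering argument above,'' and only spell out the covering/Combes--Thomas step in detail.
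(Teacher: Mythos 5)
Your high-level strategy is the same as the paper's (re-run the proof of Proposition~\ref{prop:replace-msa}, feeding in the Wegner estimate directly instead of deriving it from Hypothesis~(ii) via Proposition~\ref{prp:Wegner-a-priori}), but the specific mechanism you propose for the Wegner step is wrong, and it is the step that actually carries the weight of the argument. You set out to bound, for a \emph{single} box $\Lambda_{L,x}$, the probability that there exists $E\in I$ with $\dist(E,\sigma(H_{\omega,\Lambda_{L,x}}))\le\delta$, by covering $I$ with $O(|I|/\delta)$ intervals and applying the Wegner bound to each. This cannot work for two separate reasons. First, that event is not small: $H_{\omega,\Lambda_{L,x}}$ has on the order of $(2L+1)^d$ eigenvalues, and for any reasonable interval $I$ and small $\delta$ it is essentially certain that some eigenvalue lies in $I$, so the quantity you are trying to make small is close to $1$. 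Second, and consistently with the first point, your bound $C_{\rm W}|I|\,\delta^{\beta-1}L^D$ is $\ge 1$ for small $\delta$ whenever $\beta\le 1$; you flag this divergence as ``the main obstacle'' and assert a balance of $\delta$ resolves it, but the requirement $C_{\rm W}|I|\,\delta^{\beta-1}L^D\lesssim C_{\rm W}L^D\euler^{-\mu\beta L/8}$ forces $\delta\gtrsim\bigl(|I|\,\euler^{\mu\beta L/8}\bigr)^{1/(1-\beta)}$, which \emph{grows} exponentially in $L$; no admissible choice of $\delta$ exists, so the proposed balance is vacuous.

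The missing idea is that the Wegner estimate must be applied to a \emph{two-box resonance} event, using independence. The separation $\lvert x-y\rvert_\infty\ge 2L+\diam\Theta+1$ guarantees that $H_{\omega,\Lambda_{L,x}}$ and $H_{\omega,\Lambda_{L,y}}$ depend on disjoint families of the $\omega_k$'s (because of the finite support of $u$) and are therefore independent; this is the role of that hypothesis, not the one you assign to it. One then splits the bad event ``$\exists\,E\in I$ with both boxes $(\mu/8,E)$-singular'' into: (a) $\exists\,E\in I$ with $\dist(E,\sigma(H_{\omega,\Lambda_{L,x}}))<\delta$ \emph{and} $\dist(E,\sigma(H_{\omega,\Lambda_{L,y}}))<\delta$, and (b) the complementary case where at least one box has a spectral gap $\ge\delta$ yet is still singular, which is handled by Hypothesis~(i), Markov and a fine net as you sketch. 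Event (a) is bounded by conditioning on $H_{\omega,\Lambda_{L,y}}$: it has at most $\lvert\Lambda_L\rvert$ eigenvalues $\lambda_j$ near $I$, and for each fixed $\lambda_j$ the Wegner hypothesis gives $\PP\{\sigma(H_{\omega,\Lambda_{L,x}})\cap[\lambda_j-2\delta,\lambda_j+2\delta]\neq\emptyset\}\le C_{\rm W}(4\delta)^\beta L^D$, so $\PP\{(a)\}\le (2L+1)^d\,C_{\rm W}(4\delta)^\beta L^D$. There is no $|I|/\delta$ factor, and choosing $\delta\sim\euler^{-\mu L/8}$ produces exactly the $(2L+1)^d\,C_{\rm W}L^D\,\euler^{-\mu\beta L/8}$ term of the conclusion. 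Until the covering-of-$I$ step is replaced by this conditioning-on-the-other-box argument, the proposal does not prove the proposition.
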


To conclude exponential localization from the estimates provided in Proposition \ref{prop:replace-msa} or  \ref{prop:replace-msa-Wegner} 
we will use Theorem 2.3 in \cite{DreifusK-89}.
More precisely we need a slight extension of the result, 
which can be proven with the same arguments as the original result. 
What matters for the proof of Theorem~\ref{thm:vDK-2.3} is that there is an $l_0 \in \NN$ 
such that potential values at different lattice sites are independent if their distance is larger or equal $l_0$.

\begin{theorem}[\cite{DreifusK-89}] \label{thm:vDK-2.3}
Let $I\subset \RR$ be an interval and let $p>d$, $ L_0>1$, $\alpha \in (1,2p/d)$ and $m>0$.
Set $L_{k} =L_{k-1}^\alpha$, for $k \in \NN$. Suppose that for any $k \in \NN_0$
\[
 \PP \{\forall \, E \in I \text{ either $\Lambda_{L_k,x}$ or $\Lambda_{L_k,y}$ is $(m,E)$-regular} \}\ge 1- L_k^{-2p}
\]
for any $x,y \in \ZZ^d$ with $\lvert x-y\rvert_\infty \geq 2L_k + \diam \Theta + 1$.
Then $H_\omega$ exhibits exponential localization in $I$ for almost all $\omega \in \Omega$.
\end{theorem}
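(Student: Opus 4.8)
The plan is to follow the original multiscale analysis of von Dreifus and Klein \cite{DreifusK-89}, whose Theorem~2.3 this statement reproduces almost verbatim; the only point that requires attention is that in the discrete alloy-type model the potential values $V_\omega(x)$ are \emph{not} independent across sites. However, the underlying i.i.d.\ parameters $\omega_k$ are independent, and since $\supp u = \Theta$ is finite, the operators $H_{\omega,\Lambda_{L,x}}$ and $H_{\omega,\Lambda_{L,y}}$ depend only on the variables $\{\omega_k : k \in (\Lambda_{L,x})^+\}$ and $\{\omega_k : k \in (\Lambda_{L,y})^+\}$ respectively, where $\Gamma^+ := \{k \in \ZZ^d : \dist(k,\Gamma) \leq \diam\Theta\}$. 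These index sets are disjoint precisely when $\lvert x-y\rvert_\infty \geq 2L + \diam\Theta + 1$, so the two finite-volume Hamiltonians are \emph{independent} under that separation hypothesis. This is exactly the property invoked in the von Dreifus--Klein induction, so with $l_0 = \diam\Theta + 1$ playing the role of the interaction range, their argument applies with no structural change.

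The key steps, in order, are as follows. First, set up the length scales $L_k = L_{k-1}^\alpha$ and the regularity/singularity dichotomy for the cubes $\Lambda_{L_k,x}$, exactly as in the definition of $(m,E)$-regularity given above. Second, establish the deterministic geometric resolvent inequality: if a cube $\Lambda_{L_{k},x}$ is $(m,E)$-regular then $\lvert G_{\Lambda_{L_{k+1},z}}(E;z,w)\rvert$ for $w\in\partial^{\rm i}\Lambda_{L_{k+1},z}$ can be controlled by chaining through a sequence of smaller regular cubes tiling the annulus, using the second resolvent identity. Third, carry out the probabilistic induction step: assuming the scale-$L_k$ estimate holds, cover the annulus of $\Lambda_{L_{k+1},x}$ by $O(L_{k+1}^d / L_k^d)$ cubes of scale $L_k$; the probability that two disjoint such cubes are both singular is bounded, by the independence just discussed, by the \emph{square} of the single-cube bad-event probability $L_k^{-2p}$, while a Wegner estimate (available from Theorem~\ref{thm:Wegner_d}, which holds on the whole energy axis with $b\geq 1$) controls the probability that $E$ lies too close to $\sigma(H_{\omega,\Lambda_{L_{k+1},x}})$; a combinatorial count then shows the resulting scale-$L_{k+1}$ bad-event probability is at most $L_{k+1}^{-2p}$, provided $\alpha < 2p/d$ and $L_0$ is large. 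Fourth, sum over scales: for a.e.\ $\omega$ and every $E\in I$ all but finitely many scales around a fixed reference point are regular, which by the geometric resolvent inequality forces any polynomially bounded generalized eigenfunction with generalized eigenvalue in $I$ to decay exponentially; since $H_\omega$ has a complete set of such eigenfunctions (Schnol/Simon), the spectrum in $I$ is pure point with exponentially decaying eigenfunctions, i.e.\ exponential localization in $I$.

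The main obstacle — really the only substantive deviation from the textbook proof — is verifying carefully that the independence of the pair of finite-volume Hamiltonians at separation $2L_k + \diam\Theta + 1$ is genuinely sufficient everywhere the von Dreifus--Klein argument uses independence of the potential. Two places deserve a check: (i) in the induction step one needs the events ``$\Lambda_{L_k,x}$ singular'' and ``$\Lambda_{L_k,y}$ singular'' to be independent for the two cubes one is comparing, which holds because the hypothesis is stated precisely for pairs with $\lvert x-y\rvert_\infty \geq 2L_k+\diam\Theta+1$ and one only ever compares such pairs; (ii) one must confirm that the Wegner estimate used at scale $L_{k+1}$ does not itself require the potential-independence — it does not, since Theorem~\ref{thm:Wegner_d} is proven by the spectral-averaging-after-change-of-variables technique and makes no such assumption. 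Once these two points are recorded, the remainder is a verbatim transcription of the argument in \cite{DreifusK-89}, and it suffices to say so rather than reproduce it.
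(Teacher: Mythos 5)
Your overall approach agrees with the paper's: the paper itself does not spell out a proof but rather notes that this is a mild extension of Theorem~2.3 of von Dreifus and Klein \cite{DreifusK-89}, obtainable by the same argument once one observes that the potential has finite dependence range. You identify the same key point — that $H_{\omega,\Lambda_{L,x}}$ and $H_{\omega,\Lambda_{L,y}}$ are independent once $\lvert x-y\rvert_\infty \geq 2L+\diam\Theta+1$ — and that this replaces the on-site independence used in the original argument.

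Two remarks on the details. First, and most substantively, your Step~3 (``carry out the probabilistic induction step,'' including the invocation of a Wegner estimate from Theorem~\ref{thm:Wegner_d}) is \emph{not} part of the proof of the statement in question. That step belongs to the \emph{inductive} part of the multiscale analysis, which is Theorem~2.2 in \cite{DreifusK-89}. The theorem here has as its \emph{hypothesis} that the two-cube probability estimate already holds at every scale $L_k$, $k\in\NN_0$; nothing needs to be propagated from one scale to the next, and no Wegner estimate is used at all. The proof of von Dreifus--Klein Theorem~2.3 consists only of your Steps~2 and~4: Borel--Cantelli applied to the hypothesis, the geometric resolvent (chaining) inequality, and the Schnol/Berezanskii expansion in polynomially bounded generalized eigenfunctions. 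Including Step~3 does not make the proposal wrong, but it does conflate the conclusion step of MSA with the inductive step, and the appeal to Theorem~\ref{thm:Wegner_d} (which itself requires additional hypotheses on $u$ and $\mu$) would introduce assumptions not present in the statement. Second, a small technical point: the enlarged set you define, $\Gamma^+ = \{k : \dist(k,\Gamma)\leq \diam\Theta\}$, is actually a bit too large — the sets $(\Lambda_{L,x})^+$ and $(\Lambda_{L,y})^+$ become disjoint only at separation $2L+2\diam\Theta+1$. The correct dependence set is $\Lambda_{L,x}-\Theta$ (which, since $0\in\supp u$ by convention, is a translate-sum with $\Theta$ itself, not with a symmetrized ball), and it is these sets that become disjoint at the separation $2L+\diam\Theta+1$ appearing in the hypothesis. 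Your final threshold is correct; only the intermediate justification is slightly off.
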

From Proposition \ref{prop:replace-msa} and Theorem \ref{thm:vDK-2.3} we conclude that the decay of fractional moments of the Green function implies exponential localization.
\begin{theorem}[\cite{ElgartTV-11}] \label{thm:result2}
Let $s \in (0,1)$, $C,\mu, \in \RR^+$, and $I \subset \RR$ be a interval. Assume that
\[
\EE \bigl\{ \lvert G_{\omega , \Lambda_{L,k}} (E + \i \epsilon;x,y) \rvert^{s} \bigr\} \leq C \euler^{-\mu \lvert x-y \rvert_\infty}
\]
for all $k \in \ZZ^d$, $L \in \NN$, $x,y \in \Lambda_{L,k}$, $E \in I$ and all $\epsilon \in (0,1]$.
Then $H_\omega$ exhibits exponential localization in $I$ for almost all $\omega \in \Omega$.
\end{theorem}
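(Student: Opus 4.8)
The plan is to reduce the statement to a direct application of Proposition \ref{prop:replace-msa} followed by Theorem \ref{thm:vDK-2.3}, the point being that the single hypothesis here — exponential decay of $\EE\{|G_{\omega,\Lambda_{L,k}}(E+\i\epsilon;x,y)|^s\}$ uniformly in $\epsilon\in(0,1]$ — already contains both hypotheses (i) and (ii) of Proposition \ref{prop:replace-msa}. First I would observe that hypothesis (i) of Proposition \ref{prop:replace-msa} is the present assumption restricted to real energies, obtained by letting $\epsilon\searrow 0$: since $G_{\omega,\Lambda_{L,k}}(E;x,y)$ is continuous in $E$ off the (finite) spectrum and the finite-volume resolvent has only finitely many poles on the real line, $|G_{\omega,\Lambda_{L,k}}(E+\i\epsilon;x,y)|^s\to|G_{\omega,\Lambda_{L,k}}(E;x,y)|^s$ for a.e.\ $E$, and Fatou's lemma gives $\EE\{|G_{\omega,\Lambda_{L,k}}(E;x,y)|^s\}\le C\euler^{-\mu|x-y|_\infty}$. (Any $L_0$, e.g.\ $L_0=1$, works here.) Hypothesis (ii) is even more immediate: taking $x=y$ in the assumed bound gives $\EE\{|G_{\omega,\Lambda_{L,k}}(E+\i\epsilon;x,x)|^s\}\le C$ for all $\epsilon\in(0,1]$, so one may take $C'=C$.

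Next I would feed these into Proposition \ref{prop:replace-msa}, which yields, for every $L\ge \max\{8\ln(8)/\mu,\,L_0,\,-(8/5\mu)\ln(|I|/2)\}$ and every $x,y$ with $|x-y|_\infty\ge 2L+\diam\Theta+1$,
\[
\PP\{\forall\,E\in I \text{ either }\Lambda_{L,x}\text{ or }\Lambda_{L,y}\text{ is }(\mu/8,E)\text{-regular}\}\ge 1-8|\Lambda_{L,x}|(C|I|+4C|\Lambda_{L,x}|/\pi)\euler^{-\mu s L/8}.
\]
Now I would fix the multiscale sequence: pick $p>d$ and $\alpha\in(1,2p/d)$, and choose $L_0>1$ large enough that $L_0$ exceeds the threshold above \emph{and} that $8|\Lambda_{L_0}|(C|I|+4C|\Lambda_{L_0}|/\pi)\euler^{-\mu s L_0/8}\le L_0^{-2p}$; such an $L_0$ exists because the left-hand side decays like $\euler^{-cL}$ while $L^{-2p}$ decays only polynomially. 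Since $L_k=L_{k-1}^\alpha$ is increasing, the regularity-probability bound with $L=L_k$ continues to dominate $L_k^{-2p}$ for all $k\in\NN_0$ (the exponential-over-polynomial comparison only improves as $L$ grows). Hence the hypothesis of Theorem \ref{thm:vDK-2.3} holds with $m=\mu/8$, and Theorem \ref{thm:vDK-2.3} gives exponential localization of $H_\omega$ in $I$ for almost every $\omega$.

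I do not expect a genuine obstacle here — this theorem is essentially a packaging statement — but the one point requiring a little care is the interplay between the finite-size thresholds. Proposition \ref{prop:replace-msa} is only valid for $L$ above an explicit ($\mu$- and $I$-dependent) threshold, whereas Theorem \ref{thm:vDK-2.3} needs the regularity estimate for \emph{every} $k\in\NN_0$, including $k=0$. The resolution is simply to choose the initial scale $L_0$ of the multiscale sequence large enough to clear that threshold; because all subsequent scales $L_k$ are larger, no compatibility issue arises, and the polynomial-versus-exponential comparison that makes $1-L_k^{-2p}$ a valid lower bound is uniform in $k$ once it holds at $k=0$. Everything else is a direct substitution: the uniformity in $\epsilon\in(0,1]$ in the hypothesis is exactly what is needed to supply both the real-energy decay (via $\epsilon\searrow 0$) and the diagonal a priori bound (via $x=y$) demanded by Proposition \ref{prop:replace-msa}.
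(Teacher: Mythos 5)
Your proof is correct and matches the paper's (terse) approach exactly: the paper derives Theorem~\ref{thm:result2} precisely by combining Proposition~\ref{prop:replace-msa} with Theorem~\ref{thm:vDK-2.3}, which is what you spell out, including the observation that the diagonal case $x=y$ supplies hypothesis~(ii) and that the initial scale $L_0$ of the multiscale sequence must be chosen large enough to clear the threshold in Proposition~\ref{prop:replace-msa} and to make the exponential-versus-polynomial comparison bite. The one slip is cosmetic: in the Fatou step the convergence $|G_{\omega,\Lambda_{L,k}}(E+\i\epsilon;x,y)|^s\to|G_{\omega,\Lambda_{L,k}}(E;x,y)|^s$ holds for a.e.\ $\omega$ at each \emph{fixed} $E\in I$ (the assumed uniform-in-$\epsilon$ diagonal bound already forces $\PP\{E\in\sigma(H_{\omega,\Lambda_{L,k}})\}=0$, after which Fatou in $\omega$ gives hypothesis~(i)), not ``for a.e.\ $E$'' as you wrote.
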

Putting together Theorem~\ref{thm:result1} and Theorem~\ref{thm:result2}, 
one can prove without the use of MSA exponential localization in the case of sufficiently large disorder.
\begin{theorem}[\cite{ElgartTV-11}, Exponential localization via fractional moments] \label{thm:result3}
Let Assumption \ref{ass:monotone} be satisfied and $\lambda$ sufficiently large.
Then $H_\omega$ exhibits exponential localization for almost all $\omega \in \Omega$.
\end{theorem}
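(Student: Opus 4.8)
\emph{Proof strategy.} The plan is to obtain this as an immediate consequence of Theorem~\ref{thm:result1} and Theorem~\ref{thm:result2}, with no recourse to the multiscale analysis. First I would fix once and for all an exponent $s \in (0,1/3)$, so that Theorem~\ref{thm:result1} is applicable. Under Assumption~\ref{ass:monotone} and for $\lambda$ large enough (how large depending only on $d$, $\rho$, $u$ and $s$), Theorem~\ref{thm:result1}, applied with $\Gamma$ taken to be each finite box $\Lambda_{L,k}$, $k \in \ZZ^d$, $L \in \NN$, provides constants $\mu, A \in \RR^+$, depending only on $d$, $\rho$, $u$, $s$, $\lambda$, such that
\[
\EE\bigl\{\lvert G_{\omega,\Lambda_{L,k}}(z;x,y)\rvert^{s/(2\lvert\Theta\rvert)}\bigr\}\le A\,\euler^{-\mu\lvert x-y\rvert_\infty}
\qquad\text{for all }z\in\CC\setminus\RR,\ x,y\in\Lambda_{L,k}.
\]

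The next step is to feed this bound into Theorem~\ref{thm:result2}. Set $\sigma := s/(2\lvert\Theta\rvert) \in (0,1)$ and take the interval there to be $I = \RR$. Specializing the displayed estimate to $z = E + \i\epsilon$ with $E \in \RR$ and $\epsilon \in (0,1]$, one sees that it is exactly the hypothesis of Theorem~\ref{thm:result2} with fractional power $\sigma$, constant $C = A$, and decay rate $\mu$ (and with no lower restriction on $\lvert x-y\rvert_\infty$ required). Theorem~\ref{thm:result2} then yields that $H_\omega$ exhibits exponential localization in $\RR$, i.e.\ exponential localization, for almost every $\omega \in \Omega$, which is the assertion. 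The requirement ``$\lambda$ sufficiently large'' in the statement is precisely the one inherited from Theorem~\ref{thm:result1}.

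At this level of generality the only points requiring attention are bookkeeping rather than genuine obstacles: one must keep track of the fact that the natural fractional exponent produced by the non-monotone a priori bound is $s/(2\lvert\Theta\rvert)$ rather than $s$ (harmless, since any exponent in $(0,1)$ is admissible in Theorem~\ref{thm:result2}), and one should note that Theorem~\ref{thm:result1} is uniform both in the box $\Lambda_{L,k}$ and in $z \in \CC\setminus\RR$, so that in particular the regime of large $\lvert E\rvert$ is covered (where in any case a crude Combes--Thomas / Neumann bound gives exponential decay, the spectra of the $H_{\omega,L}$ lying in a fixed bounded set). The substantive content is entirely packaged in the two quoted theorems: were one to demand a self-contained argument, the main obstacle would be re-deriving Theorem~\ref{thm:result1}, namely establishing the local a priori bound of Lemma~\ref{lemma:bounded} from the determinant estimate \eqref{eq:average_norm} under Assumption~\ref{ass:monotone}, and then upgrading it to exponential decay via the finite-volume criterion of Theorem~\ref{thm:exp_decay} --- this being the step where the absence of monotonicity forces the passage to the $2\lvert\Theta\rvert$-th root and the geometric resummation over the boundary pieces $\partial^{\rm o} W_x$.
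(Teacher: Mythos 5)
Your argument is exactly the route the paper indicates: Theorem~\ref{thm:result1} supplies a uniform-in-$\Gamma$, uniform-in-$z$ exponential bound on $\EE\{\lvert G_{\omega,\Gamma}(z;x,y)\rvert^{s/(2\lvert\Theta\rvert)}\}$, which one then feeds (with $\Gamma = \Lambda_{L,k}$, $z = E+\i\epsilon$, and exponent $\sigma = s/(2\lvert\Theta\rvert)\in(0,1)$) into Theorem~\ref{thm:result2} to conclude. Your bookkeeping remarks on the exponent and on the a.s.\ boundedness of the spectrum are also correct and are the only points one must watch; this matches the paper's own one-line derivation.
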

%
%
%
%
%
%
\section{Localization via multiscale analysis} \label{sec:msa}
The goal of this section is to explain how to obtain
localization for the discrete alloy-type model using multiscale analysis under the assumptions
\begin{itemize} 
 \item $u (0) \not = 0$, $\lvert u(k) \rvert \leq \euler^{-c\lvert k \rvert_\infty}$ for some positive constant $c$, and
 \item the measure $\mu$ has a bounded density $\rho$ with $\supp \rho \subset [-1,1]$.
\end{itemize}
For background
on multiscale analysis, we refer for example to
\cite{Kirsch-08} and \cite{Stollmann-01}. The main
idea of multiscale analysis is to show that Green's function
decay on a small scale implies Green's function decay
on a larger scale. In order to quantify decay of
the Green's function, we introduce

\begin{definition}
 Let $\gamma > 0$. A box $\Lambda_{r,n} \subseteq \mathbb{Z}^d$ is called
 $\gamma$-suitable for $H_{\omega} - E$ if
 \begin{enumerate}
  \item[(i)] $\|(H_{\omega,\Lambda_{r,n}} - E)^{-1}\| \leq \mathrm{e}^{\sqrt{r}}$.
  \item[(ii)] For $x,y\in\Lambda_{r,n}$ with $|x-y|\geq \frac{r}{10}$,
   we have
   \be \nonumber
    |G_{\omega , \Lambda_{r,n}}(E; x,y)|\leq\frac{1}{\#(\Lambda_{r,n})} 
    \mathrm{e}^{-\gamma |x-y|} .
   \ee
 \end{enumerate}
\end{definition}

Only condition (ii) is needed for multiscale analysis when a Wegner
estimate is available. However, our goal is to do without it and then 
(i) takes the role of the Wegner estimate.

The essential step in \cite{Krueger} to conclude exponential
and dynamical localization is

\begin{theorem}\label{thm:mainkmsa}
 Given $\gamma > 0$ and $E \in \mathbb{R}$,
 there is $R = R(\gamma, u, \mu) \geq 1$ such that if
 \be\label{eq:kmsaestimate}
  \mathbb{P}\{\Lambda_{r}\text{ is $\gamma$-suitable for $H_{\omega} - E$}\} 
   \leq \frac{1}{r^{4d}}
 \ee
 holds for $1 \leq r \leq R$ then it holds for all $r > 0$ with $\gamma$ replaced
 by $\gamma / 2$.
\end{theorem}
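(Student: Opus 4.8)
The plan is to run a multiscale induction in the spirit of \cite{FroehlichS-83,DreifusK-89}, the novelty being that condition~(i) in the definition of $\gamma$-suitability---the a~priori resolvent bound $\norm{(H_{\omega,\Lambda_{r,n}}-E)^{-1}}\le\euler^{\sqrt r}$---is carried along by hand, with Theorem~\ref{thm:cartan2} playing the role that a Wegner estimate plays in the classical scheme. Fix a scale factor $\alpha>1$ and a starting scale $L_0$ with $R^{1/\alpha}\le L_0\le R$, put $L_{k+1}=L_k^{\alpha}$, and prove by induction on $k$ that $\Lambda_{L_k}$ is $\gamma_k$-suitable for $H_\omega-E$ with probability at least $1-L_k^{-4d}$, where $\gamma_0=\gamma$ and $\gamma_{k+1}=\gamma_k-\varepsilon_k$ with $\sum_k\varepsilon_k<\gamma/2$; the base case of the induction is the hypothesis at scales in $[1,R]$. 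Scales $r$ not lying on the grid $\{L_k\}$ are reached afterwards by a single application of the inductive step, covering $\Lambda_r$ by boxes of the largest grid scale below $r^{1/\alpha}$. Choosing $R=R(\gamma,u,\mu)$ large enough that all the ``$r$ sufficiently large'' thresholds occurring below are met, and that the losses $\varepsilon_k$ sum to less than $\gamma/2$, then yields the conclusion with $\gamma$ replaced by $\gamma/2$.

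For the inductive step from $\ell:=L_k$ to $L:=L_{k+1}$, write the event that $\Lambda_L$ fails to be $\gamma_{k+1}$-suitable as the union of (a)~an event measured entirely in terms of the scale-$\ell$ subboxes, and (b)~the event that condition~(i) fails at scale $L$. Cover $\Lambda_L$ by translates $\Lambda_{\ell,z}$, call $\Lambda_{\ell,z}$ \emph{bad} if it is not $\gamma_k$-suitable for $H_\omega-E$, and let (a) be the event that too many bad subboxes occur (more than a fixed number, or not spread out along a given axis). On the complement of both (a) and (b) one chains the geometric resolvent identity from $x$ to $y$ through good subboxes, invoking the global bound $\euler^{\sqrt L}$ of condition~(i) only to cross the isolated bad subboxes; this gives condition~(ii) at scale $L$ with rate $\gamma_{k+1}=\gamma_k-\varepsilon_k$, the loss $\varepsilon_k=O(\ell^{-1}\log\ell)+O(L^{-1/2})$ coming from the $\ell^{d-1}$ surface factors per hop and from the single factor $\euler^{\sqrt L}$. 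Since the $\omega_j$ are i.i.d.\ and, after truncating the tail of $u$ at a radius large enough that the induced error in $V_\omega$ is negligible on the scale $\euler^{-\sqrt L}$, the restrictions $H_{\omega,\Lambda_{\ell,z}}$ on sufficiently separated subboxes are independent, a union bound using the inductive estimate for each subbox gives $\PP[(a)]\le\tfrac12 L^{-4d}$ for an appropriate range of $\alpha$.

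The heart of the matter is to bound $\PP[(b)]=\PP\{\dist(E,\sigma(H_{\omega,\Lambda_L}))<\euler^{-\sqrt L}\}$ by $\tfrac12 L^{-4d}$ via Theorem~\ref{thm:cartan2} rather than a Wegner estimate. On the complement of (a), only a controlled number of scale-$\ell$ subboxes can have spectrum within $\euler^{-\sqrt\ell}$ of $E$; freeze the randomness outside a neighbourhood $S$ of those subboxes (of size comparable to a single scale-$\ell$ box) and regard, as an analytic function $f$ of the remaining $n=O(|S|)$ variables $\{\omega_j\}_{j\in S}$, a determinant---obtained from $H_{\omega,\Lambda_L}-E$ by a Schur/Feshbach reduction onto the span of the near-$E$ eigenvectors of the subboxes meeting $S$, normalized by its supremum over the polydisc $(\DD_{2\euler})^{n}$---whose vanishing is equivalent to $E\in\sigma(H_{\omega,\Lambda_L})$. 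Using condition~(ii) of the surrounding good subboxes one checks that, on the event $\dist(E,\sigma(H_{\omega,\Lambda_L}))<\euler^{-\sqrt L}$, this normalized determinant is $\le\euler^{-s}$ with $s$ of order $\sqrt L$ up to lower order terms, so $\norm{f}_{L^\infty((\DD_{2\euler})^n)}\le1$ and the left-hand side of the Cartan estimate is exactly the set we must control. The inductive hypothesis supplies a configuration of the released variables in $S$ for which every scale-$\ell$ subbox is $\gamma_k$-suitable; for that configuration a deterministic decoupling (geometric resolvent) estimate forces $\dist(E,\sigma(H_{\omega,\Lambda_L}))\gtrsim\euler^{-\sqrt\ell}$, hence $|f|\ge\varepsilon$ there with $\log(\varepsilon^{-1})$ much smaller than $\sqrt L$. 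Theorem~\ref{thm:cartan2} then yields
\[
\frac{|\{x\in[-1,1]^n:\ |f(x)|\le\euler^{-s}\}|}{2^n}\le 30\,\euler^3\,n\,\exp\!\Bigl(-\frac{s}{\log(\varepsilon^{-1})}\Bigr),
\]
which is smaller than $\tfrac12 L^{-4d}$ once $L$ (hence $R$) is large, and averaging over the frozen coordinates gives $\PP[(b)]\le\tfrac12 L^{-4d}$.

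The step I expect to be the real obstacle is the one just described: keeping the Feshbach reduction genuinely low-dimensional and the normalizing polydisc supremum under control, so that $\log(\varepsilon^{-1})$ stays $o(\sqrt L/\log L)$ and the Cartan exponent $s/\log(\varepsilon^{-1})$ dominates $\log L$ uniformly in $E$. This forces one to know that, off an event of probability $\ll L^{-4d}$, only boundedly (or poly-logarithmically) many eigenvalues of the scale-$\ell$ subboxes lie within $\euler^{-\sqrt\ell}$ of $E$---a weak ``number of resonances'' statement that has to be extracted from the induction itself---and to balance this against the constraint on the scale factor $\alpha$ coming from~(a). Once $\alpha$ and $R$ are fixed so that both constraints hold, the remaining ingredients---the geometric resolvent bookkeeping, the truncation of $u$, the interpolation to non-grid scales, and the summability of the rate losses $\varepsilon_k$---are routine.
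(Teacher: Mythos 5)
Your architecture --- a multiscale induction in which Cartan's lemma (Theorem~\ref{thm:cartan2}) replaces the Wegner estimate and the a priori bound $\|(H_{\omega,\Lambda_{r,n}}-E)^{-1}\|\le \euler^{\sqrt r}$ is carried as condition~(i) --- is close to the paper's, and you have correctly identified that the delicate point is producing the reference configuration on which $|f|\ge\varepsilon$. But the way you propose to supply that reference point is a genuine gap, and it is exactly the difficulty the paper flags: from a probabilistic estimate at a \emph{single} scale $\ell$ one does not know how to obtain the resolvent bound at scale $L=\ell^\alpha$ with probability better than $\ell^{-4d}$.

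Concretely, you write that ``the inductive hypothesis supplies a configuration of the released variables in $S$ for which every scale-$\ell$ subbox is $\gamma_k$-suitable.'' The inductive hypothesis $\PP\{\Lambda_{\ell,z}\text{ not $\gamma_k$-suitable}\}\le \ell^{-4d}$ is a statement about the unconditional law. Conditioning on $\omega_{S^c}$ and varying only $\omega_S$, the best Fubini gives is that \emph{except} on a set of exterior configurations of measure of order $\ell^{-4d}$, some interior choice makes a given subbox good. That exceptional set has probability of order $\ell^{-4d}=L^{-4d/\alpha}$, which is \emph{larger} than the target $\tfrac12 L^{-4d}$ (since $\alpha>1$); a union bound over resonant subboxes only worsens this. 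So the single-scale inductive hypothesis cannot yield the lower bound $|f(\tilde\omega)|\ge\varepsilon$ off a small enough exceptional set, and Theorem~\ref{thm:cartan2} does not close the step.

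What is missing is the paper's Lemma~\ref{lem:kmsaprob2}, and this is precisely why Theorem~\ref{thm:mainkmsa} needs the hypothesis on the entire range $1\le r\le R$ rather than just at a base scale. One runs through a \emph{linear} ladder of scales $r_\ell=r\cdot\ell$, $\ell=1,\dots,L$, and for each $\ell$ considers the event $X_\ell$ that no modification of $\omega$ on $\Lambda_{r_{\ell-1}}$ makes $\|(H_{\tilde\omega,r_\ell}-E)^{-1}\|\le A$. Because the modification exhausts $\Lambda_{r_{\ell-1}}$ and, after truncating $u$, $H_{\omega,\Lambda_{r_\ell}}$ depends only on $\omega_n$ with $n$ in a thin enlargement of $\Lambda_{r_\ell}$, the event $X_\ell$ is measurable with respect to the annulus between scales $r_{\ell-1}$ and $r_\ell$; hence the $X_\ell$ are \emph{independent}, and each has probability at most $\varepsilon$ by the hypothesis at scale $r_\ell$ (take $\tilde\omega=\omega$). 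Consequently $\PP\{\bigcap_\ell X_\ell\}\le\varepsilon^L$, so off a set of that very small probability one finds some $\tilde r\le r^2$ and a configuration $\tilde\omega$, differing from $\omega$ only inside $\Lambda_{\tilde r}$ around the bad site, with the resolvent bound. Patching over the finitely many bad sites (Theorem~9.4 of \cite{Krueger}) produces the global reference configuration with $\log\varepsilon^{-1}=O(r^2\log A)\ll\sqrt R$, which is what feeds Theorem~\ref{thm:cartan2}. Replacing your exponential ladder $L_{k+1}=L_k^\alpha$ and your purely base-case use of $[1,R]$ by this linear ladder and the independence-across-annuli argument is the structural change needed before the proof closes.
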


This theorem implies localization at large coupling:
\begin{itemize}
 \item For large coupling, the Combes--Thomas estimate and a probabilistic
  computation imply the assumptions of Theorem~\ref{thm:mainkmsa}. This can be found
  in Appendix~A of \cite{Krueger}.
 \item The conclusions of this theorem imply localization.
  Exponential localization is proven  in Section~7 of \cite{BourgainK-05}
  and dynamical localization in Sections 16 to 18 of \cite{Krueger}.
\end{itemize}
\begin{theorem}[\cite{Krueger}]
 Let $\lambda > 0$ be large enough. Then, for almost all $\omega \in \Omega$, $H_\omega$ exhibits dynamical (and spectral) localization.
\end{theorem}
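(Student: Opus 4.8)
The plan is to derive the final localization theorem as a direct consequence of the machinery already assembled in Section~\ref{sec:msa}, namely Theorem~\ref{thm:mainkmsa} together with the two bullet points following it. First I would fix $\gamma > 0$ (any positive value suffices) and recall that, by the Combes--Thomas estimate, for $E$ in a fixed bounded energy window and $\lambda$ large the resolvent $(H_{\omega,\Lambda_r} - E)^{-1}$ has norm controlled by a deterministic quantity that improves as $\lambda$ grows; this handles condition (i) in the definition of $\gamma$-suitability essentially for free at large coupling. For condition (ii), the key point is that at large disorder the diagonal of $V_{\omega,\Lambda_r}$ dominates, so a finite-rank perturbation / geometric-resolvent expansion shows that the off-diagonal Green's function entries on the \emph{initial} scales $1 \le r \le R$ decay like $\euler^{-\gamma|x-y|}$ with overwhelming probability. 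A crude probabilistic estimate (a union bound over the finitely many sites in $\Lambda_r$, using that $\rho$ is bounded and compactly supported so the event that some $\omega_k$ is atypically small has probability $O(\lambda^{-1})$ per site) then gives $\PP\{\Lambda_r \text{ not }\gamma\text{-suitable}\} \le r^{-4d}$ for all $1 \le r \le R$, once $\lambda$ is chosen large enough depending on $R = R(\gamma,u,\mu)$. This is exactly the input \eqref{eq:kmsaestimate} required by Theorem~\ref{thm:mainkmsa}.

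Next I would invoke Theorem~\ref{thm:mainkmsa} to propagate this estimate to all scales $r > 0$ with $\gamma$ replaced by $\gamma/2$: that is,
\[
\PP\{\Lambda_r \text{ is }(\gamma/2)\text{-suitable for }H_\omega - E\} \le r^{-4d} \quad \text{for all } r > 0.
\]
At this point the conclusion of the multiscale analysis is available at every scale, and the remaining work is purely to convert it into the two stated forms of localization. For \emph{exponential} (spectral) localization I would cite the argument of Section~7 of \cite{BourgainK-05}: the scale-by-scale decay of the Green's function, combined with a Borel--Cantelli / Simon--Wolff-type argument that is robust to the lack of monotonicity (this is the point where the exponential-decay hypothesis on $u$, rather than sign-definiteness, is what is actually used), yields pure point spectrum with exponentially decaying eigenfunctions in the relevant energy region. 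For \emph{dynamical} localization I would cite Sections~16--18 of \cite{Krueger}, which upgrade the same multiscale output to the stated sup-in-$t$, polynomially weighted $\ell^2$ bound on $\euler^{-\i tH}\chi_I(H)\delta_x$. Since for $\lambda$ large the Combes--Thomas input is uniform over a window that eventually covers the whole (almost sure) spectrum $\Sigma$, one obtains localization on all of $\RR$, which is the statement of the theorem.

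\textbf{Main obstacle.} The genuinely delicate step is \emph{not} the large-coupling initial-scale estimate, which is standard, but rather making sure that the non-monotone, correlated structure of $V_\omega(x) = \sum_k \omega_k u(x-k)$ does not break the deduction of \emph{spectral} and \emph{dynamical} localization from the multiscale conclusion. In the standard Anderson model this deduction uses either the independence of the potential values $\{V_\omega(x)\}$ or monotonicity (Simon--Wolff), and here \emph{both} fail: the $V_\omega(x)$ are dependent, and $H_\omega$ depends non-monotonically on the i.i.d.\ parameters $\omega_k$. The resolution — already carried out in \cite{Krueger} and, in spirit, in \cite{ElgartTV-11} — is that what really matters is only that potential values at sites farther apart than some fixed $\ell_0$ are independent (true here by the exponential decay of $u$, after a harmless truncation argument, or via the eigenfunction-correlation / Minami-type inputs used there); with that observation in hand the classical arguments go through essentially verbatim. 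So in the write-up I would emphasize that the proof is ``assemble the pieces'' — Combes--Thomas $\Rightarrow$ \eqref{eq:kmsaestimate} $\Rightarrow$ Theorem~\ref{thm:mainkmsa} $\Rightarrow$ \cite{BourgainK-05,Krueger} — and flag that the only conceptual novelty lies in verifying that the final conversion step tolerates the dependence and non-monotonicity, which is precisely the content of the cited sections of \cite{Krueger}.
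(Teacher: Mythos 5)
Your proposal matches the paper's own argument exactly: verify the initial-scale input \eqref{eq:kmsaestimate} via Combes--Thomas and a probabilistic computation at large coupling (Appendix~A of \cite{Krueger}), invoke Theorem~\ref{thm:mainkmsa} to propagate the estimate to all scales, and then cite Section~7 of \cite{BourgainK-05} for exponential localization and Sections~16--18 of \cite{Krueger} for dynamical localization. The only small imprecision is in your ``main obstacle'' discussion: for exponentially decaying $u$ the potential values at distinct sites are \emph{never} exactly independent beyond some fixed $\ell_0$ (that holds only for compactly supported $u$), which is precisely why Kr\"uger's multiscale scheme works directly with exponentially decaying correlations rather than reducing to a finite-range-independence argument.
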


We will now discuss how to prove Theorem~\ref{thm:mainkmsa}.
The main difficulty is to conclude from a probabilistic estimate on
a small scale, i.e.\ \eqref{eq:kmsaestimate} for some $r \geq 1$,
the estimate on the resolvent with a better probability, i.e.
\be\label{eq:kmsaestimateR}
 \mathbb{P}\{\|(H_{\omega,R} - E)^{-1}\| > \mathrm{e}^{\sqrt{R}}\} < \frac{1}{R^{4d}}.
\ee
It turns out, that we do not know how to do this. Instead we assume
that \eqref{eq:kmsaestimate} holds for all values of $r$ in a range $[r_1, r_2]$
with $r_2 = (r_1)^3$ and use this to conclude \eqref{eq:kmsaestimateR}.

Just using a single value of $r$ one can conclude that there exists
an event $\mathcal{B}$ with $\mathbb{P}\{\mathcal{B}\} \leq 1 / (2 R^{4d})$,
such that for $\omega \notin\mathcal{B}$, there are
$m_1, \dots, m_{L}$ with $L$ uniformly bounded in $R$
such that for
\be \nonumber
 \Xi = \Lambda_{R} \setminus \bigcup_{\ell=1}^{L} \Lambda_{r, m_{\ell}}
\ee
we have
\be\label{eq:kmsaestiTheta}
 \|(H_{\omega,\Xi} - E)^{-1}\| \leq \mathrm{e}^{3 \sqrt{r}}.
\ee
This follows from the probabilistic estimates in
Section~5 and the estimates on the resolvent in Section~9
of \cite{Krueger}.

Then one uses the assumption on the whole range $[r_1, r_2]$ to conclude
that there exists a choice $\tilde{\omega}$ which agrees with $\omega$
except near the $m_{\ell}$ such that
\be \nonumber
 \|(H_{\tilde{\omega}, R} - E)^{-1} \| \leq \mathrm{e}^{10 \sqrt{r_2}}.
\ee
We illustrate this in the next subsection.
See  Section~12 of \cite{Krueger} or Section~2 in \cite{Bourgain-09}
for the entire implementation.

In Subsection~\ref{sec:appcartan}, we illustrate how to use
Cartan's lemma to conclude from this information that
the estimate \eqref{eq:kmsaestimateR} holds. 
For the entire analysis, see Sections~13 to 15 in \cite{Krueger}.

Then one can conclude the decay of the Green function as in
the usual multiscale analysis finishing the proof
of Theorem~\ref{thm:mainkmsa}.

\subsection{Probabilistic estimates}

In this section, we illustrate a new form of probabilistic estimate
not necessary in the usual versions of multiscale analysis.
We will assume for $r_{\ell} = r \cdot \ell$ with $\ell = 1, \dots, L$
that
\be \nonumber
 \mathbb{P} \{\|(H_{\omega, r_{\ell}} - E)^{-1}\| > A\} \leq \varepsilon
\ee
and that $\supp(u) \subseteq \Lambda_{r-1}$. The main conclusion
will be

\begin{lemma}\label{lem:kmsaprob2}
 There exists an event $\mathcal{B}$ with
 \be \nonumber
  \mathbb{P}\{\mathcal{B}\} \leq \varepsilon^L
 \ee
 such that for $\omega\notin\mathcal{B}$, there exists $\ell$
 such that there exists $\tilde{\omega}$ with
 \be \nonumber
 \omega_n = \tilde{\omega}_n,\quad n \notin \Lambda_{r_{\ell-1}}
 \ee
 we have
 \be \nonumber
  \|(H_{\tilde{\omega}, r_{\ell}} - E)^{-1} \| \leq A .
 \ee
\end{lemma}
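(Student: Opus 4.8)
The plan is to exploit the nested structure of the boxes $\Lambda_{r_{\ell}}$ together with the independence of the $\omega_n$ across annuli of width $\geq r$. The key observation is that the event $\{\|(H_{\omega,r_{\ell}}-E)^{-1}\| > A\}$ depends only on the values $\omega_n$ for $n$ in an $r$-neighborhood of $\Lambda_{r_{\ell}}$, hence (since $\supp u \subseteq \Lambda_{r-1}$ and $r_{\ell} = r\cdot\ell$) only on the $\omega_n$ with $n \in \Lambda_{r_{\ell}+r-1} \subseteq \Lambda_{r_{\ell+1}}$. I would first partition $\ZZ^d$ into the annuli $\mathcal{A}_0 = \Lambda_{r_1}$ and $\mathcal{A}_{\ell} = \Lambda_{r_{\ell+1}} \setminus \Lambda_{r_{\ell}}$ for $\ell \geq 1$, and record the potential contributions of these annuli as independent blocks of random variables.

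Next I would set up the following dichotomy. Define, for each $\ell = 1,\dots,L$, the ``bad'' event $\mathcal{B}_{\ell} = \{\|(H_{\omega,r_{\ell}}-E)^{-1}\| > A\}$, which by the hypothesis has $\PP\{\mathcal{B}_{\ell}\} \leq \varepsilon$, and which (by the locality just discussed) is measurable with respect to $\sigma(\omega_n : n \in \Lambda_{r_{\ell}+r-1})$. Set $\mathcal{B} = \bigcap_{\ell=1}^{L} \mathcal{B}_{\ell}$. The crucial point is that the events $\mathcal{B}_1,\dots,\mathcal{B}_L$ are \emph{not} independent, but $\mathcal{B}_{\ell}$ depends on $\omega$ only through the blocks $\mathcal{A}_0,\dots,\mathcal{A}_{\ell-1}$. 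Conditioning successively on $\mathcal{A}_0$, then $\mathcal{A}_1$, and so on, one sees that $\PP\{\mathcal{B}_{\ell} \mid \mathcal{A}_0,\dots,\mathcal{A}_{\ell-2}\} \leq \varepsilon$ (this conditional probability is still bounded by $\varepsilon$ because, after fixing the configuration outside $\mathcal{A}_{\ell-1}$, the event $\mathcal{B}_{\ell}$ is a function of the single remaining block $\mathcal{A}_{\ell-1}$, and integrating that block back gives $\leq\varepsilon$ by hypothesis together with Fubini). Iterating this telescoping estimate yields $\PP\{\mathcal{B}\} = \PP\{\bigcap_{\ell=1}^L \mathcal{B}_{\ell}\} \leq \varepsilon^L$.

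Finally, for $\omega \notin \mathcal{B}$ there is some $\ell$ with $\|(H_{\omega,r_{\ell}}-E)^{-1}\| \leq A$; I would then produce $\tilde\omega$ by keeping $\tilde\omega_n = \omega_n$ for $n \notin \Lambda_{r_{\ell-1}}$ and resetting $\tilde\omega_n$ (say to a fixed reference value) for $n \in \Lambda_{r_{\ell-1}}$ — but one must check this does not destroy the resolvent bound. The cleanest route is to note that, because $\supp u \subseteq \Lambda_{r-1}$, the operator $H_{\omega,r_{\ell}}$ restricted to the annulus $\Lambda_{r_{\ell}}\setminus\Lambda_{r_{\ell-1}+r-1}$ (where the resolvent bound is actually being used) only sees the $\omega_n$ with $n\notin\Lambda_{r_{\ell-1}}$; one takes $\tilde\omega$ to coincide with $\omega$ outside $\Lambda_{r_{\ell-1}}$, and the choice of $\tilde\omega$ inside is irrelevant for the operator norm on the relevant region. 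The main obstacle I anticipate is precisely this last bookkeeping step — making sure the definition of $r_{\ell}$ versus $r_{\ell-1}$ leaves enough of a buffer (of width $\geq r$, the range of $u$) so that resetting $\omega$ on $\Lambda_{r_{\ell-1}}$ genuinely does not affect $\|(H_{\tilde\omega,r_{\ell}}-E)^{-1}\|$, and in getting the conditional-independence telescoping in the previous paragraph stated correctly given that the blocks defining consecutive $\mathcal{B}_{\ell}$ overlap by the width-$(r-1)$ collar.
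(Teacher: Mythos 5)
Your probability bound is where the argument fails. The events $\mathcal{B}_\ell = \{\|(H_{\omega,r_\ell}-E)^{-1}\|>A\}$ depend on nested, overlapping blocks of $\omega$, and the hypothesis gives only the \emph{unconditional} bound $\mathbb{P}\{\mathcal{B}_\ell\}\leq\varepsilon$. In the telescoping step you need $\mathbb{P}\{\mathcal{B}_\ell\mid\mathcal{A}_0,\dots,\mathcal{A}_{\ell-2}\}\leq\varepsilon$ for \emph{every} realization of the inner blocks, but ``integrating $\mathcal{A}_{\ell-1}$ back'' produces exactly this conditional probability, whose worst-case value over the inner configurations is not controlled by the hypothesis; only its average over the inner blocks equals $\mathbb{P}\{\mathcal{B}_\ell\}\leq\varepsilon$. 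In fact the $\mathcal{B}_\ell$ are typically positively correlated (a single bad inner configuration tends to spoil the resolvent on every nested box at once), so $\mathbb{P}\{\bigcap_\ell\mathcal{B}_\ell\}$ can be of order $\varepsilon$ rather than $\varepsilon^L$.

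The paper's proof gets around this by modifying the bad event so that the inner region genuinely drops out. Define $X_\ell$ as the set of $\omega$ such that for \emph{every} $\tilde\omega$ agreeing with $\omega$ outside $\Lambda_{r_{\ell-1}}$ one has $\|(H_{\tilde\omega,r_\ell}-E)^{-1}\|>A$. The universal quantifier over the inner configuration is the decisive idea: it makes $X_\ell$ depend on $\omega_n$ only for $n\notin\Lambda_{r_{\ell-1}}$ (an annulus, not a ball), while the inclusion $X_\ell\subseteq\mathcal{B}_\ell$, obtained by taking $\tilde\omega=\omega$, still yields $\mathbb{P}\{X_\ell\}\leq\varepsilon$. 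The $X_\ell$ are thus functions of (essentially) disjoint annuli, hence independent across $\ell$, giving $\mathbb{P}\{\bigcap_\ell X_\ell\}\leq\varepsilon^L$; and for $\omega\notin\mathcal{B}:=\bigcap_\ell X_\ell$ the negation of $X_\ell$ for some $\ell$ is exactly the existential statement the lemma asserts. Your closing ``resetting'' step is also unnecessary and rests on a false locality claim: if $\|(H_{\omega,r_\ell}-E)^{-1}\|\leq A$ one should simply take $\tilde\omega=\omega$, and resetting $\omega$ on $\Lambda_{r_{\ell-1}}$ certainly does alter $H_{\tilde\omega,r_\ell}$ and its resolvent norm, because the potential on the inner region propagates through the entire restricted operator.
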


\begin{proof}
 Denote by $X_\ell$ the set of all $\omega$ such that 
 for all $\tilde{\omega}$ with
 \[
  \omega_n = \tilde{\omega}_n,\quad n \notin \Lambda_{r_{\ell-1}}
 \]
 we have
 \[
  \|(H_{\tilde{\omega}, r_{\ell}} - E)^{-1}\| > A.
 \]
 By assumption, we have that $\mathbb{P}\{X_{\ell}\} \leq \varepsilon$
 and one can check that $X_{\ell}$ and $X_j$ are independent
 events for $j \neq \ell$.

 Take $\mathcal{B} = \bigcup_{\ell=1}^{L} X_{\ell}$.
 Independence implies $\mathbb{P}\{\mathcal{B}\} \leq \varepsilon^L$.
 The claim follows by construction.
\end{proof}

This lemma implies that given $m \in \Lambda_R$, one can always
find some $\tilde{r} \leq r^2$ and $\tilde{\omega}$ the resolvent
estimate holds for the cube $\Lambda_{\tilde{r}, m}$. 
Using Theorem~9.4. in \cite{Krueger}, one can then
extend \eqref{eq:kmsaestiTheta} to the estimate
\be \nonumber
 \| (H_{\tilde{\omega}, R} - E)^{-1} \| \leq \mathrm{e}^{r}
\ee
for some $\tilde{\omega}$ for which $\tilde{\omega}_n = \omega_n$
whenever $|n-m_{\ell}| \geq r^2$.


\subsection{An application of Cartan's lemma}\label{sec:appcartan}
In this section, we illustrate the application of Cartan's Lemma
with a simple application to random Schr\"odinger operators.
The main goal is to give a simplified account of what happens
in Sections~11 to 14 of \cite{Krueger}. 

Suppose we want to establish a bound on $\|(H_{\omega, L} - E)^{-1}\|$
and that we already know
\begin{enumerate}
 \item For each $\omega$, there exists $n \in \Lambda_L$ such that
  \[
   \|(H_{\omega , \Lambda_L \setminus \Lambda_{r,n}} - E)^{-1}\| \leq A.
  \]
 \item For each $n$ and $\omega$, there exists $\tilde{\omega}$  with
  \[
   \omega_m = \tilde{\omega}_m,\quad m\in \Lambda_L\setminus \{n\}
  \]
  and
  \[
   \|(H_{\tilde{\omega} , L} - E)^{-1}\| \leq A.
  \]
\end{enumerate}
This is a simplification: First, one must allow
for more exceptional sites. Second, such a statement can
allow hold in probabilistic terms in a multi-scale scheme.
\cite{Bourgain-09} was the first paper to propose a scheme to
check (2).

For simplicity, we will also assume that
\begin{enumerate}
 \item[(3)] $\supp(u)$ is contained in $\Lambda_{r}$.
\end{enumerate}
The analysis of $u$ exponential decaying requires a
perturbative analysis, which we avoid here for the sake
of exposition.

\begin{proposition}
 Assume (1), (2), and (3). Then
 \be \nonumber
  \mathbb{P}\{\|(H_{\omega,L} - E)^{-1}\| \geq \mathrm{e}^{s}\}
   \leq \mathrm{e}^{-\frac{\delta s}{\log(A) \cdot r^d}}
 \ee
 for some small constant $\delta > 0$.
\end{proposition}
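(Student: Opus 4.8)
The plan is to fix a site $m \in \Lambda_L$ and analyze, for each configuration $\omega$, the resolvent as an analytic function of the single random variable $\omega_m$ sitting at that site, then apply the Cartan estimate (Theorem~\ref{thm:cartan1}) in that one variable and finally take a union over the finitely many possible exceptional sites. Concretely, by assumption (1) there is, for every $\omega$, some $n=n(\omega)\in\Lambda_L$ so that the resolvent on the ``punctured'' cube $\Lambda_L\setminus\Lambda_{r,n}$ is bounded by $A$. By a Schur complement (Krein/Feshbach) formula with respect to the decomposition $\Lambda_L = \Lambda_{r,n} \sqcup (\Lambda_L\setminus\Lambda_{r,n})$, the resolvent $(H_{\omega,L}-E)^{-1}$ exists and is controlled precisely when the finite matrix $M_\omega := (H_{\omega,\Lambda_{r,n}}-E) - \text{(bounded coupling term)}$ is invertible, and $\|(H_{\omega,L}-E)^{-1}\|$ is comparable to $\|M_\omega^{-1}\|$ times factors bounded in terms of $A$. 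Because $\supp u \subseteq \Lambda_r$ (assumption (3)), varying $\omega_m$ only changes entries of $H$ located in $\Lambda_{r,m}$; when $m\in\Lambda_{r,n}$ this affects $M_\omega$ in a controlled finite block, and when $m\notin\Lambda_{r,n}$ it does not affect $M_\omega$ at all, so in the latter case the dependence on $\omega_m$ is even simpler. The key point is that the function
\[
 z \longmapsto f_{n,m,\omega}(z) := \frac{\det\big(M_{\omega}\big|_{\omega_m = z}\big)}{\text{normalization}}
\]
is a polynomial, hence analytic, in $z$, with degree at most $|\Lambda_{r,n}| \leq C r^d$.

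Next I would renormalize so that this analytic function is bounded by $1$ on a disc of radius $2\euler$ (using the a priori bound on $\|H_\omega\|$ coming from bounded support of $\rho$ and the Laplacian bound) and is at least $\varepsilon_0$ in modulus at one point — the point $z$ equal to the $\tilde\omega_m$ supplied by assumption (2), at which $\|(H_{\tilde\omega,L}-E)^{-1}\| \leq A$, which forces $|f(\tilde\omega_m)| \gtrsim A^{-(\deg f)} \geq A^{-Cr^d}$, i.e.\ $\log(\varepsilon_0^{-1}) \lesssim r^d \log A$. Cartan's estimate (Theorem~\ref{thm:cartan1}, after an affine rescaling of the interval $\supp\rho\subseteq[-1,1]$ into $[-1,1]$) then gives
\[
 \big|\{\, t \in [-1,1] : |f_{n,m,\omega}(t)| \leq \euler^{-s'}\,\}\big|
  \leq 30\,\euler^3 \exp\!\Big(-\frac{s'}{\log(\varepsilon_0^{-1})}\Big)
  \leq 30\,\euler^3 \exp\!\Big(-\frac{c\, s'}{r^d \log A}\Big).
\]
Since $\|(H_{\omega,L}-E)^{-1}\| \geq \euler^s$ entails (through the Schur-complement identity and $\|M_\omega^{-1}\| \leq \|M_\omega\|^{\deg f - 1}/|\det M_\omega|$ together with the a priori norm bounds) that $|f_{n,m,\omega}(\omega_m)| \leq \euler^{-s'}$ for $s' \gtrsim s - Cr^d\log A$, we get, for the conditional probability given all coordinates except $\omega_m$,
\[
 \PP\big\{\,\|(H_{\omega,L}-E)^{-1}\| \geq \euler^s \ \big|\ (\omega_k)_{k\neq m}\,\big\}
  \leq \|\rho\|_\infty \cdot 30\,\euler^3 \exp\!\Big(-\frac{\delta' s}{r^d \log A}\Big),
\]
using that $\mu$ has a bounded density $\rho$. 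Integrating out the remaining variables and then taking a union bound over the at most $|\Lambda_L| = (2L+1)^d$ choices of the exceptional site $n$ (hence of $m$) costs only a polynomial-in-$L$ factor, which is absorbed into the exponential at the price of shrinking $\delta$, yielding the claimed bound $\PP\{\|(H_{\omega,L}-E)^{-1}\| \geq \euler^s\} \leq \euler^{-\delta s/(\log(A)\, r^d)}$.

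The main obstacle I expect is bookkeeping in the Schur-complement step: one must verify that varying the single coordinate $\omega_m$ really does leave the ``background'' matrix $A$-part analytically and boundedly controlled, that the determinant $\det M_\omega$ is genuinely a polynomial in $\omega_m$ of degree $O(r^d)$ (not larger), and that the reductions ``$\|(H_{\omega,L}-E)^{-1}\|$ large $\Rightarrow$ $|f|$ small'' and its partial converse hold with the right quantitative constants so that the loss in $s$ is only additive of order $r^d\log A$ rather than multiplicative. A secondary subtlety is the union bound over exceptional sites: in the honest multiscale version (cf.\ assumption~(2) being probabilistic and allowing several exceptional sites) this has to be combined with the product structure of $\PP$ and an iteration over scales, but in the present simplified setting with a single exceptional site and $\supp u \subseteq \Lambda_r$ it reduces to the elementary union bound sketched above. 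The whole argument is the one-variable prototype of Sections~11--14 of \cite{Krueger}, with Theorem~\ref{thm:cartan2} replacing Theorem~\ref{thm:cartan1} when many variables must be handled simultaneously.
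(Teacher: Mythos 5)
Your overall strategy is the one the paper uses: pick the exceptional site from assumption (1), perform a Schur complement with respect to the split $\Lambda_L=\tilde\Xi\sqcup\tilde\Theta$ with $\tilde\Xi=\Lambda_{r,n}\cap\Lambda_L$, define $f$ as the determinant of the Schur complement, bound $|f|$ from above by $CA^{R}$ via (1), from below at the good point $\tilde\omega_n$ by $A^{-R}$ via (2), apply the one-variable Cartan estimate, and finish with a union bound over $n$ and $R\leq (3r)^d$. So the plan is correct and matches the paper.

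There is, however, a bookkeeping error in your reduction to one variable that you should repair, because it currently invalidates the claim that $f$ is a polynomial of degree $O(r^d)$. You introduce a free site $m$ and assert that varying $\omega_m$ with $m\notin\Lambda_{r,n}$ ``does not affect $M_\omega$ at all''; this is false. The matrix $M_\omega=H_{\omega,\tilde\Xi}-E-\Gamma\,(H_{\omega,\tilde\Theta}-E)^{-1}\,\Gamma^\ast$ depends on $\omega_m$ through $(H_{\omega,\tilde\Theta}-E)^{-1}$ whenever $\Lambda_{r,m}\cap\tilde\Theta\neq\emptyset$, which happens for essentially every $m\neq n$; in those directions the dependence is rational, not polynomial, and the degree is not controlled by $|\tilde\Xi|$. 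Likewise, for $m\in\Lambda_{r,n}\setminus\{n\}$ the cube $\Lambda_{r,m}$ spills into $\tilde\Theta$, so the same problem appears. The only choice that makes the coupling term \emph{constant} in the varied coordinate is $m=n$, since then $\Lambda_{r,m}=\Lambda_{r,n}$ is disjoint from $\tilde\Theta$ (this uses (3)); this is exactly why the paper varies $\omega_n$ and nothing else, and it is also the only direction in which assumption (2) supplies a good point $\tilde\omega_n$. With $m=n$ fixed, $H_{\omega,\tilde\Xi}$ is affine in $\omega_n$ and $f$ is a genuine polynomial of degree at most $R=\#\tilde\Xi\leq (3r)^d$, and the rest of your argument (Cartan bound, conditional probability via $\|\rho\|_\infty$, union bound over $n$ costing only a polynomial-in-$L$ factor absorbed into $\delta$) goes through as in the paper.
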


The proof of this proposition will be split into two parts. First,
we consider the case when the single-site potential $u$ is equal
to $\delta_0$. Second, we discuss what needs to be modified
for general $u$.
We can fix $n$. Then the claim follows by summing over
the possible number of choices of $n$ (less than $(3L)^d$ many).

Assume now that $u = \delta_0$, then we can write $H_{\omega,L} - E$ as
\be \nonumber
 H_{\omega,L} - E = \begin{pmatrix} \omega_n - E & \Gamma \\ \Gamma^{\ast}
  & H_{\omega , \Lambda_L\setminus\{n\}} - E \end{pmatrix}.
\ee
An application of the Schur-complement formula
shows that
\be \nonumber
 \|(H_{\omega,L} - E)^{-1}\| \leq C \cdot A^2
  |\omega_n - E - \Gamma (H_{\omega , \Lambda_{L}\setminus\{n\}} - E)^{-1} \Gamma^{\ast}|^{-1},
\ee
for some constant $C$.
Since the dependence in $\omega_n$ is linear, it is easy to see that
the set of $\omega_n$, where the right hand side is small, is small. Hence,
we are done.

It is noteworthy that this argument did not use (2). Of course,
(2) and not even (1) is necessary since the standard proof
of Wegner's estimate works.

Let us now discuss what happens when $u$ is not $\delta_0$, but is
supported on finitely many points. One could apply the Schur-complement
formula as before, but the object one then obtains has a too non-trivial
dependence on $\omega_n$ to be useful. Define
\be \nonumber
 \tilde\Xi = \Lambda_{r,n} \cap \Lambda_L,\quad \tilde\Theta = \Lambda_L \setminus \tilde\Xi.
\ee
Then $H_{\omega}^{\tilde\Theta} -E$ is independent of $\omega_n$. So an application
of the Schur-complement formula shows that
\be \nonumber
 \|(H_{\omega, L} - E)^{-1}\|\leq C A^2
  \|(H_{\omega , \tilde\Xi} -  E - \Gamma (H_{\omega , \tilde\Theta} - E)^{-1} \Gamma^{\ast} )^{-1} \|.
\ee
Fix $\{\omega_m\}_{m \neq n}$ and
define a function of single-variable $\omega_n$ by
\be \nonumber
 f(\omega_n) = \det(H_{\omega , \tilde\Xi} -  E - \Gamma (H_{\omega , \tilde\Theta} - E)^{-1} \Gamma^{\ast}).
\ee
Define $R = \#\tilde\Xi$. By (1), we obtain that
\be \nonumber
 |f(x)| \leq C A^{R}
\ee
and by (2) that
\be \nonumber
 |f(\tilde{\omega}_n)| \geq \frac{1}{A^R}.
\ee
Hence, we can apply Cartan's Lemma to obtain
\be \nonumber
 |\{\omega_n: \quad |f(\omega_n)|\leq \mathrm{e}^{-s}\}|
  \leq C \exp\left(-\frac{s}{R \log(A)}\right).
\ee
The claim follows by $R \leq (3r)^d$.
%

%
\newcommand{\etalchar}[1]{$^{#1}$}
\providecommand{\bysame}{\leavevmode\hbox to3em{\hrulefill}\thinspace}
\providecommand{\MR}{\relax\ifhmode\unskip\space\fi MR }
\providecommand{\MRhref}[2]{%
  \href{http://www.ams.org/mathscinet-getitem?mr=#1}{#2}
}
\providecommand{\href}[2]{#2}

\end{document}